\newtheorem{theorem}{Theorem}
\newtheorem{lemma}[theorem]{Lemma}
\newtheorem{proposition}[theorem]{Proposition}
\newtheorem{claim}[theorem]{Claim}
\newtheorem{definition}[theorem]{Definition}
\newcommand{\abs}[1]{\left\vert#1\right\vert}
\newcommand{\set}[1]{\left\{#1\right\}}
\newcommand{\eps}{\varepsilon}
\newcommand{\CommentS}[1]{}
\newcommand{\CSP}{\#{\rm CSP}}
\title{FPTAS for Counting Monotone CNF  \footnote{This work was performed when the first author was an intern at Microsoft Research Asia}}
\author{ Jingcheng Liu
        \thanks{Shanghai Jiao Tong University. {\tt liuexp@gmail.com}}
        \and
        Pinyan Lu\thanks{Microsoft Research. {\tt pinyanl@microsoft.com}}
}
\date{}
\begin{document}

\maketitle

\begin{abstract}
A monotone CNF formula is a Boolean formula in conjunctive normal form where each variable appears positively.
We design a deterministic fully polynomial-time approximation scheme (FPTAS) for counting the number of satisfying assignments for a given monotone CNF formula when each variable appears in at most $5$ clauses.
Equivalently, this is also an FPTAS for counting set covers where each set contains at most $5$ elements.
If we allow variables to appear in a maximum of $6$ clauses (or sets to contain $6$ elements), it is NP-hard to approximate it.
Thus, this gives a complete understanding of the approximability of counting for monotone CNF formulas.
It is also an important step towards a complete characterization of the approximability for all bounded degree Boolean \#CSP problems.
In addition, we study the hypergraph matching problem, which arises naturally towards a complete classification of bounded degree Boolean \#CSP problems,
and show an FPTAS for counting 3D matchings of hypergraphs with maximum degree $4$ .

Our main technique is correlation decay, a powerful tool to design deterministic FPTAS for counting problems defined by local constraints among a number of variables.
All previous uses of this design technique fall into two categories: each constraint involves at most two variables, such as independent set, coloring, and spin systems in general; or each variable appears in at most two constraints, such as matching, edge cover, and holant problem in general.
The CNF problems studied here have more complicated structures than these problems and require new design and proof techniques.
As it turns out, the technique we developed for the CNF problem also works for the hypergraph matching problem.
We believe that it may also find applications in other CSP or more general counting problems.

\end{abstract}

\setcounter{page}{0}\thispagestyle{empty}
\newpage

\section{Introduction}
We study the complexity of approximately counting the number of satisfying assignments of a given Boolean formula.
For any given parameter $\epsilon>0$, the algorithm outputs a number $\hat{N}$ such that $(1-\epsilon) N\leq \hat{N} \leq (1+\epsilon) N$, where $N$ is the accurate number of solutions for the given formula.
We also require that the running time of the algorithm be bounded by $poly(n,1/\epsilon)$, where $n$ is the size of the formula.
This is called a fully polynomial-time approximation scheme (FPTAS). The randomized relaxation of FPTAS is called fully polynomial-time randomized approximation scheme (FPRAS), which uses random bits in the algorithm and requires that the final output be within the range $[(1-\epsilon) N, (1+\epsilon) N]$ with high probability.
Many interesting combinatorial problems can be described by Boolean formulas. However, for many of them such as SAT, {\sc Exact-One}, {\sc Not-All-Equal} among others, it is already NP-hard to determine whether a satisfying assignment exists. For these problems, we cannot get a polynomial time algorithm to count or approximately count (since we cannot relatively  approximate zero) the number of solutions unless NP=P. Therefore, we mainly focus on those problems for which there is always a satisfying assignment or we have a polynomial time algorithm to determine that. One famous such example is formulas in disjunctive normal form (DNF). It is easy to determine if a DNF formula is satisfiable or not. Basically, a DNF formula is always satisfiable except in trivial cases where each clause contains a contradiction ($x$ and $\bar{x}$). There is an FPRAS for counting the number of satisfying assignments of any given DNF formula~\cite{KarpL83,KarpLM89}. It is an important open question to derandomize the algorithm~\cite{Trevisan04,gopalan2012dnf}.

Conjunctive normal form (CNF) is more widely applicable than DNF. But the decision version is already NP-hard.
One interesting sub-family of CNF is monotone CNF where each variable appears positively in clauses.
For monotone CNF, the decision version is trivial: we can simply set all variables as True to satisfy the formula.
Therefore, it is an interesting problem to count the number of solutions.
Moreover, monotone CNF is already quite expressive as any monotone Boolean function can be expressed as a monotone CNF.
It also contains numerous interesting combinatorial problems as special cases, so long as the combinatorial problem is defined by local constraints and the feasible sets are either downward closed or upward closed, which is typical for many combinatorial problems.
For example, vertex cover (or complementary independent set) problem can be viewed as monotone 2CNF; edge cover can be viewed as read twice monotone CNF where each variable appears no more than  twice.  Indeed, monotone CNF is exactly the same as set cover problem, where variables are sets and clauses are elements.

In a previous work, we design an FPTAS for counting edge covers for any given graphs~\cite{counting-edge-cover}. For counting vertex covers (or independent sets), there is an FPTAS if the maximum degree is 5~\cite{Weitz06} and it is NP-hard even for 6-regular graphs~\cite{Sly10}. As counting independent sets is a special case of counting monotone CNF, we have the following hardness result.

\begin{proposition}
There is no FPRAS (or FPTAS) to count monotone CNF if a variable can appear in $6$ clauses unless NP=RP.
\end{proposition}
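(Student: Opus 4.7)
The plan is to derive this hardness as an immediate consequence of Sly's NP-hardness result for approximating the number of independent sets on $6$-regular graphs~\cite{Sly10}, via the standard bijection between independent sets, vertex covers, and satisfying assignments of a monotone $2$-CNF.

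First I would recall that for any graph $G=(V,E)$, the map $S \mapsto V\setminus S$ is a bijection between independent sets and vertex covers, so the two counting problems are identical. Then I would encode counting vertex covers as counting satisfying assignments of a monotone CNF formula $\Phi_G$ on variable set $\{x_v : v\in V\}$ with one clause $(x_u \vee x_v)$ for each edge $\{u,v\}\in E$: an assignment satisfies $\Phi_G$ iff the set of \emph{True} variables forms a vertex cover of $G$. This reduction is a bijection, so any $(1\pm\epsilon)$-approximation of $\#\Phi_G$ immediately gives a $(1\pm\epsilon)$-approximation of the number of independent sets of $G$ with the same $\epsilon$.

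Next I would check the degree bookkeeping. If $G$ is $6$-regular, then each vertex $v$ is incident to exactly $6$ edges, so the variable $x_v$ appears in exactly $6$ clauses of $\Phi_G$. Hence $\Phi_G$ is a monotone CNF in which every variable appears in at most $6$ clauses. Combining this with Sly's theorem, which states that it is NP-hard under RP reductions to approximate the number of independent sets on $6$-regular graphs within any $(1\pm\epsilon)$ factor for sufficiently small constant $\epsilon>0$, yields the nonexistence of an FPRAS (hence also of an FPTAS) for counting satisfying assignments of monotone CNF formulas with variable occurrence at most $6$, unless $\mathrm{NP}=\mathrm{RP}$.

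There is no real obstacle here since the reduction is a trivial bijection; the only thing to take care of is to preserve the approximation factor exactly (which it does, since the reduction is parsimonious) and to note that Sly's hardness is already stated for bounded-degree instances, so the resulting monotone CNF satisfies the claimed variable-occurrence bound with no blow-up.
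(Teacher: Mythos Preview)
Your proposal is correct and follows exactly the paper's approach: the paper simply notes that counting independent sets (equivalently vertex covers) is a special case of counting monotone CNF via the monotone $2$-CNF encoding $\bigwedge_{\{u,v\}\in E}(x_u\vee x_v)$, and invokes Sly's hardness result for $6$-regular graphs~\cite{Sly10}. You have just spelled out the same reduction with slightly more detail on the bijection and degree bookkeeping.
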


Given that, the best hope is to get an FPTAS for monotone CNF formulas where each variable appears at most 5 times. The main result of this paper is indeed such an algorithm.

\begin{theorem}
There is FPTAS to count monotone CNF if each variable appears at most $5$ times.
\end{theorem}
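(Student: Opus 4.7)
The plan is to follow the correlation-decay program pioneered by Weitz for independent sets, suitably adapted to the bipartite variable/clause structure of monotone CNF. By standard self-reducibility, producing an FPTAS for the partition function $Z$ reduces to approximating, for a distinguished variable $x$, the marginal $\Pr[x = \text{True}]$ under the uniform distribution on satisfying assignments, to within additive error $\eps/n$; multiplying conditional marginals along an arbitrary assignment and inverting recovers $Z$. The task therefore becomes deterministic computation of $\Pr[x=\text{True}]$ on any monotone CNF with variable-occurrence at most $5$, in time polynomial in $n$ and $1/\eps$.

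To set up the recursion I would track the ratio $R = \Pr[x=F]/\Pr[x=T]$. Enumerate the clauses $C_1,\dots,C_d$ containing $x$, with $d\le 5$. Telescoping across $C_1,\dots,C_d$ (conditioning on one clause at a time) and then, inside each $C_i$, telescoping across the remaining variables of $C_i$, expresses $R$ as a product of ratios of partition functions on sub-instances obtained by either satisfying a clause or pinning a variable. Iterating produces a computation tree whose nodes alternate between \emph{variable nodes} (branching of width $\le 5$) and \emph{clause nodes} (branching of width equal to the clause size, possibly unbounded). The algorithm evaluates this tree truncated at depth $T = O(\log(n/\eps))$, using the trivial leaf value $R=0$ (or $+\infty$) whenever a sub-instance is immediately satisfied or forced. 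To control the potentially unbounded clause branching, I would exploit that only the ``already-falsified'' partial prefix of a clause contributes non-trivially to the telescoping product, so that long clauses collapse into short effective recursions once a few of their variables have been explored.

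The core estimate is strong spatial mixing: the truncated tree and the exact tree differ at the root by at most $(1-\delta)^T$ for some constant $\delta>0$ depending only on the degree bound. I would establish this by the potential-function method, seeking a monotone change of coordinates $y = \varphi(R)$ under which each step of the recursion is a contraction, and verifying that the product of the per-step contraction factor with the branching factor of the computation tree is strictly less than $1$. The degree threshold $5$ is precisely where such a potential is expected to exist, matching the Weitz threshold for independent sets (which is a special case), so the search for $\varphi$ will likely be guided by, but not identical to, the hyperbolic-tangent-style potentials used there.

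The main obstacle is the construction and analysis of this potential. The CNF recursion is asymmetric in a way absent from spin systems (setting $x$ True wipes out entire clauses, while setting $x$ False only shortens them), and the two types of branchings in the tree have very different character: variable nodes are bounded-width but always present, while clause nodes are unbounded-width but rapidly become trivial. A successful $\varphi$ must amortize both simultaneously. I expect the analysis to reduce to a finite, numerically checkable optimization over a few parameters (variable degree and clause arity profiles near the root), after which the contraction bound can be verified in closed form under the hypothesis that every variable occurs in at most $5$ clauses.
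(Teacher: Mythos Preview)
Your plan is essentially the paper's: track $R=\Pr[x=0]/\Pr[x=1]$, set up a two-layer recursion (SAW-style over the $d\le 5$ clauses containing $x$, then telescoping over the remaining variables inside each clause), truncate the computation tree, and prove contraction via a potential $\varphi$. The paper's choice is $\varphi(x)=2\sinh^{-1}\sqrt{x}$, and the analysis does, as you predict, boil down to a finite numerical check.

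One point where your intuition is off is the handling of unbounded clause widths. Long clauses do \emph{not} ``collapse into short effective recursions'': the recursion
\[
R=\prod_{j=1}^{d}\Bigl(1-\prod_{i=1}^{w_j}\tfrac{R_{j,i}}{1+R_{j,i}}\Bigr)
\]
genuinely branches $w_j$-fold at each clause, and every branch must be expanded. What keeps the tree polynomial is a different device, \emph{computationally efficient correlation decay} (an $M$-based depth in the sense of Li--Lu--Yin): expanding a clause of width $w_j$ is charged $\lceil\log_4(w_j{+}1)\rceil$ depth units rather than one, so the truncated tree has size $O(n\cdot 16^L)$. The price is that the amortized contraction rate picks up a penalty $\alpha^{-\lceil\log_4(w_j{+}1)\rceil}$ per clause, and the proof must show this is covered by the fact that the inner product $\prod_i R_{j,i}/(1+R_{j,i})\le 2^{-w_j}$ is tiny when $w_j$ is large.

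The other technical ingredients you have not yet located are the reductions that make the two-layer, non-symmetric contraction functional tractable. The paper uses Karamata's inequality to show that, within each clause, the worst case puts all but one of the inner variables at the boundary value $R=1$, collapsing the inner layer to a single free variable per clause; it then proves a sub-additivity lemma that splits the resulting $d$-clause rate into a sum of one-clause rates, so that small clauses ($w_j\le 3$) and large clauses ($w_j\ge 4$) can be bounded separately and recombined. These two moves are what turn the a priori infinite optimization into the finite case analysis you anticipated.
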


This algorithm can also be interpreted as an FPTAS for counting the number of set covers when each set contains at most five elements.
In particular this covers the aforementioned FPTAS's for counting edge covers and independent sets.
As for these two special cases, the main technique is also correlation decay. First, we use a probability distribution point of view for the counting problem.
Given a CNF formula, we consider a uniform distribution over all its satisfying assignments,
which induces a marginal probability for a variable's assignment.
There is a standard procedure to compute the number of solutions from these marginal probabilities.
Our task is to estimate these marginal probabilities.
We establish a computation tree to relate the marginal probability of a variable to that of its neighbors.
For counting independent sets, edge covers and all other problems (to the best of our knowledge) for which an FPTAS was designed using correlation decay technique, there is only one layer of neighbors.
More concretely, all these problems belong to one of two families:
either each constraint involves at most two variables such as independent set~\cite{Weitz06}, coloring~\cite{GK07,LY13},
and spin systems in general~\cite{LLY12,SST,LLY13,GK07,LY13}; or each variable appears in at most two constraints
such as matching~\cite{BGKNT07}, edge cover~\cite{counting-edge-cover}, and holant problems~\cite{YZ13,fibo-approx} in general.
In CNF formulas, each variable appears in multiple clauses and each clause involves multiple variables.
To handle this, one recursion step in our computation tree has a two-layer structure. In the first layer, we deal with different occurrences of a variable using a similar idea for the self-avoiding walk tree in~\cite{Weitz06}. In the second layer, we deal with individual variables in each clause using a similar computation tree as that in~\cite{counting-edge-cover}. Then, we prove a correlation decay property with respect to this computation tree, which means that the nodes in the computation tree that are far from the root have little influence on the marginal probability of the root.
Based on this property, we can truncate the computation tree and get a good estimation of the marginal probability in polynomial time.
The arising of a two-layer group structure significantly complicates the analysis.
We introduce a sub-additivity argument to deal with variables in different groups separately, and carry out different treatments for groups with different sizes.
In particular, for groups with a super constant size, we have to truncate them early in order to keep the total size of the computation tree within polynomial. To do that, we employ a stronger notion called computationally efficient correlation decay, which is introduced in~\cite{LLY12} and also successfully used in~\cite{counting-edge-cover}.

Based on the same idea of alternating with a two-layer recursion, we also provide FPTAS for an additional counting CSP problem, which also implies an FPTAS for 3D matching.
\begin{theorem}
	There is an FPTAS for counting 3D matchings of hypergraph with maximum degree at most $4$.
\end{theorem}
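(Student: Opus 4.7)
The plan is to follow the correlation-decay framework developed for monotone CNF, instantiated for the hypergraph matching problem. I would first encode the counting problem as a CSP whose variables are the hyperedges $\{x_e\}_{e \in E}$ (indicator of whether $e$ belongs to the matching) and whose constraints, one per vertex $v$, assert $\sum_{e \ni v} x_e \le 1$. Since the hypergraph has maximum degree $4$ and each hyperedge has exactly $3$ endpoints, each variable appears in exactly $3$ constraints and each constraint involves at most $4$ variables. A naive reduction to a 2-spin independent set on the line hypergraph would produce a graph of maximum degree $3 \cdot (4-1) = 9$, well past the Weitz threshold of $5$, which is precisely why the two-layer approach is needed here.

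Next, I would set up the marginal ratio $R_e := \Pr[x_e = 1]/\Pr[x_e = 0]$ and build a recursion analogous to the monotone CNF case. The outer recursion splits along the three vertices of a fixed hyperedge $e^*$: using the telescoping identity $\Pr[x_{e^*}=1] = \prod_{i=1}^{3} \Pr[\,\text{no other hyperedge at } v_i \text{ chosen} \mid \text{commitments at } v_1,\ldots,v_{i-1}\,]$, the three ``occurrences'' of $e^*$ are peeled off one at a time, as in the self-avoiding walk construction. The inner recursion, localized at a single vertex $v$, handles the at-most-one constraint among the remaining (at most three) incident hyperedges by expressing the vertex-local probability as a rational function of the ratios $R_{e'}$ of the other incident hyperedges $e'$. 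Composed, this yields a two-layer recursion tree whose ``depth-one'' leaves are again hyperedges, with branching at most $3 \cdot 3 = 9$.

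The main obstacle is establishing step-wise contraction. Following the CNF program, I would introduce a potential function $\varphi$ (of the form $\varphi(R) = \int \sqrt{h(R)}\,dR$ for a suitably chosen weight $h$) and bound the $\ell_1$ norm of the Jacobian of the two-layer recursion in the $\varphi$-transformed coordinates. The sub-additivity idea from the CNF proof, which bounds the combined contribution of several inner branches by treating them separately, should carry over essentially verbatim; however, the numerical inequality itself is new and must be re-verified, since the local constraint here (at-most-one, rather than at-least-one as in a monotone clause) flips the direction in which marginals move. I expect the verification to reduce, as in the CNF case, to a finite min-max problem over a bounded range of ratios, tractable either analytically on small cases or by computer-assisted search. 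Because each vertex has at most $3$ other incident hyperedges, all inner groups are of constant size, so the ``computationally efficient correlation decay'' machinery needed for super-constant groups in the CNF proof is not required here; a standard truncation of the tree at logarithmic depth suffices.

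Finally, to convert marginals into a partition function estimate I would use the usual telescoping product: order the hyperedges arbitrarily as $e_1,\ldots,e_m$, fix them one by one to a feasible assignment (e.g.\ the empty matching), and write $Z = \prod_{j=1}^{m} Z_{j-1}/Z_{j}$, where each ratio is a conditional marginal probability that the algorithm above estimates to within a $(1\pm\eps/m)$ factor in time $\mathrm{poly}(n,1/\eps)$, yielding the claimed FPTAS.
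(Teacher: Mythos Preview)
Your proposal is correct and follows essentially the same route as the paper: the paper casts 3D matching as $\CSP_3^c(K_4)$ (with the cosmetic difference that it uses $x_e=0$ for ``$e$ in the matching''), derives the two-layer recursion $R(C,x)=\prod_{j=1}^{d}\frac{1}{1+\sum_i R(C_{j,i},x_{j,i})}$ via the same SAW-style telescoping over the (at most three) incident vertices and then over the (at most three) other hyperedges at each vertex, and proves contraction with the potential $\varphi(x)=2\sinh^{-1}(\sqrt{x})$ using Jensen symmetrization plus the sub-additivity argument you anticipate. The paper also makes explicit the observation---implicit in your SAW remark---that after the root step every subsequent variable has degree at most $2$, so the decay inequality need only be verified for $d\le 2$, $w_j\le 3$; and it confirms your expectation that no $M$-based depth is needed here since all inner groups have constant size.
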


As an aside, the problem of counting hypergraph matching could be tranformed into counting independent sets over the line graph of the hypergraph,
where the vertices of the line graph are the hyperedges, and two vertices has an edge if those two hyperedges intersect.
However, the two problems are not equivalent when the maximum degree comes into play.
For instance, 3D matching with maximum degree $4$ translates to counting independent sets with maximum degree $9$, for which no FPTAS in general is possible unless $P=NP$.
Still, by leveraging the locally clique-like structure of line graphs with a two-layer recursion, we have been able to show an FPTAS.
In particular, instead of alternating between clauses and variables as in the CNF problem, the two-layer recursion here will alternate between hyperedges and vertices. In addition, what we will show under the Boolean Constraint Satisfaction Problem (CSP) framework is actually a slightly stronger result, where non-uniform hypergraph is also allowed.

Hypergraph matching (or set packing) can be viewed as a dual version of monotone CNF (or set cover).
 In fact, all these problems are just special case of monotone Boolean CSP. 
 Specifically, monotone CNF is a CSP with the Boolean OR function as constraint, and hypergraph matching is a CSP with the {\sc At-Most-One} constraint.
 Not only are they interesting counting problems on their own, they also play important roles in the classification of approximability for the bounded degree Boolean \#CSP~\cite{DyerGJR12}.
We use $\CSP_d(\Gamma)$ to denote the problem of counting the number of solutions for a Boolean CSP where all the constraints are from $\Gamma$ and each variable appears at most $d$ times.
Then $\CSP_d(OR)$ is exactly our Read-$d$-Mon-CNF.
As mentioned above, there is an FPTAS if $d\leq 5$ and the problem is NP-hard if $d\geq 6$,
so a broader view would be to ask the same question for other $\Gamma$ and $d$. If $d=\infty$, i.e. no degree bound, then \cite{DyerGJ10} gave a complete classification in terms of $\Gamma$: the problem is NP-hard under randomized reduction, BIS-hard (as hard as approximately counting independent sets for a bipartite graph), or polynomial time computable  even for exact counting.
Basically, if we believe these hardness assumptions, there is no interesting approximable cases.
The bounded degree case was studied in \cite{DyerGJR12}, with a slightly technical assumption: in addition to the constraints in $\Gamma$, two unary constant constraints (pinning) are always available. We write $\CSP^c_d(\Gamma)$ for $\CSP_d(\Gamma)$ with this assumption.
They then studied the approximability of $\CSP^c_d(\Gamma)$ in terms of $\Gamma$ and $d$.
For $d\ge 6$, a similar classification as that of unbounded degree case was obtained.
For $d\in \{3,4,5\}$, a partial classification was also given in~\cite{DyerGJR12}, in which the monotone Boolean constraint stands out as the only unknown case.
It is also the only interesting family in the sense that all other problems in the framework $\CSP^c_d(\Gamma)$ with $d\ge 3$ are either hard (NP-hard or BIS-hard) to approximately count or polynomial time computable even for exact counting.
Thus, our above FPTAS for Read-$5$-Mon-CNF falls in this interesting range, and makes an important step towards a full classification for $\CSP^c_d(\Gamma)$.
In section \ref{sec:csp}, we discuss the implication of our FPTAS in the classification and also obtain new hardness result and FPTAS for other monotone Boolean $\CSP$ problems. 
It is worth noting that our two-layer recursion, with the newly developed analysis technique for the CNF problem, are also used to design and prove the additional FPTAS for a general $\CSP$ problem. This structure is indeed common for general $\CSP$ problems and the new techniques developed here may find applications in other problems.

 \subsection*{Related Work}
The approach to designing FPTAS via correlation decay is introduced in \cite{BG08} and \cite{Weitz06}. The most successful example is for anti-ferromagnetic two-spin systems~\cite{LLY12,SST,LLY13}, including counting independent sets~\cite{Weitz06}. The correlation decay based FPTAS is beyond the best known MCMC based FPRAS and achieves the boundary of approximability~\cite{SS12,galanis2012inapproximability}. The approach was also extended to count colorings and compute the partition function of multi-spin system~\cite{GK07,LY13}.

There is also a beautiful long line of research  on designing FPRAS for approximate counting by sampling and most successfully sampling by Markov chain, see for example~\cite{MC_JA96,JS93, app_JSV04, app_GJ11, app_DJV01, col_Jerrum95, col_Vigoda99, IS_DFJ02, IS_DG00, IS_LV97}.

For counting matchings, an FPRAS based on Markov Chain Monte Carlo (MCMC) is known for any graph\cite{jerrum1989approximating}, and deterministic FPTAS is only known for graphs with bounded degree\cite{BGKNT07}.
Hypergraph matching is also known as set packing.
More recently, an independent result for approximately counting 3D matching was also obtained in \cite{33matching} for maximum degree $3$.

\section{Preliminary}

\begin{definition} [Read-$d$-Mon-CNF]
  A  read d times monotone CNF formula (Read-$d$-Mon-CNF) is a CNF formula where every literal is positive occurrence of some variable and each variable appears in at most $d$ clauses.
Formally we write a monotone CNF formula as $C = \bigwedge_j c_j$ where $c_j = \bigvee_i x_{j,i}$, and $x_{j,i}$ are (not necessarily distinct) variables.

\end{definition}

A satisfying assignment for a CNF formula is an assignment to the variables (True or False) such that all clauses are satisfied.
Counting the number of such satisfying assignments is our main concern.
We will use numeric value $1$ to indicate Boolean value \emph{True}, and $0$ for \emph{False}.

We denote the occurrences of variable $x$ in formula $C$ by $d_x(C)$, then $C$ being Read-$d$ is the same as $\forall x, d_x(C) \leq d$.
Let $\abs{c}$ be the number of distinct variables in a clause $c$, a \emph{singleton clause} is a clause $c$ with $\abs{c} = 1$.
A monotone CNF formula is \emph{well-formed} if each clause does not contain duplicate variables, there is no singleton clause, and no clause is a subset of another. Any Read-$d$-Mon-CNF formula can be re-written as a well-formed Read-$d$-Mon-CNF formula.


A general Boolean constraint with arity $k$ is a mapping $C: \set{0,1}^k \to \set{0,1}$. Two special unary constraints are called pinning:
$\Delta_0,\Delta_1$ defined by $\Delta_0(0)=1, \Delta_0(1)=0, \Delta_1(0)=0 $ and  $\Delta_1(1)=1$.
Basically, the constraints $\Delta_0(x)$ and $\Delta_1(x)$ fix the variable $x$ to be $0$ and $1$ respectively, which we call the variable is pinned to $0$ and $1$ respectively.
A Boolean constraint $C$ is monotone if $\forall \mathbf{x}, \mathbf{y} \in \set{0,1}^k$, $\mathbf{x} \leq \mathbf{y} \Rightarrow C(\mathbf{x}) \leq C(\mathbf{y})$.
The other direction of monotone is equivalent to this after switching the name of $0$ and $1$. Thus, we focus on this direction in the paper and our conclusion also holds for the other direction by simply renaming $0$ and $1$.
Except for those trivial constant functions and some pinned variables, a monotone Boolean constraint can always be re-written as a unique well-formed monotone CNF~\cite{DyerGJR12}.

Let $\Gamma$ be a set of Boolean constraints and $d>0$ be an integer, we use $\CSP_d(\Gamma)$ to denote the problem of counting the number of solutions for a Boolean CSP problem where each constraint is from the set $\Gamma$ and each variable appears in at most  $d$ constraints.
We write $\CSP^c_d(\Gamma)\triangleq \CSP_d(\Gamma \cup \{\Delta_0,\Delta_1\})$, where
 $\Delta_0$ and $\Delta_1$ are always assumed to be available. 

%
%
%
%
%

Recall the equivalence between monotone CNF formulas and set covers, we will also treat a CNF formula as a set of clauses, and a clause as a set of variables,
and define some set operations for CNF $C=\bigwedge_j c_j $, clause $c=\bigvee_i x_{i}$, and variable $x$ as follows:
\begin{itemize}
	\item $c - x \triangleq \bigvee_{i: x_i \neq x} x_i$, the occurence of $x$ in $c$ (if any) is pinned to $0$;
	\item $C - c \triangleq \bigwedge_{j: c_j \neq c} c_j$, remove the clause $c$ from $C$;
	\item $C + c \triangleq \left(\bigwedge_j c_j \right) \wedge c$ , add a clause $c$ to $C$;
	\item $C - x \triangleq \bigwedge_j \left( c_j - x \right)$ , the variable $x$ is pinned to $0$ in the entire formula $C$.
\end{itemize}

We will also write $C - x - y \triangleq (C-x) -y$.
In general we use $n$ to refer the number of variables, and $m$ for the number of clauses (or constraints).
We use $\mathbf{1}$ for the all-one vector, and a $d$-dimensional vector $\mathbf{t}$ is also written as $\set{t_i}_{i=1}^d$, with the $i$-th coordinate being $t_i$, so $\set{t_i} = \mathbf{0}$ means $\forall i, t_i=0$. We write $\underline{w}= \set{w_i}$ for an ascending ordered sequence.
We also write $\mathbf{t} \setminus \set{t_d} \triangleq \set{t_1,t_2,\ldots,t_{d-1}}$ as a $(d-1)$-dimensional vector after projection. 

\section{The Algorithm}
Given a monotone CNF formula $C$, let $X(C)$ be the set of all satisfying assignments for $C$.
We associate a uniform distribution on $X(C)$, which induces a marginal probability $\mathbb{P}_C ( x=0)$: the probability that $x$ is assigned to be $0$ (False) if we take a satisfying assignment from $X(C)$ uniformly at random.
We use $R(C,x)$ to denote its ratio: $R(C,x) \triangleq \frac{\mathbb{P}_C ( x=0)}{\mathbb{P}_C ( x=1)}$.
In this section, we shall give an algorithm to compute these marginal probabilities and count the size of $X(C)$.

\subsection{Recursion}
First we prove a recursive relation which relates $R(C,x)$ to that of smaller instances.

\begin{lemma}
\label{recursion1}
Let $d \triangleq d_x(C)$, and the $d$ clauses containing $x$ be enumerated as $\set{c_j}_{j=1}^d$.
Denote $w_j \triangleq \abs{c_j} - 1$,  $C_j \triangleq \left(C - \sum_{k\neq j} c_k \right) + \sum_{k=j+1}^d (c_k - x)$.
Let $\set{x_{j,i}}_{i=1}^{w_j}$ be the set of variables in $(c_j - x)$,
 and $C_{j,i} \triangleq C_j - c_j -\sum_{k=1}^{i-1} x_{j,k} $. Then we have
\begin{equation}\label{equ:rec}
R(C,x) = \prod_{j=1}^{d} \left( 1 - \prod_{i=1}^{w_j} \frac{R(C_{j,i},x_{j,i})}{1+R(C_{j,i},x_{j,i})} \right).
\end{equation}
\end{lemma}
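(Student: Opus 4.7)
The plan is to write $R(C,x)$ as a telescoping product over the $d$ clauses containing $x$, and then decompose each factor by the chain rule for conditional probabilities, finally converting marginal probabilities into the ratios $R$.

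First I would split the distribution according to the assignment of $x$. If $x=1$, all clauses $c_1,\dots,c_d$ are automatically satisfied, so the number of such satisfying assignments equals $|X(C')|$ where $C' \triangleq C - \sum_{k=1}^d c_k$; if $x=0$, then each surviving sub-clause $c_k-x$ must still be satisfied, so the count equals $|X(C' + \sum_{k=1}^d (c_k-x))|$. Taking the ratio recovers $R(C,x)$ purely in terms of formulas on the other variables. Next I would introduce the interpolating sequence $\hat C_j \triangleq C' + \sum_{k=j}^{d}(c_k-x)$ for $j=1,\dots,d+1$, so that $\hat C_1$ is the $x=0$ formula and $\hat C_{d+1}=C'$ is the $x=1$ formula, and verify directly from the paper's definitions that $\hat C_{j+1} = C_j - c_j$. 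The telescoping product then yields
\begin{equation*}
R(C,x) \;=\; \prod_{j=1}^{d} \frac{|X(\hat C_j)|}{|X(\hat C_{j+1})|} \;=\; \prod_{j=1}^{d} \mathbb{P}_{C_j - c_j}\bigl(c_j - x \text{ is satisfied}\bigr),
\end{equation*}
because $\hat C_j$ is exactly $\hat C_{j+1}$ with the extra clause $c_j-x$ imposed.

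Then for each $j$ I would rewrite the factor using that the disjunction $c_j-x = \bigvee_i x_{j,i}$ fails iff every $x_{j,i}=0$. Thus the factor equals $1-\mathbb{P}_{C_j-c_j}(x_{j,1}=\cdots=x_{j,w_j}=0)$, and by the chain rule
\begin{equation*}
\mathbb{P}_{C_j-c_j}(x_{j,1}=0,\dots,x_{j,w_j}=0) \;=\; \prod_{i=1}^{w_j} \mathbb{P}_{C_j-c_j}\bigl(x_{j,i}=0 \,\big|\, x_{j,1}=\cdots=x_{j,i-1}=0\bigr).
\end{equation*}
Since the variables in the well-formed clause $c_j$ are distinct, conditioning on $x_{j,k}=0$ is the same as pinning it, which by the definition $C_{j,i} = C_j-c_j-\sum_{k<i}x_{j,k}$ turns each conditional probability into $\mathbb{P}_{C_{j,i}}(x_{j,i}=0)$. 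The elementary identity $\mathbb{P}(x=0)=R/(1+R)$ then converts these into $R(C_{j,i},x_{j,i})/(1+R(C_{j,i},x_{j,i}))$, giving \eqref{equ:rec}.

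The main obstacle, as far as I can see, is purely bookkeeping: carefully matching the ad hoc formulas $\hat C_j$ used for telescoping with the paper's $C_j$ via the identity $\hat C_{j+1}=C_j-c_j$, and verifying that the successive pinnings arising from the chain rule produce exactly the $C_{j,i}$'s. A minor issue is the handling of degenerate cases — for instance when $|X(\hat C_{j+1})|=0$ or a clause has $w_j=0$ — but the well-formed assumption rules out the latter and the convention $R\in[0,\infty]$ dispatches the former.
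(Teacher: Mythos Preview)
Your proposal is correct and follows essentially the same route as the paper: telescope $R(C,x)$ over the $d$ clauses containing $x$, then expand each factor via the chain rule over the remaining variables of that clause and convert marginals to ratios. The only cosmetic difference is that the paper introduces $d$ fresh variables $\tilde{x}_1,\dots,\tilde{x}_d$ replacing the occurrences of $x$ (so the telescoping product reads as $\prod_j R(C_j,\tilde{x}_j)$ in the enlarged instance), whereas you telescope directly on counts $|X(\hat C_j)|$; your identity $\hat C_{j+1}=C_j-c_j$ shows the two organizations are equivalent.
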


As an aside, $C_j$ is obtained from $C$ by pinning the occurrences of $x$ in $c_1, c_2, \cdots, c_{j-1}$ to $1$ (and these clauses are thus removed since they have been satisfied) and the occurrences of $x$ in $c_{j+1}, c_{j+2}, \cdots, c_{d}$ to $0$ (and thus we simply remove variable $x$ from these clauses).
$C_{j,i}$ is further obtained from $C_j$ by removing clause $c_j$ and pinning $x_{j,1}, x_{j,2}, \cdots, x_{j,i-1}$ (all occurrences) to $0$.

Alternatively viewing $C$ as a set cover instance, $x$ is a set containing $\set{c_k}_k^d$ as elements, $C_j$ is obtained from $C$ by first removing elements $c_1, c_2, \cdots, c_{j-1}$ entirely, and then removing elements $c_{j+1}, c_{j+2}, \cdots, c_{d}$ only from the set $x$. Then $C_{j,i}$ is to remove element $c_j$, and then the sets $x_{j,1},x_{j,2}, \cdots, x_{j,i-1}$ from $C_j$.

\begin{proof}
If $d=0$, $x$ is entirely a free variable, so $R(C,x) = 1$, and recall that we adopt the convention that if $d=0$, the product is also $1$. If $w_j=0$ for some $j\leq d$, the clause $c_j$ is a singleton with variable $x$. As a result, $x=1$ and thus $R(C,x) = 0$. In this case, the identity is also true since the $j$-th internal product is $1$ for $w_j=0$ and we get $R(C,x) = 0$.

In the following, we assume $d \ge 1$ and $w_j\geq 1$ for $j=1, 2, \cdots, d$. Note that in this case all the new instances $C_{j,i}$ are well-defined monotone CNF formulas. We substitute the $d$ occurrences of $x$ with $d$ independent new variables $\set{\tilde{x}_j}_{j=1}^d$ (the occurrence of $x$ in $c_j$ is replaced by $\tilde{x}_j$), and denote this new CNF formula by $C'$.  We have
\[
R(C,x)=\frac{\mathbb{P}_C ( x=0)}{\mathbb{P}_C ( x=1)}
= \frac{\mathbb{P}_{C'} \left( \set{\tilde{x}_j} = \mathbf{0} \right)}{\mathbb{P}_{C'} \left( \set{\tilde{x}_j} = \mathbf{1} \right)}
= \prod_{j=1}^d \frac{\mathbb{P}_{C'} \left( \set{\tilde{x}_i}_{i=1}^{j-1} = \mathbf{1},  \set{\tilde{x}_i}_{i=j}^{d} = \mathbf{0} \right)}{\mathbb{P}_{C'} \left(  \set{\tilde{x}_i}_{i=1}^{j} = \mathbf{1},  \set{\tilde{x}_i}_{i=j+1}^{d} = \mathbf{0}  \right)}
= \prod_{j=1}^d R(C_j, \tilde{x}_j). \]
%

Now we further expand $R(C_j, \tilde{x}_j)$.  Since $\tilde{x}_j$ is a newly introduced variable, it only appears once in $C_j$ which is in $c_j$. As $w_j = \abs{c_j} - 1$,  $\set{x_{j,i}}_{i=1}^{w_j}$ is the set of variables in $c_j - \tilde{x}_j$,
 using the fact that $c_j$ is a monotone clause and $\tilde{x}_j$ does not appear in other clauses,  we have
\[ R(C_j, \tilde{x}_j) =
\frac{\mathbb{P}_{C_j} ( \tilde{x}_j =0)}{\mathbb{P}_{C_j} ( \tilde{x}_j =1)} =
1 - \mathbb{P}_{C_j-c_j} \left( \set{x_{j,i}}_{i=1}^{w_j} = \mathbf{0} \right)=
1 - \prod_{i=1}^{w_j} \mathbb{P}_{C_{j,i}} \left( x_{j,i} = 0 \right),\]
By substituting $\mathbb{P}_{C_{j,i}} \left( x_{j,i} = 0 \right) = \frac{R(C_{j,i},x_{j,i})}{1+R(C_{j,i},x_{j,i})}$, this concludes the proof.
\end{proof}

For all new instances $R(C_{j,i},x_{j,i})$ involved in the recursion, one occurrence of $x_{j,i}$ in the original formula $C$ (which is in $c_j$) is eliminated in $C_{j,i}$ as we have removed $c_j$ from $C$ to get $C_{j,i}$. Therefore,
 for Read-$5$-Mon-CNF $C$, $d_{x_{j,i}} ( C_{j,i} ) \leq 4$. In other words, the recursion with $d=5$ is invoked no more than once,
 which would be at the initial step.


\begin{proposition}
$R(C,x) \leq 1$.
\end{proposition}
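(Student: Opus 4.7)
The plan is to give a direct bijective/injective proof using the monotonicity of $C$, rather than inducting on the recursion from Lemma \ref{recursion1}. Let $X_0(C) \subseteq X(C)$ denote the set of satisfying assignments with $x=0$, and $X_1(C)$ the set with $x=1$. Since $C$ is a monotone CNF (every literal is positive), whenever $\sigma \in X_0(C)$ is a satisfying assignment, the assignment $\sigma'$ obtained by flipping $x$ from $0$ to $1$ and leaving all other variables unchanged satisfies $\sigma' \geq \sigma$ coordinatewise, so every clause that was satisfied under $\sigma$ remains satisfied under $\sigma'$, giving $\sigma' \in X_1(C)$. The map $\sigma \mapsto \sigma'$ is clearly injective.

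From the injection I would conclude $|X_0(C)| \leq |X_1(C)|$, and hence $\mathbb{P}_C(x=0) \leq \mathbb{P}_C(x=1)$ under the uniform distribution on $X(C)$. Note that $X(C)$ is always nonempty (the all-ones assignment satisfies any monotone CNF), and in fact $X_1(C)$ is nonempty for the same reason, so $\mathbb{P}_C(x=1) > 0$ and the ratio $R(C,x)$ is well-defined. Dividing gives $R(C,x) = \mathbb{P}_C(x=0)/\mathbb{P}_C(x=1) \leq 1$, which is what we want.

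As a sanity check I would also verify this is consistent with the recursion in Lemma \ref{recursion1}: every quantity of the form $R/(1+R)$ with $R \geq 0$ lies in $[0,1]$, so each of the $d$ factors $1 - \prod_i R(C_{j,i},x_{j,i})/(1+R(C_{j,i},x_{j,i}))$ in \eqref{equ:rec} lies in $[0,1]$, and the product of such factors is at most $1$. This also yields the result by induction on the size of $C$, but the injection argument is shorter and conceptually reveals why monotonicity is the essential hypothesis. There is no real obstacle here; the only subtlety is making sure the denominator does not vanish, which is handled by the all-ones assignment.
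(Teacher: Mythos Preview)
Your proposal is correct and follows exactly the paper's approach: the paper's proof is the one-line observation that flipping $x$ from $0$ to $1$ injectively maps satisfying assignments with $x=0$ to satisfying assignments with $x=1$, hence $\mathbb{P}(x=0)\le\mathbb{P}(x=1)$. Your added sanity check via the recursion is not in the paper but is a harmless aside.
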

\begin{proof}
 Since $C$ is a monotone CNF, a satisfying assignment with $x=0$ can be injectively mapped to a satisfying assignment with $x=1$,
	thus we have $\mathbb{P}(x=0) \leq \mathbb{P}(x=1)$.
\end{proof}

\subsection{Truncated Computation Tree}
In the recursion (\ref{equ:rec}) of Lemma \ref{recursion1}, we can recursively expand these $R(C_{j,i},x_{j,i})$ to those of smaller
instances until we reach trivial instances, which yields a computation tree to compute $R(C,x)$. However, the total size of the computation tree can be exponential.
%
Here we estimate by truncating it up to recursion depth $L$.
 Formally, for a Read-$5$-Mon-CNF $C$, a variable $x$  with $d \triangleq d_x(C)$ and a non-negative integer $L$,
 we recursively define and compute $R(C,x,L)$ as: 
\[R(C,x,L) =
\left\{
  \begin{array}{ll}
    0, & \hbox{\textrm{$w_j = 0$ for some $j$};} \\
      1, & \hbox{$d=0$ or $L = 0$;} \\
    \prod_{j=1}^{d} \left( 1 - \prod_{i=1}^{w_j}  \frac{R\left(C_{j,i},\ x_{j,i},\ \max\left(0,L - \lceil\log_4 (w_j + 1) \rceil\right)\right)}{1+R\left(C_{j,i},\ x_{j,i},\ \max\left(0, L - \lceil\log_4 (w_j + 1) \rceil\right)\right)} \right), & \hbox{$1 \leq d \leq 5$.}
  \end{array}
\right.
\]

The recursion depth $L$ used here is known as $M$-based depth introduced in \cite{LLY12} with $M=4$. 
As for the computation tree, another view of the recursion depth $L$ is we replace every node with a branching degree greater than $4$ with a $4$-ary branching subtree.
Now it is easy to see that the nodes involved in the branching computation tree up to depth $L$ are at most $O((4d)^L) = O(16^L)$,
and for second-to-base-case nodes (i.e. nodes with $0<L \leq \lceil \log_4{(d+1)} \rceil $ ) they involve at most $O(n)$ extra base cases,
so the running time for the algorithm to compute $R(C,x,L)$ is $O(n 16^L)$

Recall that the $d=5$ case is invoked only once, then the algorithm keeps exploring the recursion with $1\leq d\leq 4$ until it hits one of the three boundary cases. We remark that the two boundary cases $d=0$ and  $w_j = 0$ for some $j$ can be covered by the recursion automatically as we define an empty product to be $1$. The values for these boundary cases are indeed accurate
(equal to $R(C,x)$)
and we list them out separately to be more explicit.  For another boundary case $L=0$, we choose the value $1$ here which is an arbitrary guess.
Indeed any number between $0$ and $1$ works, as we shall prove the correlation decay property,
which states that the value for $R(C,x,L)$ with large $L$ is almost independent with the choice of these values for $L=0$.
Formally,
%
%
%
we have the following key lemma, for which the proof is laid out in Section \ref{amortized-proof}.
\begin{lemma}[Correlation decay]
\label{main-lem}
Let $\alpha = 0.981$,  $C$ be a Read-$5$-Mon-CNF formula   and $x$ be a variable of $C$. Then
\begin{equation}
\label{main-lem-equation}
\abs{ R(C,x,L) -  R(C,x)} \leq 5 \sqrt{6} \alpha^L.
\end{equation}
\end{lemma}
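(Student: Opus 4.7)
The plan is to follow the amortized correlation decay framework that has become standard for this type of bound (Weitz, LLY, and the edge-cover paper of the authors): pick a strictly monotone potential $\phi:[0,1]\to\mathbb{R}$, pass to the transformed variable $\Phi(C,x)\triangleq \phi(R(C,x))$, and show that one step of the recursion \eqref{equ:rec} is a contraction by some factor $\alpha<1$ in the $\phi$-metric. Because $\phi$ is bi-Lipschitz on the relevant range $R\in[0,1]$, a bound of the form $|\Phi(C,x,L)-\Phi(C,x)|\le C_0\alpha^L$ translates back to $|R(C,x,L)-R(C,x)|\le C_0'\alpha^L$, with the constant $5\sqrt 6$ absorbing the Lipschitz factors and the initial degree-$5$ step (which is invoked only once, as the remark after Lemma~\ref{recursion1} observes). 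The proof should therefore reduce, after a short inductive setup on $L$, to verifying a single analytic inequality about the step map
\[
F\bigl(\{R_{j,i}\}\bigr) \;=\; \prod_{j=1}^{d}\Bigl(1-\prod_{i=1}^{w_j}\tfrac{R_{j,i}}{1+R_{j,i}}\Bigr),\qquad 1\le d\le 4.
\]

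First I would write the recursion as the composition of two natural maps matching the paper's two-layer picture: an inner map $g_j(R_{j,1},\dots,R_{j,w_j})=1-\prod_i R_{j,i}/(1+R_{j,i})$ handling the variables inside clause $c_j$, and an outer map $R=\prod_j g_j$ handling the different occurrences of $x$. By the chain rule the amortized contraction ratio at the root is
\[
\kappa \;=\; \sum_{j=1}^{d}\sum_{i=1}^{w_j}\;\bigl|\partial F/\partial R_{j,i}\bigr|\;\cdot\;\phi'(R_{j,i})^{-1}\,\phi'(F),
\]
so the task is to choose $\phi$ (something like $\phi(R)=\int_{}^{R}\sqrt{s(1-s)}^{-1}\,\mathrm ds$ or a similar hyperbolic substitution, informed by the analogous choice in~\cite{LLY12,counting-edge-cover}) and then bound $\kappa$ uniformly over all feasible $R_{j,i}\in[0,1]$, all $d\le 4$, and all widths $w_j$.

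The main obstacle is that $w_j$ is unbounded, so a naive $L_1$ bound on $\kappa$ will not yield $\alpha<1$. This is where the sub-additivity argument comes in: rather than bounding the sum of $|\partial F/\partial R_{j,i}|$ directly, I would bound the contribution of each clause $c_j$ in $L_2$, i.e.~prove a clause-wise estimate
\[
\sum_{i=1}^{w_j}\bigl(\partial F/\partial R_{j,i}\bigr)^2\phi'(R_{j,i})^{-2}\;\le\;\beta_j^2\,\phi'(F)^{-2},
\]
and then combine the $\beta_j$ across $j=1,\dots,d\le 4$ using Cauchy--Schwarz (yielding the factor $\sqrt{d}\le \sqrt 4$, explaining part of the $\sqrt 6$). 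The key analytic fact to exploit is that $1-\prod_i P_{j,i}$ is bounded away from $0$ when $w_j$ is large, so the contribution of individual variables inside a wide clause is small; this is exactly the structural reason a super-constant clause can be truncated early without losing the contraction.

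To reconcile this with the $M$-based depth $L-\lceil\log_4(w_j+1)\rceil$ used in the truncation, I would introduce a two-level induction on $L$ and on the clause widths simultaneously: for each child $(j,i)$ the induction hypothesis gives an error at depth $L-\lceil\log_4(w_j+1)\rceil$, and I need the product of the clause-wise contraction factor with $\alpha^{-\lceil\log_4(w_j+1)\rceil}$ to still be strictly less than $1$. This is the computationally efficient correlation decay requirement from~\cite{LLY12}; it is the technically delicate step, because one must verify numerically (likely by a case analysis in $w_j$ and a monotonicity argument reducing the continuous optimization over $R_{j,i}\in[0,1]$ to a finite check) that the chosen $\phi$ gives room $\alpha\le 0.981$ simultaneously in every case $(d,w_1,\dots,w_d)$ with $d\le 4$. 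Once this step is in place, unrolling the amortized recursion yields $|\Phi(C,x,L)-\Phi(C,x)|\le C\alpha^L$ for the internal recursions with $d\le 4$, and absorbing the single initial degree-$5$ step together with the maximum of $\phi'^{-1}$ on $[0,1]$ produces the stated constant $5\sqrt 6$.
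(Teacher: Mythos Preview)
Your high-level framework is right: a potential $\varphi$, induction on $L$, a one-step contraction bound on the amortized rate $\hat\kappa_d^{\underline w}(\mathbf r)=\sum_{j,i}\frac{\Phi(h)}{\Phi(r_{j,i})}\bigl|\partial h/\partial r_{j,i}\bigr|\,\alpha^{-\lceil\log_4(w_j+1)\rceil}$, and a separate treatment of the single root step with $d=5$. The paper uses exactly $\varphi(x)=2\sinh^{-1}(\sqrt x)$, so $\Phi(x)=1/\sqrt{x(1+x)}$; your suggested $\int^R (s(1-s))^{-1/2}\,ds$ has the wrong singularity (it blows up at $R=1$, which is attained), so be careful there.

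Where your plan diverges substantively is the handling of the inner layer. You propose an $L_2$ clause-wise bound and Cauchy--Schwarz, and you attribute the $\sqrt 6$ in the constant to a $\sqrt d$ factor from Cauchy--Schwarz. That is not how the paper proceeds, and your attribution of the constant is incorrect: the $\sqrt 6$ is purely the Lipschitz constant of $\varphi^{-1}$ on the range $[0,2)$ (since $(\varphi^{-1})'(y)=\sqrt{y(1+y)}<\sqrt 6$), and the $5$ comes from splitting the $d=5$ root step into five $d=1$ pieces. No $L_2$ or Cauchy--Schwarz appears. The paper stays in $L_1$ throughout and instead uses two structural reductions you are missing: first, a \emph{Karamata/majorization} step showing that for each clause the worst case of $\sum_i (1-\hat t_{j,i})/\sqrt{\hat t_{j,i}}$ subject to $\prod_i\hat t_{j,i}$ fixed is attained when all but one $\hat t_{j,i}$ equal $1/2$, collapsing the two-layer rate $\hat\kappa_d^{\underline w}$ to a single-layer rate $\kappa_d^{\underline w}(\mathbf t)$ with one variable per clause; second, a simple \emph{sub-additivity} $\kappa_d^{\underline w}\le \kappa_{d-1}^{\underline w\setminus(w_a)}+\kappa_1^{(w_a)}$ (via monotonicity of $g(h)=\sqrt{h/(1+h)}$) that lets one peel off large-$w$ clauses and bound them individually by $\kappa_1^{(w)}<0.14$ for $w\ge 4$. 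The remaining finitely many cases $d\le 4$, $w_j\le 3$ are dispatched by explicit numerics.

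So the gap is not the inductive skeleton but the analytic engine: your $L_2$ scheme is speculative, does not obviously close (Cauchy--Schwarz against the $\alpha^{-\lceil\log_4(w_j+1)\rceil}$ weights tends to \emph{lose} rather than gain), and in any case is not what makes the constant come out as $5\sqrt 6$. Replacing it with the Karamata reduction plus sub-additivity is the missing idea.
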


\subsection{Counting Algorithm}
With these truncated marginal probability ratios $R(C,x,L)$, we derive our counting algorithm.
Let $Z(C)$ be the number of satisfying assignments of $C$,
and $\set{x_i}_{i=1}^n$ be an enumeration of variables in any order in a Read-$5$-Mon-CNF $C$.
As a monotone CNF, $\set{x_i}_{i=1}^n=\mathbf{1}$ is a satisfying assignment. Now with $\set{x_i}_{i=1}^n$ sampled uniformly,  $\mathbb{P} (\set{x_i}_{i=1}^n = \mathbf{1}) $ has two expressions:
	\[ \mathbb{P} ( \set{x_i}_{i=1}^n = \mathbf{1} ) = \frac{1}{Z(C)} \textrm{\ \ and\ \ } \mathbb{P} ( \set{x_i}_{i=1}^n = \mathbf{1} ) = \prod_i \mathbb{P}_C \left( x_i = 1 \mid \set{x_j}_{j=1}^{i-1} = \mathbf{1} \right) =  \prod_i \mathbb{P}_{C_i} \left( x_i = 1 \right),\]
where $C_1=C$ and $C_{i+1}$ is obtained from $C_{i}$ by pinning $x_i$ to $1$ for $i=1,2,\cdots, n-1$.
	By substitution and $\mathbb{P}_{C_i}\left( x_i = 1 \right) = \frac{1}{1 + R(C_i, x_i) }$, we get
\[ Z(C) = \prod_i \left( 1 + R(C_i, x_i) \right).  \]
Hence we also get
$Z(C, L) \triangleq \prod_i \left( 1 + R(C_i, x_i, L) \right)$
as an estimation for $Z(C)$.
By correlation decay lemma, we show that
\begin{theorem}
\label{counting-thm1}
Let $0< \eps < 1$, $C$ be a Read-$5$-Mon-CNF with $n$ variables, $L = \log_{\alpha} \left(\frac{\eps}{10 \sqrt{6} n} \right)$. Then
$Z(C,L)$ is the desired FPTAS for $Z(C)$ with running time $O\left(n^2 \left(\frac{n}{\eps}\right)^{\log_{1/ \alpha} (16)} \right)$.
\end{theorem}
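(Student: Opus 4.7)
The plan is to combine the correlation decay Lemma~\ref{main-lem} with the telescoping product $Z(C) = \prod_i (1 + R(C_i, x_i))$ to show that $Z(C,L)$ is within a $(1\pm\eps)$ multiplicative factor of $Z(C)$, and then read off the running time from the chosen value of $L$.

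First, I would apply Lemma~\ref{main-lem} to each of the $n$ subinstances $C_i$. Pinning a variable to $1$ in a Read-$5$-Mon-CNF produces another Read-$5$-Mon-CNF (clauses satisfied by the pinning are removed, and variable occurrence counts can only decrease), so the hypothesis of the lemma is satisfied for every $C_i$, yielding
\[
\abs{R(C_i, x_i, L) - R(C_i, x_i)} \leq 5\sqrt{6}\,\alpha^L.
\]
Since $R(C_i, x_i) \geq 0$ and therefore $1 + R(C_i, x_i) \geq 1$, this promotes to a per-factor multiplicative bound
\[
\abs{\frac{1 + R(C_i, x_i, L)}{1 + R(C_i, x_i)} - 1} \leq \frac{\abs{R(C_i, x_i, L) - R(C_i, x_i)}}{1 + R(C_i, x_i)} \leq 5\sqrt{6}\,\alpha^L.
\]

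Next, I would multiply these bounds over $i = 1, \dots, n$. With the prescribed $L = \log_{\alpha}\!\big(\eps/(10\sqrt{6}\,n)\big)$ we have $5\sqrt{6}\,\alpha^L = \eps/(2n)$, so
\[
\abs{\frac{Z(C,L)}{Z(C)} - 1} \leq \Bigl(1 + \frac{\eps}{2n}\Bigr)^n - 1 \leq e^{\eps/2} - 1 \leq \eps
\]
for $\eps \in (0,1)$, which is the desired $(1\pm\eps)$-approximation. For the running time, the discussion preceding Lemma~\ref{main-lem} shows that a single evaluation of $R(C_i, x_i, L)$ costs $O(n\cdot 16^L)$, and we perform $n$ such evaluations, giving $O(n^2 \cdot 16^L)$. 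Plugging in $L$,
\[
16^L = \bigl(10\sqrt{6}\,n/\eps\bigr)^{\log_{1/\alpha}(16)} = \Theta\!\bigl((n/\eps)^{\log_{1/\alpha}(16)}\bigr),
\]
which matches the claimed bound.

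The substantive work is already packaged in the correlation decay lemma; this theorem is a composition argument, so the main obstacle is really only bookkeeping of constants and a sanity check that pinning preserves the Read-$5$-Mon-CNF structure so the lemma applies uniformly to every $C_i$. The one place one must be slightly careful is choosing the approximation target per factor to be $\eps/(2n)$ rather than $\eps/n$, which is what makes the $(1 + \eps/(2n))^n$ bound land cleanly below $1 + \eps$ via the elementary inequality $e^{\eps/2} - 1 \leq \eps$ for $\eps < 1$; this drives the constant $10\sqrt{6}$ appearing inside the logarithm defining $L$.
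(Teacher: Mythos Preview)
Your proposal is correct and follows essentially the same route as the paper's own proof: apply Lemma~\ref{main-lem} to each pinned subinstance $C_i$, use $1+R(C_i,x_i)\geq 1$ to turn the additive error into a per-factor multiplicative error of at most $\eps/(2n)$, multiply over the $n$ factors, and read off the running time from the cost $O(n\cdot 16^L)$ of a single call. Your explicit remark that pinning to $1$ keeps the instance Read-$5$-Mon-CNF, and your use of $(1+\eps/(2n))^n-1\le e^{\eps/2}-1\le\eps$ (which also dominates the lower deviation $1-(1-\eps/(2n))^n$), make the bookkeeping slightly more explicit than the paper's appendix, but the argument is the same.
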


%

Since it is a common procedure to carry out the proof details, we leave it to the appendix.



\section{Correlation Decay}
\label{amortized-proof}
In this section, we shall prove  Lemma \ref{main-lem}, the key correlation decay lemma.
To prove such an exponential correlation decay, the most common method is to use induction and prove that the error is decreased by a constant factor along each recursion step.
Unfortunately, this is not true in our case. Instead we perform an  amortized analysis on the decay rate by a potential function.
We choose $\varphi(x) \triangleq 2 \sinh ^{-1}\left(\sqrt{x}\right)$
to map the values $R(C,x,L), R(C,x)$ into a new domain and prove the following inequality after mapping:
\begin{equation}
\label{main-lem-equation2}
\abs{ \varphi \circ R(C,x,L) -  \varphi \circ R(C,x)} \leq 5 \alpha^L.
\end{equation}
The fact that condition (\ref{main-lem-equation2}) $\implies$ (\ref{main-lem-equation}) is due to Mean Value Theorem.
Since $0\leq R(C,x,L), R(C,x) \leq 1$, we have $0\leq \varphi \circ R(C,x,L),  \varphi \circ R(C,x) < 2$. As a result,  $\exists \bar{y}: 0 \leq \bar{y} < 2$ such that
\[  |R(C,x,L) -   R(C,x)| = \frac{\partial \varphi^{-1}(y)}{\partial y}\Big|_{y=\bar{y}} \cdot  | \varphi \circ R(C,x,L) -  \varphi \circ R(C,x) | \leq \sqrt{6} \cdot 5 \alpha^L= 5 \sqrt{6} \alpha^L, \]
where the inequality uses the fact that  $\frac{\partial \varphi^{-1}(y)}{\partial y} = \sqrt{y(1+y)} < \sqrt{6}$ for $0\leq y < 2$.

Since the case $d_x(C)=5$ is applied at most once at the root, we have $d_x(C)\leq 4$ for all the other nodes. 
Hence we first prove the following stronger bound for the $d_x(C)\leq 4$ case:

\begin{equation}\label{equ:bound-d4}
\abs{ \varphi \circ R(C,x,L) -  \varphi \circ R(C,x)} \leq 2 \alpha^L, \ \mathrm{ for }\ d_x(C) \leq 4.
\end{equation}
The fact that (\ref{equ:bound-d4}) $\implies$ (\ref{main-lem-equation2}) shall be shown later in (\ref{main-lem-d5}).
Now we prove (\ref{equ:bound-d4}) by induction on $L$.
For the base case $L=0$,  since $0\leq \varphi \circ R(C,x,L),  \varphi \circ R(C,x) < 2$, it is clear that  $\abs{ \varphi \circ R(C,x,L) -  \varphi \circ R(C,x)}< 2$.

Supposing the induction hypothesis holds for $L<l$, we prove it is true for $L=l$.
If $x$ is a free variable in $C$, i.e. $d=0$, $R(C, x, L) = R(C, x) = 1$. 
And if $x$ can be inferred (due to $w_j=0$ for some $j$), $R(C,x,L) = R(C,x) = 0$. 
In the following, we assume that $1\leq d \leq 4$ and $w_j\geq 1$.

Denote
$ h(\mathbf{r}) \triangleq \prod_j^d \left( 1 - \prod_i^{w_j} \frac{r_{j,i}}{1+r_{j,i}} \right) $, which is the analytic version of the recursion.
Let $\mathbf{y}$ be the true vector with $y_{j,i} = \varphi \circ R\left(C_{j,i},\ x_{j,i}\right)$ and $\mathbf{\hat{y}}$ be the estimated vector with $\hat{y}_{j,i} = \varphi \circ R\left(C_{j,i},\ x_{j,i},\ \max(0,L - \lceil \log_4 (w_j + 1) \rceil) \right)$.
We abuse notations here and denote $\mathbf{r} \triangleq \varphi^{-1} ( \mathbf{y} )$ for $r_{j,i} = \varphi^{-1}(y_{j,i})$, which is applying $\varphi^{-1}$ entry-wise to $\mathbf{y}$, similarly for $\mathbf{\hat{r}} \triangleq \varphi^{-1} ( \mathbf{\hat{y}} )$.
Then $\varphi \circ R(C,x,L)=\varphi \circ h (\mathbf{r})$ and $\varphi \circ R(C,x)=\varphi \circ h (\mathbf{\hat{r}})$.

Now by Mean Value Theorem, $\exists \gamma: 0\leq \gamma \leq 1, \mathbf{\tilde{y}} =\gamma \mathbf{y} + (1-\gamma) \mathbf{\hat{y}}$ such that, let $\mathbf{\tilde{r}} \triangleq \varphi^{-1} (\mathbf{\tilde{y}}) $,
\begin{align*}
\varphi\circ R(C,x,L) - \varphi\circ R(C,x) & = \sum_{a,b} \frac{ \partial (\varphi \circ h \circ \varphi^{-1}) }{\partial y_{a,b}}\Big|_ {\mathbf{y}=\mathbf{\tilde{y}}} \cdot( \hat{y}_{a,b}- y_{a,b})  \\
&=\sum_{a,b} \frac{\Phi(h(\mathbf{\tilde{r}}))}{\Phi(\tilde{r}_{a,b})} \left( \frac{\partial h}{\partial \tilde{r}_{a,b}}\Big|_{\mathbf{r}=\mathbf{\tilde{r}}}\right) \cdot ( \hat{y}_{a,b}- y_{a,b}),
\end{align*}
where $\Phi(x) \triangleq \D{\varphi(x)}{x}= \frac{1}{\sqrt{x(1+x)}}$. Now by induction hypothesis, we have
\[|\hat{y}_{a,b}- y_{a,b}|\leq 2 \alpha^{\max(0,L - \lceil \log_4 (w_a + 1) \rceil)} \leq 2 \alpha^{L - \lceil \log_4 (w_a + 1) \rceil}.\]
Substituting this into the above equation, we have
\begin{align*}
|\varphi\circ R(C,x,L) - \varphi\circ R(C,x)| \leq
2 \alpha^L \sum_{a,b} \frac{\Phi(h(\mathbf{\tilde{r}}))}{\Phi(\tilde{r}_{a,b})} \Big| \frac{\partial h}{\partial \tilde{r}_{a,b}}\Big|  \alpha^ {- \lceil \log_4 (w_a + 1) \rceil}.
\end{align*}
%
%
Let $\underline{w}=\set{w_a}_{a=1}^d$, it is sufficient to show that the amortized decay rate
\[\hat{\kappa}_d^{\underline{w}} (\mathbf{r}) \triangleq \sum_{a,b} \frac{\Phi(h(\mathbf{r}))}{\Phi(r_{a,b})} \Big| \frac{\partial h}{\partial r_{a,b}}\Big|  \alpha^ {- \lceil \log_4 (w_a + 1) \rceil} \leq 1\]
for any $d\leq 4$,
$\underline{w} \geq \mathbf{1}$,
and $\mathbf{0} \leq \set{r_{a,b}}_{1\leq a \leq d, 1\leq b \leq w_a} \leq \mathbf{1}$.
There are several difficulties to prove this inequality.
First of all, the variables $\set{r_{a,b}}$ are not totally symmetric: there are $d$ groups with the $a$-th group again has $w_a$ variables. Secondly, the term $\alpha^ {- \lceil \log_4 (w_a + 1) \rceil}$ is a discontinuous function, which behaves quite differently for small or large $w_a$s. To overcome these difficulties and carry out the proof, we introduce some new proof ideas and the following is an outline:
\begin{itemize}
\item In Claim \ref{max_equivalence}, we show that
in the worst case the two-layer decay rate $\hat{\kappa}_d$ is equivalent to a single-layer rate $\kappa_d$,
via Karamata's Inequality~\cite{kadelburg2005inequalities}. Basically, for each group of variables $\set{r_{a,b}}_{b=1}^{w_a}$, the maximum is achieved when all but one variables take the boundary value $1$.
\item We show this artificial single-layer rate is sub-additive in Claim \ref{decomposeKappa}, which enables separating groups with larger $w_a$ from smaller ones, and dealing with them separately.
\item Finally, with careful numerical analytics on small and large groups respectively in Claim \ref{numeric_kappa}, we complete the proof by combining them with Claim \ref{decomposeKappa}.
\end{itemize}

Before detailing the proof, we first do a change of variables to simplify notations. Let $\hat{t}_{j,i} \triangleq \frac{r_{j,i}}{1+r_{j,i}}$, which is just the marginal probability as $r_{j,i}$ is the ratio of  marginal probability.  Then   $0\leq \hat{t}_{j,i} \leq \frac{1}{2}$, function $h$ becomes  $\hat{h}(\mathbf{\hat{t}}) \triangleq \prod_j^d \left( 1 - \prod_i^{w_j} \hat{t}_{j,i} \right)$, and the function $\hat{\kappa}$ becomes
\begin{align*}
\hat{\kappa}_d^{\underline{w}} (\mathbf{\hat{t}})
\triangleq &\sum_{a,b} \frac{\Phi(h(\mathbf{r}))}{\Phi(r_{a,b})} \abs{\frac{\partial h}{\partial r_{a,b}}} \alpha^{- \lceil \log_4 (w_a + 1) \rceil} \\
= &\sqrt{\frac{\hat{h}(\mathbf{\hat{t}})}{1+\hat{h}(\mathbf{\hat{t}})}} \cdot \sum_a^d \frac{\alpha^{- \lceil \log_4 (w_a + 1) \rceil}  \prod_i^{w_a} \hat{t}_{a,i} }{1-\prod_i^{w_a} \hat{t}_{a,i}} \cdot \sum_b^{w_a} \frac{1 - \hat{t}_{a,b}}{\sqrt{\hat{t}_{a,b}}} \end{align*}
where $\underline{w}$ is the ascending ordered sequence $\set{w_a}_{a=1}^d$.
Next we introduce the single-layer rate $\kappa_d$,
\begin{equation}
\kappa_d^{\underline{w}} (\mathbf{t}) \triangleq
\sqrt{\frac{\prod_j^d \left( 1 - \frac{t_j}{2^{w_j - 1}}\right)}{1+\prod_j^d \left( 1 - \frac{t_j}{2^{w_j - 1}}\right)}} \cdot
\sum_a^d \frac{\frac{t_a}{2^{w_a - 1}}}{1-\frac{t_a}{2^{w_a - 1}}} \cdot
\left(\frac{1-t_a}{\sqrt{t_a}} + \frac{w_a-1}{\sqrt{2}}\right) \cdot \alpha^{- \lceil \log_4 (w_a + 1) \rceil}.
\end{equation}

Note that $\kappa_d$ is essentially fixing for each $a$ and each $b>1$, $\hat{t}_{a,b} = \frac{1}{2}$, leaving only $\hat{t}_{a,1}$ free in $\hat{\kappa}_d$ and renamed as $t_a$ .
Clearly
$\max_{\mathbf{t}}\set{\kappa_d^{\underline{w}}(\mathbf{t})} \leq \max_{\hat{\mathbf{t}}}\set{\hat{\kappa}_d^{\underline{w}}(\hat{\mathbf{t}})}$, we prove they are indeed equal.

\begin{claim}
\label{max_equivalence}
$\max_{\mathbf{t}}\set{\kappa_d^{\underline{w}}(\mathbf{t})} = \max_{\hat{\mathbf{t}}}\set{\hat{\kappa}_d^{\underline{w}}(\hat{\mathbf{t}})}$.
\end{claim}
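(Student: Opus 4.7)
The direction $\max_{\mathbf{t}} \kappa_d^{\underline{w}}(\mathbf{t}) \leq \max_{\hat{\mathbf{t}}} \hat{\kappa}_d^{\underline{w}}(\hat{\mathbf{t}})$ is immediate, as already noted, since $\kappa_d$ arises from $\hat{\kappa}_d$ by restricting $\hat{t}_{a,b} = 1/2$ for every $a$ and every $b \geq 2$. My plan is to prove the reverse inequality by a group-by-group optimization: given any feasible $\hat{\mathbf{t}}$, I will produce a $\mathbf{t}$ with $\hat{\kappa}_d^{\underline{w}}(\hat{\mathbf{t}}) \leq \kappa_d^{\underline{w}}(\mathbf{t})$, simply by rearranging values within each group $a$ while preserving $P_a := \prod_i \hat{t}_{a,i}$.

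The key observation is that $\hat{h} = \prod_j(1-P_j)$, hence the factor $\sqrt{\hat{h}/(1+\hat{h})}$ and the prefactor $\alpha^{-\lceil \log_4(w_a+1)\rceil} P_a/(1-P_a)$ in the $a$-th summand of $\hat{\kappa}_d$ both depend on the subvector $\hat{\mathbf{t}}_a$ only through $P_a$, while the summands for $a' \neq a$ do not involve $\hat{\mathbf{t}}_a$ at all. Consequently, with all $P_j$ held fixed, maximizing $\hat{\kappa}_d^{\underline{w}}$ over $\hat{\mathbf{t}}_a$ reduces to maximizing the pure sum $S_a := \sum_{b=1}^{w_a} \frac{1-\hat{t}_{a,b}}{\sqrt{\hat{t}_{a,b}}}$ subject to $\prod_b \hat{t}_{a,b} = P_a$ and $\hat{t}_{a,b} \in (0, 1/2]$. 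Substituting $u_b := -\log \hat{t}_{a,b} \in [\log 2, \infty)$ turns the product constraint into a linear one, $\sum_b u_b = -\log P_a$, and transforms the objective into $\sum_b g(u_b)$ where $g(u) := 2\sinh(u/2)$ is strictly convex on $[\log 2, \infty)$ (since $g''(u) = g(u)/4 > 0$).

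By Karamata's inequality, the maximum of $\sum_b g(u_b)$ over this constraint polytope is attained at its unique (up to permutation) vertex $(-\log P_a - (w_a-1)\log 2,\, \log 2,\, \ldots,\, \log 2)$, which lies in $[\log 2, \infty)^{w_a}$ because $\hat{t}_{a,b} \leq 1/2$ forces $P_a \leq 2^{-w_a}$. Translating back, the optimizer sets $\hat{t}_{a,1} = t_a := 2^{w_a-1} P_a \in (0, 1/2]$ and $\hat{t}_{a,b} = 1/2$ for $b \geq 2$, yielding $S_a = \frac{1-t_a}{\sqrt{t_a}} + \frac{w_a-1}{\sqrt{2}}$; substituting $P_a = t_a/2^{w_a-1}$ into the prefactor and $\hat{h}$, the $a$-th summand of $\hat{\kappa}_d$ becomes exactly the $a$-th summand of $\kappa_d^{\underline{w}}(\mathbf{t})$ and $\hat{h}$ becomes $\prod_j(1 - t_j/2^{w_j-1})$. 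Performing this symmetry-breaking simultaneously for every $a$ yields $\hat{\kappa}_d^{\underline{w}}(\hat{\mathbf{t}}) \leq \kappa_d^{\underline{w}}(\mathbf{t})$, which combined with the reverse inequality gives equality of the maxima. The main conceptual point — and the one place I expect to have to argue carefully — is the initial decoupling that allows each group to be optimized independently; once that is in place the Karamata step is standard, modulo noting that boundary degeneracies with some $\hat{t}_{a,b} = 0$ (and hence $P_a = 0$) trivialize since the $a$-th summand then vanishes by continuity.
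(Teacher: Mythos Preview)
Your proposal is correct and essentially identical to the paper's proof: both fix the products $P_a=\prod_b \hat t_{a,b}$, reduce to maximizing $\sum_b (1-\hat t_{a,b})/\sqrt{\hat t_{a,b}}$ under the product constraint, pass to logarithms, and apply Karamata's inequality via convexity of (what you write as) $g(u)=2\sinh(u/2)$ to push all but one coordinate to the boundary $\hat t_{a,b}=1/2$. Your explicit decoupling remark and the handling of the degenerate case $P_a=0$ are fine; the only cosmetic difference is the sign convention $u_b=-\log\hat t_{a,b}$ versus the paper's $p_b=\ln\hat t_{a,b}$.
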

\begin{proof}
We only need to prove that for any $\hat{\mathbf{t}}$, there exist a $\mathbf{t}$ such that $\kappa_d^{\underline{w}}(\mathbf{t}) \geq \hat{\kappa}_d^{\underline{w}}(\hat{\mathbf{t}})$.
For any given $\hat{\mathbf{t}}$, we define $\tilde{\mathbf{t}}$ and $\mathbf{t}$  as follows:  for each $a$, $\tilde{t}_{a,b} = \frac{1}{2}$ for  $b>1$ and
$t_a= \tilde{t}_{a,1} =2^{w_a-1} \prod_{b=1}^{w_a}\hat{t}_{a,b}$.
By definition, we have $\prod_{b=1}^{w_a}\hat{t}_{a,b}= \prod_{b=1}^{w_a}\tilde{t}_{a,b}$ for each $a$ and $\kappa_d^{\underline{w}}(\mathbf{t})= \hat{\kappa}_d^{\underline{w}}(\tilde{\mathbf{t}})$. Thus, it is sufficient to prove that
$\hat{\kappa}_d^{\underline{w}}(\hat{\mathbf{t}}) \leq \hat{\kappa}_d^{\underline{w}}(\tilde{\mathbf{t}})$.
By the expression of $\hat{\kappa}_d^{\underline{w}}$ and the fact that $\prod_{i=1}^{w_a}\hat{t}_{a,i}= \prod_{i=1}^{w_a}\tilde{t}_{a,i}$, we only need to prove that for each $a=1,2, \cdots, d$,
\[\sum_b^{w_a} \frac{1 - \hat{t}_{a,b}}{\sqrt{\hat{t}_{a,b}}} \leq \sum_b^{w_a} \frac{1 - \tilde{t}_{a,b}}{\sqrt{\tilde{t}_{a,b}}}.\]
We shall prove this by Karamata's Inequality~\cite{kadelburg2005inequalities}, which is the opposite direction of Jensen's Inequality. First we do a change of variables. For a fixed $a$, let
$p_{b} \triangleq \ln (\hat{t}_{a,b})$, $q_{b} \triangleq \ln (\tilde{t}_{a,b})$ and $f(x) \triangleq \frac{1-e^x}{\sqrt{e^x}}$,
clearly we have
$\sum_b p_{b} = \ln (\prod_b \hat{t}_{a,b}) = \ln (\prod_b \tilde{t}_{a,b}) = \sum_b q_{b}$, and our goal is
 \[\sum_b^{w_a} \frac{1 - \hat{t}_{a,b}}{\sqrt{\hat{t}_{a,b}}} = \sum_b^{w_a} f(p_{b})\leq  \sum_b^{w_a} f(q_{b}) =\sum_b^{w_a} \frac{1 - \tilde{t}_{a,b}}{\sqrt{\tilde{t}_{a,b}}}.\]
Since the second derivative $f''(x) = \frac{1-e^x}{4 \sqrt{e^x}} > 0$ for $x \leq \ln \frac{1}{2}$,
$f$ is strictly convex. Thus the above inequality
immediately follows from Karamata's Inequality and the fact that the sequence $\set{q_{b}}_{b=1}^{w_a}$ always majorizes $\set{p_{b}}_{b=1}^{w_a}$ after both being reordered in descending order.

%
%
\end{proof}

In light of this, instead of $\hat{\kappa_d}$ we consider the simplified amortized decay rate $\kappa_d$ .
Denote $T(t,k) \triangleq \frac{\frac{t}{2^{k-1}} }{1-\frac{t}{2^{k-1}}}\left(\frac{1-t}{\sqrt{t}} + \frac{k-1}{\sqrt{2}}\right)$,
$\hat{h}(t,k) \triangleq 1-\frac{t}{2^{k-1}}$,
$g(h) \triangleq \sqrt{\frac{h}{1+h}}$,
$\hat{g} \triangleq g(\prod_j^d \hat{h}(t_j,w_j))$.
Clearly $\kappa_d^{\underline{w}}(\mathbf{t})$ can be re-written as
$\kappa_d^{\underline{w}}(\mathbf{t}) = \hat{g}  \cdot \sum_a^d \frac{T(t_a,w_a)}{ \alpha^{ \lceil \log_4 (w_a + 1) \rceil}}$.

\begin{claim}[Sub-additivity]
\label{decomposeKappa}
$\kappa_d^{\underline{w}} (\mathbf{t}) \leq \kappa_{d-1}^{\underline{w}\setminus (w_a)} (\mathbf{t} \setminus \set{t_a}) + \kappa_{1}^{(w_a)}(\set{t_a})$.
\end{claim}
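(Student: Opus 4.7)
The plan is to reduce the claimed sub-additivity to a simple monotonicity argument on the prefactor $g$. First I would introduce compact notation: let $H' \triangleq \prod_{j \neq a}\hat h(t_j,w_j)$, $h \triangleq \hat h(t_a,w_a)$, $S' \triangleq \sum_{j \neq a} T(t_j,w_j)/\alpha^{\lceil \log_4(w_j+1)\rceil}$, and $T' \triangleq T(t_a,w_a)/\alpha^{\lceil \log_4(w_a+1)\rceil}$. Using $\hat g = g(H' h)$, the desired inequality becomes
\begin{equation*}
g(H' h)\,(S' + T') \;\leq\; g(H')\,S' + g(h)\,T'.
\end{equation*}

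Next I would verify the two structural facts the argument needs. Since $0 \leq t_j \leq 1$ and $w_j \geq 1$ throughout, each factor $\hat h(t_j,w_j) = 1 - t_j/2^{w_j-1}$ lies in $[0,1]$, so both $H'$ and $h$ lie in $[0,1]$; and each $T(t_j,w_j)$ is non-negative, so $S', T' \geq 0$. The key observation is then that $g(x) = \sqrt{x/(1+x)}$ is strictly increasing on $[0,\infty)$, since $g'(x) = \tfrac{1}{2}(1+x)^{-3/2} x^{-1/2} > 0$. From $H' \leq 1$ we get $H'h \leq h$, hence $g(H'h) \leq g(h)$; and from $h \leq 1$ we get $H'h \leq H'$, hence $g(H'h) \leq g(H')$.

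The conclusion is then immediate: split the left-hand side as $g(H'h)\,S' + g(H'h)\,T'$ and apply the first bound to the $S'$ summand and the second to the $T'$ summand. I do not anticipate a real obstacle here; the only subtle point is the decoupling of the single shared prefactor $g(H'h)$ against the two different summands, which is exactly what the two complementary monotonicity bounds provide. The degenerate case $d = 1$ is a trivial equality, with $S' = 0$ and $H'$ interpreted as the empty product $1$.
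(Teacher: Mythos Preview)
Your proposal is correct and essentially identical to the paper's proof: both split the sum $\sum_j T(t_j,w_j)/\alpha^{\lceil\log_4(w_j+1)\rceil}$ into the $a$-th term and the rest, then use the monotonicity of $g$ together with $\hat h(t_j,w_j)\leq 1$ to bound the shared prefactor $\hat g = g(\prod_j \hat h(t_j,w_j))$ by $g(\hat h(t_a,w_a))$ on one piece and by $g(\prod_{j\neq a}\hat h(t_j,w_j))$ on the other. The only cosmetic difference is that you name the pieces $H',h,S',T'$ and spell out $g'>0$ explicitly; one minor slip is that the relevant range is $0\leq t_j\leq \tfrac12$ rather than $\leq 1$, but this does not affect the argument.
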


\begin{proof}
$g(h)$ is monotonically increasing, and by $\hat{h}(t_j,w_j)\leq 1$, we have
$\forall a, \prod_j^d \hat{h}(t_j,w_j) \leq \hat{h}(t_a,w_a), \prod_j^d \hat{h}(t_j,w_j) \leq \prod_{j \neq a}^d \hat{h}(t_j,w_j)$,  thus $\forall a$,
$ \hat{g} \leq g(\hat{h}(t_a, w_a)) \textrm{ and }\ \hat{g} \leq g(\prod_{j \neq a}^d \hat{h}(t_j, w_j))$.
Hence
$\hat{g}  \cdot \sum_j^d \frac{T(t_j,w_j)}{\alpha^{ \lceil \log_4 (w_j + 1) \rceil}} \leq g(\hat{h}(t_a,w_a)) \cdot \frac{T(t_a,w_a)}{\alpha^{ \lceil \log_4 (w_a + 1) \rceil}} + g(\prod_{j \neq a}^d \hat{h}(t_j,w_j))\cdot \sum_{j\neq a}^d \frac{T(t_j,w_j)}{\alpha^{ \lceil \log_4 (w_j + 1) \rceil}}$.
\end{proof}

\begin{claim}[Numerical Bounds]
\label{numeric_kappa}
For any $\mathbf{t}$ with $\mathbf{0} \leq \mathbf{t} \leq \mathbf{\frac{1}{2}}$, 
\begin{itemize}
  \item For $\underline{w} < \mathbf{4}$, $\kappa_4^{\underline{w}}(\mathbf{t}) < 1$,
  $\kappa_3^{\underline{w}}(\mathbf{t}) < 0.85$,
  $\kappa_2^{\underline{w}}(\mathbf{t}) < 0.67$,
  $\kappa_1^{\underline{w}}(\mathbf{t}) < 0.42$.
  \item For $w_1 \geq 4$, $\kappa_1^{(w_1)}(\mathbf{t}) < 0.14$.
\end{itemize}
\end{claim}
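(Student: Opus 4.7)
}
The statement is a finite-dimensional extremal problem: for each admissible $\underline{w}$, the function $\kappa_d^{\underline{w}}(\mathbf{t})$ is an explicit smooth function of $\mathbf{t}\in[0,\tfrac12]^d$ with $d\le 4$. The plan is to split into the two regimes of the claim and handle them by slightly different means. In the first regime all $w_a\in\{1,2,3\}$, so $\underline{w}$ ranges over a small finite set; in the second regime there is only one variable but $w$ is unbounded, and one needs a uniform decay estimate in $w$.

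For the small-$w$ bullet, my plan is to first argue that the combinatorial enumeration is manageable: for each $d\in\{1,2,3,4\}$ there are at most $\binom{d+2}{2}$ ascending sequences $\underline{w}\in\{1,2,3\}^d$, so the verification is a finite collection of maximization problems of a smooth function on the cube $[0,\tfrac12]^d$. I would attempt to reduce to the boundary and to critical points by computing $\partial\kappa_d^{\underline{w}}/\partial t_a$ and showing sign patterns; failing that, it is enough to partition each cube into small subcells and bound on each subcell using the monotonicity of $g(h)=\sqrt{h/(1+h)}$ and of $T(t,w)$ in its relevant arguments, since the tightest quantities $\alpha^{-\lceil\log_4(w+1)\rceil}$ are just constants once $\underline{w}$ is fixed. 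The bounds $0.42,0.67,0.85,1$ are organized so that the hardest case is $d=4$, $\underline{w}=(3,3,3,3)$, which is expected to be nearly saturated and therefore requires the most careful estimate.

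For the large-$w$ bullet, I would exploit explicit decay in $w$. Use $\hat g = g(\hat h(t,w))\le g(1)=1/\sqrt2$, and the elementary inequality $\tfrac{t/2^{w-1}}{1-t/2^{w-1}}\le \tfrac{t/2^{w-1}}{1-2^{-w}}$ for $t\le\tfrac12$; combined with $\tfrac{1-t}{\sqrt t}+\tfrac{w-1}{\sqrt 2}\le \tfrac{1}{\sqrt t}+\tfrac{w}{\sqrt 2}$, this yields $T(t,w)=O(w\, 2^{-w})$ uniformly in $t\in[0,\tfrac12]$ (up to the $t=0$ endpoint, which gives $T=0$). Against this, $\alpha^{-\lceil\log_4(w+1)\rceil}\le \alpha^{-1-\log_4(w+1)}=\alpha^{-1}(w+1)^{\log_4(1/\alpha)}$ grows polynomially but with exponent $\log_4(1/0.981)\approx 0.014$, which is negligible compared with $2^{-w}$. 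The threshold case is $w=4$, where a direct numerical check (which I would include explicitly) gives $\kappa_1^{(4)}(t)<0.14$; for $w\ge 5$ the monotone decay in $w$ of the overall bound takes care of the rest, so it suffices to verify $\kappa_1^{(w+1)}(t)\le\kappa_1^{(w)}(t')$ for a suitable reparameterization, or simply to check a handful of values of $w$ until the ratio is safely below $1$.

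The main obstacle is the borderline case $d=4$, $\underline{w}=(3,3,3,3)$ in the first regime: the inequality $\kappa_4^{(3,3,3,3)}(\mathbf{t})<1$ is almost certainly close to tight, so loose bounds will not do. I expect to need either a careful critical point analysis (solving $\nabla\kappa_4 = 0$ and showing the maximizer lies in the interior with value below $1$) or a subdivision of $[0,\tfrac12]^4$ into a few regions on each of which crude monotonicity gives the bound. Once that case is handled the remaining cases should follow by the same method with more room to spare, and the large-$w$ case is straightforward by the decay estimate.
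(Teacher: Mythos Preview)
Your plan has the right overall shape (finite case enumeration for $\underline{w}<\mathbf{4}$, asymptotic decay for $w\ge 4$), and your large-$w$ treatment is essentially the same as the paper's. But the small-$w$ part contains a concrete misconception that would derail you, and it is missing the structural tools the paper relies on.

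First, your intuition about the tight case is reversed. Since $\lceil\log_4(w+1)\rceil=1$ for all $w\in\{1,2,3\}$, the $\alpha$-factor is constant across the first bullet and plays no role in which $\underline{w}$ is extremal. What actually governs difficulty is $T(t,w)$, which \emph{decreases} in $w$; the case $\underline{w}=(3,3,3,3)$ is one of the easiest, with $\Lambda_4(3,3,3,3)<0.76$. The genuinely tight cases are the ones dominated by $w_a=1$: for instance $\Lambda_4(1,1,2,2)$ and $\Lambda_4(1,1,1,2)$ come within $10^{-3}$ of the bound $1$. A crude subcell partition on $[0,\tfrac12]^4$ with monotonicity in $g$ and $T$ will not resolve these without an unmanageable mesh, since the slack is of order $10^{-3}$ and four independent variables remain.

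Second, the paper does not attack the $4$-dimensional problem directly. It first proves a concavity/Jensen reduction (Proposition~\ref{equalKappa}): when $w_i=w_j\le 2$, the maximum of $\kappa_d$ is attained with $t_i=t_j$ equal to a specific value preserving $\hat h(t_i,w_i)\hat h(t_j,w_j)$. This collapses, e.g., $(1,1,1,1)$ to a single-variable problem and $(1,1,2,2)$ to two variables. It then proves that each $\partial\kappa_d/\partial t_j$ is monotone in every coordinate (Proposition~\ref{monotonicity_deriv}), so zeros of the derivative can be bracketed by binary search and certified by evaluating at the two endpoints; for the surviving two-variable cases, one variable is discretized into a handful of intervals and the other is handled this way. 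It also uses the sub-additivity Claim~\ref{decomposeKappa} to peel off a coordinate when that already suffices (e.g.\ $\Lambda_4(1,1,3,3)\le\Lambda_3(1,1,3)+\Lambda_1(3)$). Without an analogue of these dimension-reduction steps, your ``sign patterns of $\partial\kappa_d/\partial t_a$'' idea leads to a coupled nonlinear system in four unknowns for exactly the cases where the margin is smallest, and your fallback subcell bound will not be sharp enough there. I would incorporate an equal-$t$ reduction and a monotone-derivative argument before attempting the enumeration.
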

The proof of this claim is quite complicated and involves many case-by-case analysis, which we defer to the appendix.
%

Now we are ready to complete the rest of the proof for Lemma~\ref{main-lem}.

{\bf\noindent Proof of Lemma~\ref{main-lem}.}
Given the bound (\ref{equ:bound-d4}),
the case where $d_x(C)=5$ immediately follows by Claim \ref{numeric_kappa} that $ \kappa_1^{(w)}(\mathbf{t}) < \frac{1}{2}$, and Claim \ref{decomposeKappa},
\begin{equation}
\label{main-lem-d5}
\abs{ \varphi \circ R(C,x,L) -  \varphi \circ R(C,x)} \leq \kappa_5^{\underline{w}}(\mathbf{t}) \cdot 2 \alpha^L \leq 5 \kappa_1^{(w)}(\mathbf{t}) \cdot 2 \alpha^L \leq 5 \alpha^L.
\end{equation}
It remains to combine the small-degrees and large-degrees of $\underline{w}$ and get $\kappa_d^{\underline{w}} (\mathbf{t}) < 1$ for $d \leq 4$.

We choose $M = 4$, let $\underline{w}^{+}$ be the sub-sequence of those large entries (i.e. any entry greater or equal to $M$) in $\underline{w}$ , and similarly for $\underline{w}^{-}$, or formally,
\[\underline{w}^{+} \triangleq \set{w_j : w_j \geq M} \ \mathrm{  and   }\  \ \underline{w}^{-} \triangleq \set{w_j : w_j < M},\]
\[t^{+} \triangleq \set{t_j : w_j \geq M}\  \mathrm{  and   }\ \ t^{-} \triangleq \set{t_j : w_j < M}.\]

Next we combine them case by case by Claim \ref{numeric_kappa} and \ref{decomposeKappa}.
\begin{itemize}
\item
 No entry of $\underline{w}$ is large, i.e. $\underline{w} < \mathbf{M}$.
We have $\kappa_d^{\underline{w}}(\mathbf{t}) < 1$ for $d=1,2,3,4$;

\item
 All entries of $\underline{w}$ are large, i.e. $\underline{w} \geq \mathbf{M}$.
We have $\kappa_d^{\underline{w}}(\mathbf{t}) \leq  4 \kappa_1^{(w_a^{+})}\left(\set{t_a^{+}}\right) < 4\cdot 0.14 <1$;

\item
 Three entries of $\underline{w}$ are large.
  Since $d=3$ is the same to the above, we assume $d=4$.
$
 \kappa_4^{\underline{w}}(\mathbf{t}) \leq 3 \kappa_1^{(w_a^{+})}\left(\set{t_a^{+}}\right) + \kappa_1^{(w_a^{-})}\left(\set{t_a^{-}}\right) \leq  3\cdot 0.14 + 0.42 < 1
$;

\item
 Two entries of $\underline{w}$ are large.
    $ \kappa_4^{\underline{w}}(\mathbf{t}) \leq
     2\kappa_1^{(w_a^{+})}\left(\set{t_a^{+}}\right) + \kappa_2^{\underline{w}^{-}}(\mathbf{t}^{-}) <
     2\cdot 0.14 + 0.67 < 1$, and
     $ \kappa_3^{\underline{w}}(\mathbf{t}) \leq
      2\kappa_1^{(w_a^{+})}\left(\set{t_a^{+}}\right) + \kappa_1^{\underline{w}^{-}}(\mathbf{t}^{-}) <
     2\cdot 0.14 + 0.42<1$;

\item
 One entry of $\underline{w}$ is large.
     $ \kappa_4^{\underline{w}}(\mathbf{t}) \leq
     \kappa_1^{(w_a^{+})}\left(\set{t_a^{+}}\right) + \kappa_3^{\underline{w}^{-}}(\mathbf{t}^{-}) <
     0.14 + 0.85 < 1$,

     $ \kappa_3^{\underline{w}}(\mathbf{t}) \leq
      \kappa_1^{(w_a^{+})}\left(\set{t_a^{+}}\right) + \kappa_2^{\underline{w}^{-}}(\mathbf{t}^{-}) <
     0.14 + 0.67<1$,

     $ \kappa_2^{\underline{w}}(\mathbf{t}) \leq
      \kappa_1^{(w_a^{+})}\left(\set{t_a^{+}}\right) + \kappa_1^{\underline{w}^{-}}(\mathbf{t}^{-}) <
     0.14 + 0.42<1$.
\end{itemize}
     \qed

\section{Bounded Degree Boolean \#CSP}\label{sec:csp}
In this section, we take a broader view to study the approximability of  $\CSP^c_d(\Gamma)$ in terms of $\Gamma$ and $d$.
For $d\ge 6$, \cite{DyerGJR12} gave
a complete classification in terms of $\Gamma$: the problem is NP-hard under randomized reduction, BIS-hard (as hard as approximately counting independent sets for a bipartite graph), or in FP even for exact counting.
Basically,  there is no interesting approximable cases under these hardness assumptions.
Thus, we focus on these  $d\in \{3,4,5\}$.
In particular, our algorithm for monotone CNF with $d=5$  falls in this range and
gives an interesting approximable family. A partial classification for $d\in \{3,4,5\}$ was also given in~\cite{DyerGJR12}, and the monotone constraints are the only unknown case.

As already discussed in~\cite{DyerGJR12}, any monotone Boolean constraint can be written as a monotone CNF. Specifically, except for a few trivial cases, the expression is unique if only well-formed CNF is used.  They also introduced the notion of variable rank $k$, which is the maximum occurrence in terms of CNF clauses  within a monotone Boolean constraint in $\Gamma$. They then show that
\[\mbox{Read-$d$-Mon-CNF} \leq_{AP} \CSP^c_d(\Gamma) \leq_{AP} \mbox{Read-$k d$-Mon-CNF}.\]
 $A \leq_{AP} B$ is the approximation-preserving reduction, which means that if there is an approximate counting algorithm for problem $B$, then there is also an approximate counting algorithm for problem $A$. Therefore, our above FPTAS for Read-$5$-Mon-CNF made an important step towards a full classification for $\CSP^c_d(\Gamma)$.
 In particular, if $k=1$ we have a complete dichotomy. For $k \geq 2$, there is still gap in between if $d\leq 5$ while $k d \geq 6$,
for which we identify two fundamental families of constraints.
Consider the unique well-formed CNF of such a constraint, there is a variable $x$ that appears in at least two clauses $c_1,c_2$.
Since they are not singleton and no one is a subset of the other, we can find two variables,
$y$ from  $c_1$, $z$ from $c_2$, such that $y$ does not appear in $c_2$ and $z$ does not appear in $c_1$.
For well-formed monotone CNF, there is a suitable pinning for all other variables to isolate these three variables.
After pinning, at least two clauses remain: $x\vee y$ obtained from $c_1$ and $x\vee z$ from $c_2$.
All other possible clauses are $y\vee z$.
If it is not present, denote the constraint by $S_2(x,y,z)= (x\vee y) \wedge (x\vee z)$; if $y\vee z $ is present,
denote it by $K_3(x,y,z)= (x\vee y) \wedge (x\vee z) \wedge (y \vee z)$. These structures can be generalized to
higher arity with larger variable rank as  $S_k(x,y_1,y_2, \cdots, y_k)= \bigwedge_{i=1}^k  (x\vee y_i) $ and $K_s(x_1,x_2, \cdots, x_s)= \bigwedge_{1\leq i< j\leq s} (x_i\vee x_j) $. 
Thus it is crucial to understand the approximability of these two families. We investigate and get the following FPTAS and hardness result.

\begin{theorem}
\label{counting-thm2}
There is an FPTAS for $\CSP_3^c(K_4)$.
\end{theorem}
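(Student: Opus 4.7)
The plan is to follow the two-layer recursion and correlation decay framework developed in the paper for Read-$5$-Mon-CNF, adapting it to $K_4(x_1, x_2, x_3, x_4) = \bigwedge_{1 \leq i < j \leq 4}(x_i \vee x_j)$. The key structural observation is that $K_4$ evaluates to $1$ iff at most one of its four arguments is $0$: pinning any argument to $0$ forces the other three to $1$, while pinning it to $1$ collapses $K_4$ into a $K_3$ constraint on the remaining three variables. Since $K_3$ and $K_2 = x_i \vee x_j$ are themselves well-formed monotone constraints, all recursive children stay inside the monotone CSP family, and the basic bound $R(C, x) \leq 1$ carries over by the same injection argument used for Read-$d$-Mon-CNF.

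To derive the recursion I would apply the substitution trick of Lemma~\ref{recursion1}: with $d = d_x(C) \leq 3$, replace the occurrences of $x$ by independent fresh copies $\tilde x_j$, one per clause, giving
\[
R(C, x) = \prod_{j=1}^{d} R(C_j, \tilde x_j),
\]
where $C_j$ pins $\tilde x_1, \ldots, \tilde x_{j-1}$ to $1$ and $\tilde x_{j+1}, \ldots, \tilde x_d$ to $0$. Since $\tilde x_j$ appears only in $c_j = K_4(\tilde x_j, y_{j,1}, y_{j,2}, y_{j,3})$, conditioning on $\tilde x_j$ yields
\[
R(C_j, \tilde x_j) \;=\; \frac{\mathbb{P}_{C_j - c_j}(y_{j,1} = y_{j,2} = y_{j,3} = 1)}{\mathbb{P}_{C_j - c_j}\bigl(K_3(y_{j,1}, y_{j,2}, y_{j,3})\bigr)}.
\]
The numerator factorises by the chain rule into three marginals of the form $1/(1 + R(\cdot, y_{j,i}))$ on smaller instances, and the denominator factorises into a sum of four such chain-rule products, one for each satisfying pattern of $K_3$ (all three variables $1$, or exactly one equal to $0$). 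This yields the two-layer recursion: the outer layer iterates over the $\leq 3$ clauses containing $x$, and the inner layer iterates over the $3$ other variables inside each clause, each of which has at most $2$ remaining occurrences.

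I would then define a truncated tree $R(C, x, L)$ in the same spirit as the Read-$5$-Mon-CNF algorithm and establish the correlation decay analogue of Lemma~\ref{main-lem} using a potential function in the spirit of $\varphi(x) = 2\sinh^{-1}(\sqrt{x})$ (with the constant possibly tuned to $K_4$). Because the outer branching is capped at $3$ and each inner group has exactly $3$ variables, the amortised decay rate is a smooth rational function on a fixed-dimensional compact box, and no large-group phenomenon arises, so the subadditivity trick of Claim~\ref{decomposeKappa} is unnecessary. After verifying the amortised rate stays strictly below $1$ at its critical points (possibly after collapsing inner symmetries as in Claim~\ref{max_equivalence}), the FPTAS follows from $Z(C, L) = \prod_i (1 + R(C_i, x_i, L))$ exactly as in Theorem~\ref{counting-thm1}.

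The main obstacle is the inner-layer derivative bookkeeping. In the CNF case the $\tilde x_j = 0$ branch trivially satisfies the clause, so only the $\tilde x_j = 1$ branch contributes a nontrivial product; here both branches are active and the $K_3$-denominator couples the three inner children, so its partial derivatives with respect to the recursive arguments take a more intricate rational form. Extracting a contraction constant strictly less than one therefore requires a careful case analysis on which of the four $K_3$-patterns dominates the denominator, together with a tight choice of the potential; this is where the argument genuinely departs from the monotone OR case, while the remainder of the construction follows the established blueprint.
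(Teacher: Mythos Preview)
Your overall framework—two-layer recursion, the potential $\varphi(x)=2\sinh^{-1}\sqrt{x}$, truncated tree, and $Z(C,L)=\prod_i(1+R(C_i,x_i,L))$—matches the paper. The divergence is in how you set up the inner layer, and this is precisely where you create the obstacle you flag at the end.

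You propose to chain-rule the numerator $\mathbb{P}_{C_j-c_j}(y_{j,1}=y_{j,2}=y_{j,3}=1)$ and the denominator $\mathbb{P}_{C_j-c_j}\bigl(K_3(y_{j,1},y_{j,2},y_{j,3})\bigr)$ separately into products over sequentially-conditioned sub-instances, then take their ratio. That indeed couples the three children through the $K_3$ denominator and gives messy partial derivatives. The paper avoids this entirely: instead of expanding both sides separately, divide through by the numerator. The denominator is a sum over the four at-most-one-zero patterns, and each pattern ``$y_{j,i}=0$, others $=1$'' divided by the all-ones pattern is exactly $R(C_{j,i},y_{j,i})$, where $C_{j,i}$ is $C_j-c_j$ with \emph{all} $y_{j,k}$, $k\neq i$, pinned to $1$ (not just the earlier ones, as your chain-rule would give). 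One obtains
\[
R(C_j,\tilde x_j)=\frac{1}{1+\sum_{i=1}^{w_j} R(C_{j,i},y_{j,i})},\qquad h(\mathbf r)=\prod_{j=1}^{d}\frac{1}{1+\sum_i r_{j,i}},
\]
a clean product of reciprocals of affine forms, with no residual $K_3$ coupling at all. (Incidentally you have the CNF case backwards: there it is the $\tilde x_j=1$ branch that trivially satisfies the OR clause.)

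With this $h$ the amortized rate
\[
\hat\kappa_d^{\underline w}(\mathbf r)=\frac{1}{\sqrt{1+\prod_k(1+\sum_i r_{k,i})}}\sum_j\frac{\sum_i\sqrt{r_{j,i}(1+r_{j,i})}}{1+\sum_i r_{j,i}}
\]
is reduced to one variable per group by Jensen (here $\sqrt{x(1+x)}$ is concave, so the worst case has all $r_{j,i}$ in a group \emph{equal}—the opposite of the Karamata-type extremisation you invoke from Claim~\ref{max_equivalence}), after which a short finite check over $d\le 2$, $w_j\le 3$ gives $\hat\kappa<0.99$; sub-additivity is still used for asymmetric $\underline w$ and for the one-time $d=3$ root step. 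So the gap in your proposal is not the blueprint but the missed one-line algebraic simplification that dissolves the ``intricate rational form'' you anticipate.
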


As a remark, this is a slightly more general problem of 3D matching problem, where non-uniform hyperedges are also allowed.
Again, if we expand out the constraints $K_4$ in a $\CSP_3^c(K_4)$ instance to monotone CNF formulas, we get a Read-$9$-Mon-CNF.
 Although approximately counting Read-$9$-Mon-CNF in general is hard, we get an FPTAS for this sub-family by leveraging the locally clique-like structure of $K_s$.
 Thus, besides the degree $d$, variable rank $k$, one needs to study the inner structure of a monotone constraint $\Gamma$ to determine its approximability.

\begin{theorem}\label{thm-hardness}
There is no FPTAS (or FPRAS) for $\#CSP_5^c(S_2)$ unless $NP=RP$.
\end{theorem}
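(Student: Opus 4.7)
The natural route is an approximation-preserving reduction from counting vertex covers in $6$-regular graphs, which is $NP$-hard to approximate by Sly's theorem and is precisely the proposition stated in the introduction. Given a $6$-regular graph $G = (V, E)$ viewed as the monotone $2$-CNF $\phi = \bigwedge_{(u,v) \in E} (u \vee v)$, I would build a $\CSP_5^c(S_2)$ instance $\Phi$ with the same number of satisfying assignments.

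The construction uses an Eulerian orientation of $G$, which exists because every vertex has even degree; in it every vertex has in-degree and out-degree exactly $3$. Writing $u_1, u_2, u_3$ for the three in-neighbors of $v$, add to $\Phi$ the constraints $S_2(v, u_1, u_2)$ together with $S_2(v, u_3, z_v)\wedge \Delta_1(z_v)$, where $z_v$ is a fresh auxiliary variable. The first expands to $(v\vee u_1)\wedge (v\vee u_2)$; the second, once $z_v$ is pinned to $1$, collapses to $(v\vee u_3)$; between them the two encode exactly the three CNF clauses corresponding to the three in-edges at $v$. Because every edge of $G$ is an in-edge at exactly one of its two endpoints, the $|E|$ clauses of $\phi$ each appear once in $\Phi$, and since every $z_v$ is pinned the satisfying-assignment counts of $\Phi$ and $\phi$ coincide.

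For the degree bookkeeping, fix a vertex $v$. Its two centered constraints --- the cherry $S_2(v,u_1,u_2)$ and the leftover gadget $S_2(v,u_3,z_v)$ --- contribute two occurrences of $v$. Each of $v$'s three out-edges $(v,w)$ forces $v$ to appear as a non-center argument of the $S_2$ constraint placed at $w$: either $w$'s cherry if $(v,w)$ is one of the two paired in-edges of $w$, or $w$'s leftover gadget if $(v,w)$ happens to be $w$'s third in-edge. This contributes three more occurrences, giving $2+3 = 5$ in total and saturating but not exceeding the $\CSP_5^c$ budget. Each auxiliary $z_v$ appears in only $2$ constraints (one $S_2$ and one pinning), so $\Phi$ is a valid $\CSP_5^c(S_2)$ instance.

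Consequently any FPRAS for $\CSP_5^c(S_2)$ yields an FPRAS for counting vertex covers of $6$-regular graphs, forcing $NP = RP$. The plan is conceptually clean and the only point requiring care is the degree count, which works out exactly because an Eulerian orientation splits each vertex's six incident edges into a $2+1$ in-side (supplying the cherry and the leftover) and a size-$3$ out-side whose edges ``pay for themselves'' as non-center arguments of the constraints centered at the neighbors. If $G$ is disconnected, one applies the construction to each component separately, since each is still $6$-regular and Eulerian.
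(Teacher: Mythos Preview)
Your proof is correct and takes a cleaner route than the paper's. Both arguments reduce from $\CSP_6(OR_2)$ by merging, at every degree-$6$ variable, two adjacent $OR_2$ constraints into a single $S_2$, thereby saving one occurrence. The paper achieves this by opening up Sly's reduction from MAX-CUT: it reviews the explicit bipartite gadget (five random matchings between $W^+\cup U^+$ and $W^-\cup U^-$, one more between $W^+$ and $W^-$, plus adjoined $5$-ary trees and the inter-copy edges) and then, piece by piece, specifies which two matchings to left-join on each side and how to pair tree edges, verifying that every vertex in the specific construction drops to at most five occurrences. Your Eulerian-orientation trick accomplishes the same merging uniformly for \emph{any} $6$-regular instance: the orientation gives every vertex a canonical $3$/$3$ split of its incident clauses, and you pack two of the three in-clauses into a cherry $S_2$ while padding the third with a pinned dummy, so the $2+3$ count falls out automatically. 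This is more elegant and more robust, at the modest cost of invoking the $6$-regular (rather than max-degree-$6$) form of the hardness, which the paper itself asserts in the introduction. The paper's approach, by contrast, is tied to the particular gadget but does not need regularity.
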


This hardness result is proved by simulating the reduction for $\#CSP_6(OR_2)$ in~\cite{Sly10}. If the same hardness result is also proved for $\#CSP_5^c(K_3)$, one obtains a complete classification for $\CSP^c_d(\Gamma)$ with $d\geq 5$, based on our above discussion.
However, the same idea to simulate the reduction for $\#CSP_6(OR_2)$ does not work.
Specifically, a bi-partite gadget plays a crucial role for the reduction in~\cite{Sly10} while $K_3$ is inherently not bi-partite.
We leave the full classification for $\CSP^c_d(\Gamma)$ as an interesting and important open question.



\subsection{FPTAS for $\CSP_3^c(K_4)$}
The constraint $K_{d}$ states that at most one variable can be assigned as $0$. Such {\sc At-Most-One} constraint is exactly the matching constraint: every edge $e$ corresponds to a variable $x_e$, and a matching $M$ with $e \in M$ corresponds to an assignment that assigns $x_e=0$, and every vertex with degree $d$ is a constraint $K_{d}$ involving its $d$ neighboring edges.
As a result, $\CSP_2^c(K_{d})$ is essentially counting matchings in graphs of maximum degree $d$.
And $\CSP_3^c(K_4)$ contains $3$ dimensional matching over graphs with maximum degree $4$ as a special case. 

Due to pinning, we have  $\CSP_3^c(K_4) \equiv \CSP_3^c(\{K_4, K_3, K_2, K_1\})$, where $K_1(x)$ is satisfied for both $x=0,1$.
One can always normalize  (in polynomial time) $C$ in $\CSP_3^c(\{K_4, K_3, K_2, K_1\})$ to a new instance without
pinning i.e. $C'$ in $\CSP_3(\{K_4, K_3, K_2, K_1\})$, unless we can easily decide that $C$ is not satisfiable.
First we deal with variables that are pinned to $0$.
If a constraint $K_s$ has at least two variables (including two occurrences of a same variable) pinned to $0$, then it is not satisfiable.
If only $x_i$ in $K_s(x_1,x_2, \cdots, x_s)$ is pinned to $0$, all other variables $\set{x_j}_{j\neq i}^s$ must be $1$.
So we pin them to $1$ and remove this constraint.
If a variable appears more than once in a same constraint, it must be $1$ and we add a pinning to $1$ to it.
Now it only remains to tackle variables that are pinned to $1$. If $x_i$ in $K_s(x_1,x_2, \cdots, x_s)$ is pinned to $1$, we simply replace the constraint with $K_{s-1} (\set{x_j}_{j \neq i}^s)$.
It is obvious that the underlying Boolean function remains unchanged after normalizing, hence so does the number of satisfying assignments.
Since the pinned variables are constants and do not appear in any constraints after normalizing, we simply remove them and no longer see them as variables of the instance.
Also after normalizing there are no duplicate variables in a same constraint.
Now we assume the given instance $C$ is already in $\CSP_3(\{K_4, K_3, K_2, K_1\})$ as we will also perform this process during the algorithm.

%

Overall we carry out a similar scheme as that for monotone CNF: first show the recursion for $\#CSP_d^c(K_s)$ after preprocessing, and prove correlation decay up to $d=3, s=4$.
Here we re-use the same set of notations for the same concept but under the setting of $\#CSP_3^c(K_4)$.
Given a $\#CSP_d^c(K_s)$ instance $C$, we associate the set of satisfying assignments with a uniform distribution, and consider the ratio of marginal probability $R(C,x) \triangleq \frac{\mathbb{P}_C ( x=0)}{\mathbb{P}_C ( x=1)}$.


\begin{lemma}
Let $C$ be a normalized instance of  $\CSP_3(\{K_4, K_3, K_2, K_1\})$ and $d$ constraints containing $x$  be enumerated as $\set{c_j}_{j=1}^d$.
Denote $w_j \triangleq \abs{c_j}-1$, $\set{x_{j,i}}_{i=1}^{w_j}$ as the set of variables in $c_j$ except $x$,
$C_j$ be obtained from $C$ by pinning the occurrences of $x$ in $c_1, c_2, \cdots, c_{j-1}$ to $1$  and the occurrences of $x$ in $c_{j+1}, c_{j+2}, \cdots, c_{d}$ to $0$ ,
$C_{j,i}$ be further obtained from $C_j$ by removing clause $c_j$ and pinning variables  $x_{j,k}$ with $k\neq i$ (all occurrences) to $1$.
%
\[R(C,x) = \prod_{j=1}^{d}  \frac{1}{1+\sum_i^{w_j} R(C_{j,i},x_{j,i})}. \]
\end{lemma}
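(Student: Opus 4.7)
The plan is to mirror the proof of Lemma~\ref{recursion1} in structure, replacing the clause-level analysis of the monotone OR by one tailored to the at-most-one-zero constraint $K_s$. First I dispose of the degenerate boundary cases: if $d=0$, then $x$ participates in no constraint and $R(C,x)=1$, matching the empty product convention on the right; if $w_j=0$ for some $j$, then $c_j$ is a trivial $K_1(x)$ and the corresponding inner sum is empty, so its factor is $1$. In the remaining case $d\ge 1$ and every $w_j\ge 1$, the normalization procedure described just above the lemma ensures that every $C_{j,i}$ arising in the recursion is a well-defined instance of $\CSP_3(\{K_4,K_3,K_2,K_1\})$.

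For the main argument, I carry out the same duplication-and-telescope step as in Lemma~\ref{recursion1}: replace the $d$ occurrences of $x$ by independent fresh copies $\tilde{x}_1,\ldots,\tilde{x}_d$ (placing $\tilde{x}_j$ in $c_j$) to form $C'$, and write
\[
R(C,x)=\frac{\mathbb{P}_{C'}(\{\tilde{x}_j\}=\mathbf{0})}{\mathbb{P}_{C'}(\{\tilde{x}_j\}=\mathbf{1})}=\prod_{j=1}^d R(C_j,\tilde{x}_j),
\]
where $C_j$ pins $\tilde{x}_1,\ldots,\tilde{x}_{j-1}$ to $1$ and $\tilde{x}_{j+1},\ldots,\tilde{x}_d$ to $0$. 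This reduces the lemma to computing each $R(C_j,\tilde{x}_j)$ individually.

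The new per-clause ingredient exploits the structure of $K_s$: in $C_j$ the variable $\tilde{x}_j$ occurs only in $c_j = K_{w_j+1}(\tilde{x}_j, x_{j,1}, \ldots, x_{j,w_j})$, and the at-most-one-zero rule means that conditioning on $\tilde{x}_j=0$ forces all $x_{j,i}=1$, while conditioning on $\tilde{x}_j=1$ permits any configuration of $(x_{j,1},\ldots,x_{j,w_j})$ with at most one zero. Letting $\mathbb{P}'$ denote the uniform distribution on $X(C_j-c_j)$ and partitioning the latter event by which $x_{j,i}$ (if any) equals $0$, I obtain
\[
R(C_j,\tilde{x}_j)=\frac{\mathbb{P}'(\text{all }x_{j,i}=1)}{\mathbb{P}'(\text{all }x_{j,i}=1)+\sum_{i=1}^{w_j}\mathbb{P}'\bigl(x_{j,i}=0,\ x_{j,k}=1\ \forall k\ne i\bigr)};
\]
dividing numerator and denominator by the first term and recognizing each remaining quotient as $R(C_{j,i},x_{j,i})$ (the marginal ratio at $x_{j,i}$ in $C_j-c_j$ after pinning the other $x_{j,k}$ to $1$) yields $R(C_j,\tilde{x}_j)=1/\bigl(1+\sum_{i}R(C_{j,i},x_{j,i})\bigr)$. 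Multiplying over $j$ gives the lemma.

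I do not foresee a serious obstacle: both substantive steps are straightforward adaptations of arguments already in the paper, and the only mildly delicate point is verifying along the way that each intermediate pinning keeps the resulting instance within the normalized class $\CSP_3(\{K_4,K_3,K_2,K_1\})$, which follows directly from the normalization procedure. The genuine novelty compared with Lemma~\ref{recursion1} is the sum-of-singletons form that emerges in the per-clause step, reflecting that $K_s$ is satisfied by configurations with at most one zero, in contrast to the product-of-complements form that arose from the monotone OR clause.
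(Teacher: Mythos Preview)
Your proposal is correct and follows essentially the same approach as the paper: the telescoping via fresh copies $\tilde{x}_j$ is identical, and your per-clause analysis of $K_{w_j+1}$ matches the paper's, with the only cosmetic difference that you phrase the clause-level probabilities via the uniform measure on $X(C_j-c_j)$ whereas the paper keeps them as $\mathbb{P}_{C_j}$ (these agree since $\tilde{x}_j$ occurs only in $c_j$).
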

\begin{proof}
If $d=0$, $x$ is a free variable and $R(C,x) = 1$.
If $d\geq 1$, we substitute the $d$ occurrences of variable $x$ with $d$ independent variables $\set{\tilde{x}_j}_{j=1}^d$, and denote this new $\#CSP_d^c(K_s)$ by $C'$. Then we have
\[
R(C,x)=\frac{\mathbb{P}_C ( x=0)}{\mathbb{P}_C ( x=1)}
= \frac{\mathbb{P}_{C'} \left( \set{\tilde{x}_j} = \mathbf{0} \right)}{\mathbb{P}_{C'} \left( \set{\tilde{x}_j} = \mathbf{1} \right)}
= \prod_{j=1}^d \frac{\mathbb{P}_{C'} \left( \set{\tilde{x}_i}_{i=1}^{j-1} = \mathbf{1},  \set{\tilde{x}_i}_{i=j}^{d} = \mathbf{0} \right)}{\mathbb{P}_{C'} \left(  \set{\tilde{x}_i}_{i=1}^{j} = \mathbf{1},  \set{\tilde{x}_i}_{i=j+1}^{d} = \mathbf{0}  \right)}
= \prod_{j=1}^d R(C_j, \tilde{x}_j). \]
Recall that $c_j$ is the constraint where $\tilde{x}_j$ resides,  $\set{x_{j,i}}_{i=1}^{w_j}$ is the set of variables in $c_j$ except $\tilde{x}_j$,
\begin{align*}
 R(C_j, \tilde{x}_j)
&= \frac{\mathbb{P}_{C_j} \left( \set{x_{j,i}}_{i=1}^{w_j} = \mathbf{1} \right)}{\mathbb{P}_{C_j} \left( \set{x_{j,i}}_{i=1}^{w_j} = \mathbf{1} \right) + \sum_k^{w_j} \mathbb{P}_{C_j} \left( \set{x_{j,i}}_{i : i \neq k}^{w_j} = \mathbf{1}, x_{j,k} = 0 \right)}\\
&= \cfrac{1}{1+ \sum_k^{w_j} \frac{\mathbb{P}_{C_j} \left( \set{x_{j,i}}_{i : i \neq k}^{w_j} = \mathbf{1}, x_{j,k} = 0 \right)}{\mathbb{P}_{C_j} \left( \set{x_{j,i}}_{i=1}^{w_j} = \mathbf{1} \right)}} = \frac{1}{1+\sum_i^{w_j} R(C_{j,i},x_{j,i})}.
\end{align*}
Note that $ R(C_j, \tilde{x}_j)=1$ for $w_j=0$, so the above is still true. Hence by substitution the proof is concluded.
\end{proof}

Similarly, we can conclude that $d_{x_{j,i}} ( C_{j,i} ) \leq 2$ since one occurrence of $x_{j,i}$ has been eliminated. This is still true after normalizing $C_{j,i}$. Since we only pin some occurrences of $x$ to $0$ which is no longer a variable for $C_{j,i}$,
$C_{j,i}$ must be satisfiable. After normalizing, $x_{j,i}$ may be pinned to $1$, in which case we have  $R(C_{j,i},x_{j,i})=0$. Otherwise, $x_{j,i}$ is not pinned in normalized $C_{j,i}$, and we continue to expand these $R(C_{j,i},x_{j,i})$ by the above recursion and get a computation tree for  $R(C,x)$.
Similarly, by truncating the above recursion to depth $L$ we get an algorithm for estimating the ratio of marginal probability with running time $O(6^L)$.
Let $d \triangleq d_x(C)$,
\[R(C,x,L) =
\left\{
  \begin{array}{ll}
    0, & \hbox{\textrm{$x$ is pinned to $1$};} \\
    1, & \hbox{$d=0$ or $L=0$;} \\
    \prod_{j=1}^{d} \frac{1}{1+\sum_i R\left(C_{j,i},\ x_{j,i},\ L - 1\right)}, & \hbox{$1 \leq d \leq 3$.} \\
  \end{array}
\right.
\]
Similar to the CNF problem, we also prove correlation decay and obtain an FPTAS for $\#CSP_3^c(K_4)$.


\begin{lemma}
\label{main-lem2}
Let $\alpha = 0.99$,  $C$ be an instance of $\#CSP_3^c(K_4)$ and $x$ be a variable of $C$.
Then
\begin{equation}
\label{main-lem2-equation}
\abs{ R(C,x,L) -  R(C,x)} \leq 4 \sqrt{6} \alpha^L .
\end{equation}
\end{lemma}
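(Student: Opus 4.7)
The plan is to follow the amortized-analysis framework of Lemma~\ref{main-lem}: reuse the potential function $\varphi(x) = 2\sinh^{-1}(\sqrt{x})$. Because $R(C,x) \in [0,1]$ (immediate from the recursion, which writes $R$ as a product of reciprocals of quantities at least $1$), both $\varphi \circ R$ values lie in $[0,2)$, and the inverse satisfies $(\varphi^{-1})'(y) = \sqrt{y(1+y)} < \sqrt{6}$ on this range. By the Mean Value Theorem it therefore suffices to prove in the potential domain the bound
\begin{equation*}
|\varphi \circ R(C,x,L) - \varphi \circ R(C,x)| \;\leq\; 4\alpha^L ,
\end{equation*}
which converts to $4\sqrt{6}\alpha^L$ after pushing back through $\varphi^{-1}$.

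Life is considerably simpler than in the monotone CNF setting because every constraint here has arity at most $4$, so the branching factor $w_j \leq 3$ is uniformly bounded and the truncation depth $L$ counts recursion levels directly. There is no need for the logarithmic amortization $\alpha^{-\lceil\log_4(w_a+1)\rceil}$, the two-layer decomposition, or the Karamata reduction used in the CNF proof. Instead, I would prove the strictly stronger interior bound
\begin{equation*}
|\varphi \circ R(C,x,L) - \varphi \circ R(C,x)| \;\leq\; 2\alpha^L \quad \text{whenever } d_x(C) \leq 2
\end{equation*}
by induction on $L$, and then handle the root case $d_x(C) = 3$ separately to pick up the additional factor of two. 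Since each recursive step reduces the focal variable's degree by at least one (i.e.\ $d_{x_{j,i}}(C_{j,i}) \leq 2$, as noted after the recursion lemma), every instance encountered below the root satisfies $d \leq 2$, so the interior induction applies to all children of the root.

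The inductive step applies the Mean Value Theorem in the potential domain to the recursion map $h(\mathbf{r}) = \prod_{j=1}^d (1 + \sum_i r_{j,i})^{-1}$. A direct computation of $\partial_{r_{a,b}} h = -h/(1+\sum_i r_{a,i})$ together with $\Phi(x) = 1/\sqrt{x(1+x)}$ reduces the inductive claim to showing that the single-step amortized decay rate
\begin{equation*}
\kappa_d^{\underline{w}}(\mathbf{r}) \;=\; \sqrt{\tfrac{h(\mathbf{r})}{1+h(\mathbf{r})}} \sum_{a=1}^d \frac{1}{1+\sum_i r_{a,i}} \sum_{b=1}^{w_a} \sqrt{r_{a,b}(1+r_{a,b})}
\end{equation*}
satisfies $\kappa_2^{\underline{w}}(\mathbf{r}) \leq \alpha$ and $\kappa_3^{\underline{w}}(\mathbf{r}) \leq 2\alpha$ over the compact domain $w_a \in \{1,2,3\}$, $r_{a,b} \in [0,1]$. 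The base case $L=0$ and the trivial boundary cases ($x$ pinned to $1$; $d=0$) follow exactly as in Lemma~\ref{main-lem}: the truncated and true values then coincide or both lie in $[0,2)$.

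The only real obstacle is the numerical verification of the two $\kappa$-bounds. Since $(d, \underline{w})$ takes only finitely many values and the remaining domain is a cube of dimension at most $9$, this is a finite case analysis: for each configuration one maximizes a smooth function on a compact box, which I would handle either by exploiting monotonicity in groups of variables to reduce to boundary evaluations, or by the same style of argument used in Claim~\ref{numeric_kappa}. Given the generous choice $\alpha = 0.99$ and the absence of the $\alpha^{-\lceil\log_4(w+1)\rceil}$ blow-up factor, I expect the inequalities to hold with substantial slack, and no further conceptual ingredient beyond the bounded numerical analysis should be required.
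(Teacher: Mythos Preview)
Your proposal is correct and follows essentially the same route as the paper's proof: the same potential $\varphi$, the same Mean Value Theorem reduction, the same two-tier induction (strong bound $2\alpha^L$ for $d_x(C)\le 2$, doubled at the root where $d_x(C)=3$), and the same recognition that bounded arity $w_j\le 3$ removes the need for the $M$-based depth amortization used in the CNF case. One caution: your expectation of ``substantial slack'' is slightly optimistic in the case $d=2$, $w_1=w_2=3$, where the paper obtains $\bar{\kappa}_2^{(3,3)}<0.98$ against $\alpha=0.99$; the paper reaches this via a Jensen reduction (symmetrizing $r_{j,i}$ within each group) and sub-additivity across groups, which are the concrete incarnations of the ``monotonicity in groups of variables'' you allude to.
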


The overall structure to prove this lemma and then Theorem \ref{counting-thm2} is the same as that for Monotone CNF. The key part is the following amortized decay condition and the remaining proof is deferred to appendix.

\begin{claim}[Amortized Decay Condition]
\label{condition-triangle}
Let $d\leq 2$, $ w_j \leq 3$  for $1\leq j \leq d$, $0\leq r_{j,i}\leq 1$ for $1\leq j \leq d$ and $1 \leq i \leq w_j$,
$h(\mathbf{r}) \triangleq \prod_{j=1}^{d} \frac{1}{1+\sum_i^{w_j} r_{j,i}}$, and $\Phi(x)=\frac{1}{\sqrt{x(x+1)}}$. Then
\begin{equation}
\label{condition-triangle-equation}
\hat{\kappa}_d^{\underline{w}} (\mathbf{r}) \triangleq \sum_{j,i} \frac{\Phi(h(\mathbf{r}))}{\Phi(r_{j,i})} \abs{\frac{\partial h}{\partial r_{j,i}}}
= \frac{1}{\sqrt{1+ \prod_k^d(1+\sum_i^{w_k} r_{k,i})}} \sum_j^d \frac{\sum_i^{w_j} \sqrt{r_{j,i}(1+ r_{j,i})}}{(1+ \sum_i^{w_j} r_{j,i}) }
< 0.99.
\end{equation}
\end{claim}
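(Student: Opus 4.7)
The plan is to prove the stated identity by direct computation, then reduce the maximization of $\hat{\kappa}_d^{\underline{w}}$ within each group to a one-parameter problem via concavity, and finally dispatch the remaining finite cases by elementary calculus or a bounded verification.

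For the identity, one notes that $\partial h/\partial r_{j,i} = -h(\mathbf{r})/(1+\sum_i r_{j,i})$, since only the $j$-th factor of $h$ depends on $r_{j,i}$. Plugging into the definition of $\hat{\kappa}_d^{\underline{w}}$, the ratio $\Phi(h)/\Phi(r_{j,i})$ equals $\sqrt{r_{j,i}(1+r_{j,i})/[h(1+h)]}$. Factoring out $\sqrt{h/(1+h)}$ from the double sum and rewriting $h/(1+h) = 1/(1+1/h) = 1/(1+\prod_k(1+\sum_i r_{k,i}))$ produces the displayed form.

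The main structural step is to exploit the concavity of $f(r) \triangleq \sqrt{r(1+r)}$ on $[0,\infty)$: indeed $f''(r) = -1/[4(r+r^2)^{3/2}] < 0$. By Jensen's inequality, for each fixed group sum $S_j \triangleq \sum_{i=1}^{w_j} r_{j,i}$, the quantity $\sum_i f(r_{j,i})$ is maximized when all $r_{j,i}$ are equal to $S_j/w_j$, in which case it reduces to $\sqrt{S_j(w_j+S_j)}$. This collapses the problem to bounding
\[
G_d^{\underline{w}}(S_1,\ldots,S_d) \triangleq \frac{1}{\sqrt{1+\prod_{k=1}^d (1+S_k)}} \sum_{j=1}^d \frac{\sqrt{S_j(w_j+S_j)}}{1+S_j}
\]
over the compact box $S_j \in [0, w_j]$, with $d \in \{0,1,2\}$ and $w_j \in \{1,2,3\}$; only a handful of configurations of $(d, \underline{w})$ then remain.

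To finish I would dispatch the cases. For $d=1$, $G_1^{(w)}$ is a single-variable function, and setting its derivative to zero produces an explicit critical point; checking $w \in \{1,2,3\}$ shows the maximum is comfortably below $0.99$ (at $w=3$ the max is roughly $0.58$). The main obstacle is the symmetric $d=2$, $\underline{w}=(3,3)$ case: numerical exploration places the maximum of $G_2^{(3,3)}$ near $S_1 = S_2 \approx 0.52$ with value around $0.978$, uncomfortably close to the $0.99$ threshold. To make this rigorous, one would first use the $S_1 \leftrightarrow S_2$ symmetry of $G_2^{(3,3)}$ together with a second-derivative check in the anti-diagonal direction at a symmetric critical point to conclude the interior maximum lies on $S_1 = S_2$; substituting then reduces the claim to the single-variable inequality $4S(3+S) < 0.9801\,(1+S)^2\bigl(1+(1+S)^2\bigr)$ on $S \in [0,3]$, which is a degree-four polynomial inequality that can be verified by clearing denominators and checking positivity of the resulting polynomial on the interval. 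The remaining $d=2$ configurations with smaller $w_j$ yield strictly smaller maxima and can be handled identically.
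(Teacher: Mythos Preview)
Your reduction via Jensen's inequality on the concave function $\sqrt{r(1+r)}$ is exactly the paper's first move, and the identity verification is correct. The gap is in the subsequent case analysis.

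For the symmetric case $\underline{w}=(3,3)$, you propose to pin the interior maximum of $G_2^{(3,3)}$ to the diagonal $S_1=S_2$ via a second-derivative check in the anti-diagonal direction at a symmetric critical point. That is only a \emph{local} statement: it shows a symmetric critical point is a local maximum along that direction, but does not rule out off-diagonal critical points with larger value. The paper closes this with a second Jensen step: writing $p_j=\ln(1+cr_j)$ and observing that $f_c(p)\triangleq e^{-p}\sqrt{(e^p-1)(c+e^p-1)}$ is concave on $[0,\ln(1+c)]$ for $c\ge 1$, Jensen gives $\sum_j f_c(p_j)\le d\,f_c(\bar p)$, which rigorously reduces every equal-$w$ case to a single variable. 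Once that reduction is in hand, your degree-four polynomial inequality is a valid way to finish $(3,3)$.

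Your handling of the asymmetric $d=2$ cases is also too quick: they cannot be ``handled identically'' to $(3,3)$ since there is no $S_1\leftrightarrow S_2$ symmetry to exploit. The paper proves a sub-additivity bound $\kappa_d^{\underline{w}}(\mathbf{r})\le\sum_j\kappa_1^{(w_j)}(r_j)$ (immediate from $\prod_k(1+w_kr_k)\ge 1+w_jr_j$), which dispatches $(1,2)$ and $(1,3)$; note that for $(1,3)$ this gives $0.41+0.58<0.99$, so that case is essentially as tight as $(3,3)$ and not ``strictly smaller'' in any obvious sense. Sub-additivity fails for $(2,3)$ (since $0.5+0.58>0.99$), and there the paper resorts to a direct KKT analysis: checking the boundary, solving $\partial\kappa/\partial r_1=0$ as a quadratic in $r_2$, and bounding the resulting single-variable function of $r_1$.
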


Here we also have a two-layer structure for variables $\set{r_{j,i}}$. Similarly, we first show a reduction to a single-layer decay rate $\kappa_d$ in Claim \ref{single-layer-triangle}. Unlike before, here we use Jensen's Inequality to show that the worse case is achieved when the variables in the same group have the same value. We then also prove sub-additivity in Claim \ref{sub-triangle}.

\begin{claim}
\label{single-layer-triangle}
For every $\mathbf{r'} \triangleq \set{r'_{j,i}}$, there exists $\mathbf{r}  \triangleq \set{r_j}$ such that
\[ \hat{\kappa}_d^{\underline{w}}(\mathbf{r'}) \leq \kappa_d^{\underline{w}} (\mathbf{r})\triangleq \frac{1}{\sqrt{1+ \prod_k^d(1+w_k r_k)}} \sum_j^d \frac{w_j \sqrt{r_j(1+ r_j)}}{(1+ w_j r_j) }.\]
\end{claim}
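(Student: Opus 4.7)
The plan is to pick, for each group $j$, the natural candidate $r_j = \frac{1}{w_j}\sum_{i=1}^{w_j} r'_{j,i}$, i.e.\ the arithmetic mean of the entries of that group. This choice is motivated by wanting the outer prefactor to be preserved: with it one has $1 + w_j r_j = 1 + \sum_i r'_{j,i}$ for every $j$, so
\[
\frac{1}{\sqrt{1+\prod_k^d(1+\sum_i^{w_k} r'_{k,i})}} = \frac{1}{\sqrt{1+\prod_k^d(1+w_k r_k)}},
\]
and the problem reduces to a term-by-term comparison of the summands indexed by $j$.

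Because the denominators $1+\sum_i r'_{j,i}$ and $1+w_j r_j$ are equal by construction, the only thing left to verify is
\[
\sum_{i=1}^{w_j} \sqrt{r'_{j,i}(1+r'_{j,i})} \;\leq\; w_j \sqrt{r_j(1+r_j)} \qquad \text{for each } j.
\]
I would recognize this as precisely the statement of Jensen's inequality applied to $f(x) := \sqrt{x(1+x)}$, provided $f$ is concave on $[0,1]$. A direct differentiation gives $f'(x) = \frac{2x+1}{2\sqrt{x^2+x}}$ and then
\[
f''(x) = \frac{4(x^2+x) - (2x+1)^2}{4(x^2+x)^{3/2}} = -\frac{1}{4(x^2+x)^{3/2}} < 0,
\]
so $f$ is strictly concave on $(0,1]$ (the $x\to 0^+$ boundary case follows by continuity, or is trivial since both sides vanish). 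Jensen therefore yields $\frac{1}{w_j}\sum_i f(r'_{j,i}) \leq f(r_j)$, which is the required inequality after multiplying through by $w_j$.

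Summing the term-by-term bound over $j$ and multiplying by the common prefactor then gives $\hat{\kappa}_d^{\underline{w}}(\mathbf{r'}) \leq \kappa_d^{\underline{w}}(\mathbf{r})$, proving the claim. There is no real obstacle here beyond spotting the right averaging choice; the only genuine content is the concavity of $\sqrt{x(1+x)}$, which is a short calculus check. Conceptually, the key difference from the CNF proof is that in the \#CSP case the inner structure of each clause contributes additively (through $\sum_i r_{j,i}$) rather than multiplicatively (through $\prod_i \hat t_{j,i}$), which is exactly what flips the worst-case configuration from ``one free variable, the rest at the boundary'' (Karamata) to ``all variables equal'' (Jensen).
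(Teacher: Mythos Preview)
Your proposal is correct and follows essentially the same approach as the paper: choose $r_j$ to be the arithmetic mean of the $r'_{j,i}$ in group $j$, observe that this makes the prefactor and the denominators $1+w_j r_j = 1+\sum_i r'_{j,i}$ match exactly, and then apply Jensen's inequality using the concavity of $f(x)=\sqrt{x(1+x)}$. Your write-up is in fact more detailed than the paper's, which simply asserts the concavity of $f$ without the second-derivative computation.
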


\begin{proof}
Let $r_j = \frac{1}{w_j} \sum_{k=1}^{w_j} r'_{j,k}$,
by Jensen's Inequality, and $f(x)\triangleq \sqrt{x(1+x)}$ is concave,
\[ \left( \forall j, \sum_{i=1}^{w_j} f(r'_{j,i}) \leq w_j f(r_j) \right) \implies
\hat{\kappa}_d^{\underline{w}}(\mathbf{r'}) \leq
\kappa_d^{\underline{w}}(\mathbf{r}). \]
\end{proof}

\begin{claim}[Sub-additivity]
\label{sub-triangle}
$\kappa_d^{\underline{w}} (\mathbf{r}) \leq \sum_j^d \kappa_1^{(w_j)} (r_j)$.
\end{claim}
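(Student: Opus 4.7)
The plan is to follow the same structural argument as in Claim~\ref{decomposeKappa}: use the fact that the common square-root factor only decreases (improves for the bound) when we drop factors from the product inside it, so we can distribute it into each summand.

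Concretely, I would first observe that for all $k$ we have $1+w_k r_k \geq 1$, since $w_k \geq 0$ and $r_k \geq 0$. Hence for any fixed index $j$,
\[
1+\prod_{k=1}^d (1+w_k r_k) \;\geq\; 1+(1+w_j r_j),
\]
because we are multiplying the factor $(1+w_j r_j)$ by additional terms that are each at least $1$. Taking reciprocals of the square roots reverses the inequality:
\[
\frac{1}{\sqrt{1+\prod_{k=1}^d(1+w_k r_k)}} \;\leq\; \frac{1}{\sqrt{1+(1+w_j r_j)}}.
\]

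Now I would apply this bound termwise. Writing
\[
\kappa_d^{\underline{w}}(\mathbf{r}) = \sum_{j=1}^d \frac{1}{\sqrt{1+\prod_{k=1}^d(1+w_k r_k)}}\cdot \frac{w_j \sqrt{r_j(1+r_j)}}{1+w_j r_j},
\]
and bounding the common prefactor in the $j$th summand by the inequality above, we obtain
\[
\kappa_d^{\underline{w}}(\mathbf{r}) \;\leq\; \sum_{j=1}^d \frac{1}{\sqrt{1+(1+w_j r_j)}}\cdot \frac{w_j \sqrt{r_j(1+r_j)}}{1+w_j r_j} \;=\; \sum_{j=1}^d \kappa_1^{(w_j)}(r_j),
\]
where the last equality is just the definition of $\kappa_1^{(w_j)}(r_j)$ (the $d=1$ instance of $\kappa$). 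That closes the claim.

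There is no real obstacle here; the only point worth double-checking is that the $d=1$ case of $\kappa$ really reads $\kappa_1^{(w)}(r)=\frac{1}{\sqrt{1+(1+wr)}}\cdot \frac{w\sqrt{r(1+r)}}{1+wr}$, i.e. that the product in the denominator of the square root reduces to the single factor $(1+wr)$ when $d=1$, which it plainly does. So the proof reduces to a monotonicity-of-the-product observation followed by a termwise comparison, exactly mirroring the sub-additivity step used for the monotone CNF recursion.
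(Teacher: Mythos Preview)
Your proof is correct and is essentially identical to the paper's own argument: the paper also uses $\prod_k^d(1+w_k r_k) \geq 1+ w_j r_j$ to bound the common prefactor termwise and then identifies each summand with $\kappa_1^{(w_j)}(r_j)$. There is nothing to add.
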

\begin{proof}
Since $\forall j, \prod_k^d(1+w_k r_k) \geq 1+ w_j r_j$,
\begin{align*}
 \frac{1}{\sqrt{1+ \prod_k^d(1+w_k r_k)}} \sum_j^d \frac{w_j \sqrt{r_j(1+ r_j)}}{(1+ w_j r_j) } \leq \sum_j^d \frac{1}{\sqrt{1+ (1+w_j r_j)}} \frac{w_j \sqrt{r_j(1+ r_j)}}{(1+ w_j r_j) }.
\end{align*}

\end{proof}

{\bf\noindent Proof of Claim \ref{condition-triangle}.}

By Claim \ref{single-layer-triangle}, it is sufficient to prove $\kappa_d^{\underline{w}} (\mathbf{r})<0.99$ for $d=1, 2$, and $\forall j, w_j \leq 3$.


{\bf\noindent Case $d=1$:}
$\kappa_1$ itself is a function over single variable, by simple calculus we have,
\[ \kappa_1^{(1)}(r) < 0.41,\ \ \kappa_1^{(2)}(r) \leq 0.5,\ \ \kappa_1^{(3)}(r) < 0.58.\]

{\bf\noindent Case $d=2, w_1 = w_2 = c$:}
Let $p_j = \ln(1+ c r_j)$, $p' = \frac{\sum_j^d p_j}{d}$, $f_c(p) \triangleq e^{-p} \sqrt{\left(e^p-1\right) \left(c+e^p-1\right)}$,
$r' = \frac{e^{p'} - 1}{c}$.
By Jensen's Inequality and $f_c(p)$ is concave for $c\geq 1$ and $0 \leq p \leq \ln (1+c)$,
we have

\[ \sum_{j=1}^{d} f(p_j) \leq d f(p') \implies \kappa_d^{(c,c)} (\mathbf{r}) \leq \bar{\kappa}_d^{(c,c)} (r') \triangleq \frac{c d \sqrt{r' (1+r')}}{(1+c r' ) \sqrt{1 + (1+c r')^d}}. \]

Note that $\bar{\kappa}_2^{(c,c)} (r)$ is again a single-variate function, by simple calculus,  $\bar{\kappa}_2^{(3,3)} (r) < 0.98$, $\bar{\kappa}_2^{(2,2)} (r) < 0.83,\ \ \bar{\kappa}_2^{(1,1)} (r) < 0.651$.

{\bf\noindent Case $\underline{w}=(1,2)$:}
   By sub-additivity $ \kappa_2^{(1,2)} (\mathbf{r} ) \leq \kappa_1^{(2)}(r_1) + \kappa_1^{(1)}(r_2)<  0.91$.

{\bf\noindent Case $\underline{w}=(1,3)$:}
   By sub-additivity $ \kappa_2^{(1,3)} (\mathbf{r} ) \leq \kappa_1^{(3)}(r_1) + \kappa_1^{(1)}(r_2)<  0.99$.

{\bf\noindent Case $\underline{w}=(2,3)$:}
  By Karush Kuhn Tucker (KKT) conditions, the maximal value of $\kappa_2^{(2,3)} (\mathbf{r} )$ is attained either on the boundary, or at stationary points.
  Clearly if $r_1=0$ or $r_2=0$ then $\kappa_2^{(2,3)} (\mathbf{r} )$ is reduced to case $d=1$.
  If $r_1=1$, let $f(r) \triangleq \kappa_2^{(2,3)}(\set{1,r})$, we have
  \[f(r) = \frac{3 \sqrt{2}}{4 \sqrt{8 r+5}} + \frac{2 \sqrt{r (r+1)}}{(2 r+1) \sqrt{8 r+5}} \leq \frac{3 \sqrt{2}}{4 \sqrt{5}} + \frac{ \sqrt{8x}}{(2 r+1) \sqrt{8 r+5}}.\]
  Note that $\frac{3 \sqrt{2}}{4 \sqrt{5}} < 0.48$, and $\frac{ \sqrt{8x}}{(2 r+1) \sqrt{8 r+5}}<0.36$ with $r=\frac{1}{32} \left(\sqrt{185}-5\right)$ attaining its maximal.
  In all, on the boundary where $r_1=1$, we have $\kappa_2^{(2,3)}(\set{1,r}) < 0.84$.

  Now consider stationary points, the partial derivative with respect to $r_1$ is:
\begin{align*}
  \frac{\partial \kappa_2^{(2,3)}(\set{r_1,r_2})}{\partial r_1} =
  \frac{1}{2 \sqrt{r_1 (r_1+1)} (3 r_1+1)^2 (6 r_1 r_2+3 r_1+2 r_2+2)^{3/2}}\left(6+ 6 r_2 \right. \\
 \left. -6 \sqrt{r_1 (r_1+1)} \sqrt{r_2 (r_2+1)} -3 r_1 \left(12 \sqrt{r_1 (r_1+1)} \sqrt{r_2 (r_2+1)}+2 r_2+2 \right. \right. \\
 \left. \left. +3 r_1 \left(r_1 (6 r_2+3)+6 \sqrt{r_1 (r_1+1)} \sqrt{r_2 (r_2+1)}+10 r_2+5\right)\right)\right).
\end{align*}
  Let the partial derivative be zero and solve the quadratic equation on $r_2$,
\begin{align*}
r_2=\frac{1}{-2 (3 r_1+1)^2 \left(9 r_1^3+3 r_1^2-9 r_1+1\right)} \left( 81 r_1^5+90 r_1^4-48 r_1^3-56 r_1^2-5 r_1+2 \right. \\
\left. \pm \sqrt{-r_1^2 (3 r_1+1)^4 \left(81 r_1^5+162 r_1^4+27 r_1^3-97 r_1^2-44 r_1-1\right)} \right).
\end{align*}
Note that only $0 \leq r_2 \leq 1$ is meaningful, we get $r_2'$:
\begin{align*}
r_2'=\frac{1}{-2 (3 r_1+1)^2 \left(9 r_1^3+3 r_1^2-9 r_1+1\right)} \left( 81 r_1^5+90 r_1^4-48 r_1^3-56 r_1^2-5 r_1+2 \right. \\
\left. + \sqrt{-r_1^2 (3 r_1+1)^4 \left(81 r_1^5+162 r_1^4+27 r_1^3-97 r_1^2-44 r_1-1\right)} \right).
\end{align*}

Hence again we have a single-variable function on $r_1$,
or formally $f(r_1)\triangleq \kappa_2^{(2,3)}(\set{r_1,r_2'})$.

Then it is routine to verify that $\D{f(r)}{r}=0$ has exactly two roots at $(0.18,0.19)$ and $(0.45,0.46)$ respectively,
and the maximal is attained at the former one and $\kappa_2^{(2,3)} (\mathbf{r} ) \leq f(r) <  0.91$.
\qed


\subsection{Hardness of Approximating $\#CSP_5^c(S_2)$}
\tikzset{
    photon/.style={decorate, decoration={snake}, draw=Orchid},
    electron/.style={draw=BurntOrange, postaction={decorate},
        decoration={markings,mark=at position .55 with {\arrow[draw=BurntOrange]{>}}}},
    gluon/.style={decorate, draw=magenta,
        decoration={coil,amplitude=2pt, segment length=5pt}},
    photonprime/.style={decorate, decoration={snake}, draw=black},
    electronprime/.style={draw=LimeGreen, postaction={decorate},
        decoration={markings,mark=at position .55 with {\arrow[draw=LimeGreen]{>}}}},
    gluonprime/.style={decorate, draw=cyan,
        decoration={coil,amplitude=2pt, segment length=5pt}},
    myzzz/.style={draw=black, postaction={decorate},
        decoration={markings,mark=at position .55 with {\arrow[draw=black]{>}}}},
}
\definecolor{myblue}{RGB}{80,80,160}
\definecolor{mygreen}{RGB}{80,160,80}

It was shown in~\cite[Theorem 2]{Sly10} that counting independent sets for graphs with maximum degree $6$ (equivalently $\CSP_6(OR_2)$ in our notation) does not admit FPTAS unless $NP=RP$, with a reduction from MAX-CUT.
To prove Theorem \ref{thm-hardness}, we simulate the hard instances of $\CSP_6^c(OR_2)$ used in the reduction in~\cite{Sly10} by instances of $\#CSP_5^c(S_2)$. By pinning, we can realize the constraint $OR_2$ by $S_2$. So it suffices to reduce the variable occurrence from $6$ to $5$. The idea is simple and straightforward.
Intuitively, if variable $x$ appears at two $OR_2$ constraints $OR(x,y)$ and $OR(x,z)$, we replace them by $S_2(x,y,z)=(x\vee y) \wedge (x\vee z)$.
This does not change the underlying constraint nor the number of satisfying assignments but reduce the occurrence of variable $x$ by $1$.
Thus, we only need to show that we can group some pairs of adjacent $OR_2$ constraints into one $S_2$ so that each variable appears in at most five constraints after merging.

To this end, we first review the reduction used in~\cite{Sly10}.
They first construct a bi-partite gadget $G$.
It begins with $5$ random perfect matchings between $W^{+} \cup U^{+}$ and $W^{-} \cup U^{-}$,
and $1$ random perfect matching between $W^{+}$ and $W^{-}$.
Then, they adjoin a collection of $5$-ary trees to $U^{+}$ by identifying the leaves to vertices in $U^{+}$,
and the set of roots becomes $V^{+}$ with degree $5$, do similar operation for $U^{-}$ to get $V^{-}$.
%

Let $H$ be the input graph in the MAX-CUT problem and $G$ be a random bi-partite graph constructed as above. They define $H^{G}$ by first taking $\abs{H}$ copies of $G$ as $\set{G_x}_{x\in H}$, then connect edges between $V_x^{+}$ and $V_y^{+}$ for every edge $(x,y)$ in $H$, and similarly for $V_x^{-}$ and $V_y^{-}$.
This $H^{G}$ is the hardness instance they used in the reduction, and in the language of $\CSP_6(OR_2)$, vertices are variables and edges are $OR_2$ constraints.
%

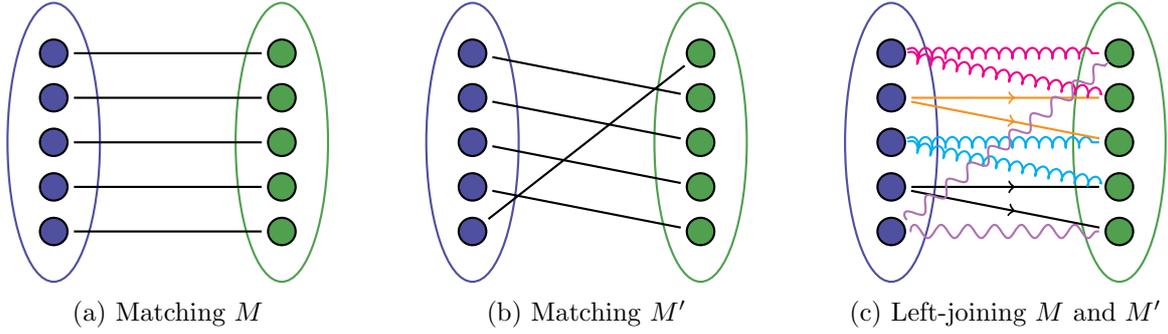
\begin{figure}[htp]

    \begin{subfigure}[b]{0.3\textwidth}
        \centering
        \begin{tikzpicture}[thick,
          every node/.style={draw,circle},
          fsnode/.style={fill=myblue},
          ssnode/.style={fill=mygreen},
          every fit/.style={ellipse,draw,inner sep=-2pt,text width=1cm},
          -,shorten >= 2pt,shorten <= 2pt
        ]
            \begin{scope}[start chain=going below,node distance=2mm]
            \foreach \i in {1,2,...,5}
              \node[fsnode,on chain] (f\i)  {};
            \end{scope}

            \begin{scope}[xshift=3cm,yshift=0cm,start chain=going below,node distance=2mm]
            \foreach \i in {6,7,...,10}
              \node[ssnode,on chain] (s\i) {};
            \end{scope}

            \node [myblue,fit=(f1) (f5)] {};
            \node [mygreen,fit=(s6) (s10)] {};

            \draw (f1) -- (s6);
            \draw (f2) -- (s7);
            \draw (f3) -- (s8);
            \draw (f4) -- (s9);
            \draw (f5) -- (s10);
        \end{tikzpicture}
        \caption{Matching $M$}
    \end{subfigure}
    \hfill
    \begin{subfigure}[b]{0.3\textwidth}
        \centering
        \begin{tikzpicture}[thick,
          every node/.style={draw,circle},
          fsnode/.style={fill=myblue},
          ssnode/.style={fill=mygreen},
          every fit/.style={ellipse,draw,inner sep=-2pt,text width=1cm},
          -,shorten >= 2pt,shorten <= 2pt
        ]
            \begin{scope}[start chain=going below,node distance=2mm]
            \foreach \i in {1,2,...,5}
              \node[fsnode,on chain] (f\i)  {};
            \end{scope}

            \begin{scope}[xshift=3cm,yshift=0cm,start chain=going below,node distance=2mm]
            \foreach \i in {6,7,...,10}
              \node[ssnode,on chain] (s\i) {};
            \end{scope}

            \node [myblue,fit=(f1) (f5)] {};
            \node [mygreen,fit=(s6) (s10)] {};

            \draw (f1) -- (s7);
            \draw (f2) -- (s8);
            \draw (f3) -- (s9);
            \draw (f4) -- (s10);
            \draw (f5) -- (s6);
        \end{tikzpicture}
        \caption{Matching $M'$}
    \end{subfigure}
    \hfill
    \begin{subfigure}[b]{0.3\textwidth}
        \centering
        \begin{tikzpicture}[thick,
          every node/.style={draw,circle},
          fsnode/.style={fill=myblue},
          ssnode/.style={fill=mygreen},
          every fit/.style={ellipse,draw,inner sep=-2pt,text width=1cm},
          -,shorten >= 2pt,shorten <= 2pt
        ]
            \begin{scope}[start chain=going below,node distance=2mm]
            \foreach \i in {1,2,...,5}
              \node[fsnode,on chain] (f\i)  {};
            \end{scope}

            \begin{scope}[xshift=3cm,yshift=0cm,start chain=going below,node distance=2mm]
            \foreach \i in {6,7,...,10}
              \node[ssnode,on chain] (s\i) {};
            \end{scope}

            \node [myblue,fit=(f1) (f5)] {};
            \node [mygreen,fit=(s6) (s10)] {};

            \draw [gluon] (f1) -- (s6);
            \draw [electron](f2) -- (s7);
            \draw [gluonprime] (f3) -- (s8);
            \draw [myzzz] (f4) -- (s9);
            \draw [photon] (f5) -- (s10);

            \draw [gluon] (f1) -- (s7);
            \draw [electron] (f2) -- (s8);
            \draw [gluonprime] (f3) -- (s9);
            \draw [myzzz] (f4) -- (s10);
            \draw [photon] (f5) -- (s6);
        \end{tikzpicture}
        \caption{Left-joining $M$ and $M'$}
    \end{subfigure}
    \hfill
    \caption{Joining matchings on bipartite graph, a color(shape) for a constraint.}
    \label{fig:matching}
\end{figure}

Now we are ready for the simulation. We begin with the construction for bi-partite graph $G$.
Given $5$ random perfect matchings, instead of associating $5$ edges to each vertex, we associate each variable with no more than $4$ constraints as follows:
\begin{itemize}
\item The first $2$ random perfect matchings $M,M'$ is left joined on $W^{+} \cup U^{+}$. Formally for $v \in W^{+} \cup U^{+}$, $(v,u_1) \in M$, $(v,u_2) \in M'$, we replace the two $OR$ constraints with a single $S_2(v,u_1,u_2)$.
    As illustrated in Figure \ref{fig:matching}, this saves one occurrence for variables in $W^{+} \cup U^{+}$ .
\item We take the next $2$ random perfect matchings and group them on $W^{-} \cup U^{-}$. By that, we also save one occurrence for variables in  $W^{-} \cup U^{-}$.
\item For the last random perfect matching $M$, we add $OR(u,v)$ (or $S_2(u, v, 1)$, where the last variable is pinned to $1$) for every $(u,v) \in M$.
\end{itemize}
Next take one more random matching $M$ for $W^{+}$ and $W^{-}$, add $OR(u,v)$ for every $(u,v) \in M$. By these simulation, we save one occurrence for every variable in  $W^{+} \cup U^{+}\cup W^{-} \cup U^{-}$.

\begin{figure}[htp]

\centering
\begin{tikzpicture}[
        thick,
        level/.style={level distance=20mm/#1, sibling distance=3cm},
        level 2/.style={sibling distance=0.6cm}
    ]
    \node [circle,draw,scale=0.5,fill]{}
        child {
         node[circle,draw,scale=0.5,fill]{}
            child {
                node[circle,draw,scale=0.5,fill] {}
                edge from parent [gluon]
            }
            child {
                node[circle,draw,scale=0.5,fill]{}
                edge from parent [gluon]
            }
             child {
                node[circle,draw,scale=0.5,fill]{}
                edge from parent [electron]
            }
             child {
                node[circle,draw,scale=0.5,fill]{}
                edge from parent [electron]
            }
             child {
                node[circle,draw,scale=0.5,fill]{}
                edge from parent [photon]
            }
            edge from parent [gluonprime]
        }
        child {
         node[circle,draw,scale=0.5,fill]{}
            child {
                node[circle,draw,scale=0.5,fill] {}
                edge from parent [gluon]
            }
            child {
                node[circle,draw,scale=0.5,fill]{}
                edge from parent [gluon]
            }
             child {
                node[circle,draw,scale=0.5,fill]{}
                edge from parent [electron]
            }
             child {
                node[circle,draw,scale=0.5,fill]{}
                edge from parent [electron]
            }
             child {
                node[circle,draw,scale=0.5,fill]{}
                edge from parent [photon]
            }
            edge from parent [gluonprime]
        }
        child {
         node[circle,draw,scale=0.5,fill]{}
            child {
                node[circle,draw,scale=0.5,fill] {}
                edge from parent [gluon]
            }
            child {
                node[circle,draw,scale=0.5,fill]{}
                edge from parent [gluon]
            }
             child {
                node[circle,draw,scale=0.5,fill]{}
                edge from parent [electron]
            }
             child {
                node[circle,draw,scale=0.5,fill]{}
                edge from parent [electron]
            }
             child {
                node[circle,draw,scale=0.5,fill]{}
                edge from parent [photon]
            }
            edge from parent [electronprime]
        }
        child {
         node[circle,draw,scale=0.5,fill]{}
            child {
                node[circle,draw,scale=0.5,fill] {}
                edge from parent [gluon]
            }
            child {
                node[circle,draw,scale=0.5,fill]{}
                edge from parent [gluon]
            }
             child {
                node[circle,draw,scale=0.5,fill]{}
                edge from parent [electron]
            }
             child {
                node[circle,draw,scale=0.5,fill]{}
                edge from parent [electron]
            }
             child {
                node[circle,draw,scale=0.5,fill]{}
                edge from parent [photon]
            }
            edge from parent [electronprime]
        }
        child {
         node[circle,draw,scale=0.5,fill]{}
            child {
                node[circle,draw,scale=0.5,fill] {}
                edge from parent [gluon]
            }
            child {
                node[circle,draw,scale=0.5,fill]{}
                edge from parent [gluon]
            }
             child {
                node[circle,draw,scale=0.5,fill]{}
                edge from parent [electron]
            }
             child {
                node[circle,draw,scale=0.5,fill]{}
                edge from parent [electron]
            }
             child {
                node[circle,draw,scale=0.5,fill]{}
                edge from parent [photon]
            }
            edge from parent [photonprime]
        };

\end{tikzpicture}
\caption{Expressing a $5$-ary tree in $\CSP_4^c(S_2)$, a color(shape) for a constraint.}
\label{fig:tree}
\end{figure}

Finally we adjoin trees onto $U^{+}$ and $U^{-}$ to finish the construction of $G$.
We can group pairs of two $OR_2$ into one $S_2$ constraint in every $5$-ary trees as depicted in Figure \ref{fig:tree}.
By this, we save two occurrences for every variable in those $5$-ary trees except the leaves.
The leaves will be identified with nodes in $U^{+} \cup U^{-}$, for which we have already saved one occurrence.

For the last step to construct $H^G$, we simply use the constraint $OR_2$ and do not do any merging. As the last step does not introduce new nodes and we have saved at least one occurrence for all the nodes for the instance. We conclude that it is an instance for  $\CSP_5^c(S_2)$. This completes the proof.



\bibliographystyle{plain}
\bibliography{refs}

\begin{thebibliography}{10}

\bibitem{BG08}
Antar Bandyopadhyay and David Gamarnik.
\newblock Counting without sampling: Asymptotics of the log-partition function
  for certain statistical physics models.
\newblock {\em Random Structures \& Algorithms}, 33(4):452--479, 2008.

\bibitem{BGKNT07}
Mohsen Bayati, David Gamarnik, Dimitriy Katz, Chandra Nair, and Prasad Tetali.
\newblock Simple deterministic approximation algorithms for counting matchings.
\newblock In {\em Proceedings of STOC}, pages 122--127, 2007.

\bibitem{33matching}
A.~{Dudek}, M.~{Karpinski}, A.~{Ruci{\'n}ski}, and E.~{Szyma{\'n}ska}.
\newblock {Approximate Counting of Matchings in \$(3,3)\$-Hypergraphs}.
\newblock {\em ArXiv e-prints}, February 2014.

\bibitem{app_DJV01}
Martin Dyer, Mark Jerrum, and Eric Vigoda.
\newblock Rapidly mixing markov chains for dismantleable constraint graphs.
\newblock In {\em Randomization and Approximation Techniques in Computer
  Science}, pages 68--77. Springer, 2002.

\bibitem{IS_DFJ02}
Martin~E. Dyer, Alan~M. Frieze, and Mark Jerrum.
\newblock On counting independent sets in sparse graphs.
\newblock {\em SIAM Jounal on Computing}, 31(5):1527--1541, 2002.

\bibitem{DyerGJR12}
Martin~E. Dyer, Leslie~Ann Goldberg, Markus Jalsenius, and David Richerby.
\newblock The complexity of approximating bounded-degree boolean \#csp.
\newblock {\em Inf. Comput.}, 220:1--14, 2012.

\bibitem{DyerGJ10}
Martin~E. Dyer, Leslie~Ann Goldberg, and Mark Jerrum.
\newblock A complexity dichotomy for hypergraph partition functions.
\newblock {\em Computational Complexity}, 19(4):605--633, 2010.

\bibitem{IS_DG00}
Martin~E. Dyer and Catherine~S. Greenhill.
\newblock On markov chains for independent sets.
\newblock {\em Journal of Algorithms}, 35(1):17--49, 2000.

\bibitem{galanis2012inapproximability}
A.~Galanis, D.~Stefankovic, and E.~Vigoda.
\newblock Inapproximability of the partition function for the antiferromagnetic
  ising and hard-core models.
\newblock {\em Arxiv preprint arXiv:1203.2226}, 2012.

\bibitem{GK07}
David Gamarnik and Dmitriy Katz.
\newblock Correlation decay and deterministic fptas for counting colorings of a
  graph.
\newblock {\em Journal of Discrete Algorithms}, 12:29--47, 2012.

\bibitem{app_GJ11}
Leslie~Ann Goldberg and Mark Jerrum.
\newblock A polynomial-time algorithm for estimating the partition function of
  the ferromagnetic ising model on a regular matroid.
\newblock In {\em Proceedings of ICALP}, pages 521--532, 2011.

\bibitem{gopalan2012dnf}
Parikshit Gopalan, Raghu Meka, and Omer Reingold.
\newblock {DNF} sparsification and a faster deterministic counting algorithm.
\newblock In {\em Computational Complexity (CCC), 2012 IEEE 27th Annual
  Conference on}, pages 126--135. IEEE, 2012.

\bibitem{col_Jerrum95}
Mark Jerrum.
\newblock A very simple algorithm for estimating the number of k-colorings of a
  low-degree graph.
\newblock {\em Random Structures \& Algorithms}, 7(2):157--166, 1995.

\bibitem{jerrum1989approximating}
Mark Jerrum and Alistair Sinclair.
\newblock Approximating the permanent.
\newblock {\em SIAM journal on computing}, 18(6):1149--1178, 1989.

\bibitem{JS93}
Mark Jerrum and Alistair Sinclair.
\newblock Polynomial-time approximation algorithms for the ising model.
\newblock {\em SIAM Journal on Computing}, 22(5):1087--1116, 1993.

\bibitem{MC_JA96}
Mark Jerrum and Alistair Sinclair.
\newblock {\em The Markov chain Monte Carlo method: an approach to approximate
  counting and integration}, pages 482--520.
\newblock PWS Publishing Co., Boston, MA, USA, 1997.

\bibitem{app_JSV04}
Mark Jerrum, Alistair Sinclair, and Eric Vigoda.
\newblock A polynomial-time approximation algorithm for the permanent of a
  matrix with nonnegative entries.
\newblock {\em Journal of the ACM}, 51:671--697, July 2004.

\bibitem{kadelburg2005inequalities}
Zoran Kadelburg, Dusan Dukic, Milivoje Lukic, and I~Matic.
\newblock Inequalities of karamata. schur and muirhead, and some applications.
\newblock {\em The Teaching of Mathematics}, 8(1):31--45, 2005.

\bibitem{KarpL83}
Richard~M. Karp and Michael Luby.
\newblock Monte-carlo algorithms for enumeration and reliability problems.
\newblock In {\em FOCS}, pages 56--64, 1983.

\bibitem{KarpLM89}
Richard~M. Karp, Michael Luby, and Neal Madras.
\newblock Monte-carlo approximation algorithms for enumeration problems.
\newblock {\em J. Algorithms}, 10(3):429--448, 1989.

\bibitem{LLY12}
Liang Li, Pinyan Lu, and Yitong Yin.
\newblock Approximate counting via correlation decay in spin systems.
\newblock In {\em Proceedings of SODA}, pages 922--940, 2012.

\bibitem{LLY13}
Liang Li, Pinyan Lu, and Yitong Yin.
\newblock Correlation decay up to uniqueness in spin systems.
\newblock In {\em Proceedings of SODA}, pages 67--84, 2013.

\bibitem{counting-edge-cover}
Chengyu Lin, Jingcheng Liu, and Pinyan Lu.
\newblock A simple fptas for counting edge covers.
\newblock In {\em Proceedings of the Twenty-Fifth Annual ACM-SIAM Symposium on
  Discrete Algorithms}, pages 341--348, 2014.

\bibitem{fibo-approx}
Pinyan Lu, Menghui Wang, and Chihao Zhang.
\newblock {FPTAS} for weighted fibonacci gates and its applications.
\newblock {\em Submitted}.

\bibitem{LY13}
Pinyan Lu and Yitong Yin.
\newblock Improved fptas for multi-spin systems.
\newblock In {\em Proceedings of RANDOM 2013, to appear}.

\bibitem{IS_LV97}
Michael Luby and Eric Vigoda.
\newblock Approximately counting up to four (extended abstract).
\newblock In {\em Proceedings of STOC}, pages 682--687, 1997.

\bibitem{SST}
Alistair Sinclair, Piyush Srivastava, and Marc Thurley.
\newblock Approximation algorithms for two-state anti-ferromagnetic spin
  systems on bounded degree graphs.
\newblock In {\em Proceedings of SODA}, pages 941--953, 2012.

\bibitem{Sly10}
Allan Sly.
\newblock Computational transition at the uniqueness threshold.
\newblock In {\em Proceedings of the 2010 IEEE 51st Annual Symposium on
  Foundations of Computer Science}, FOCS '10, pages 287--296, Washington, DC,
  USA, 2010. IEEE Computer Society.

\bibitem{SS12}
Allan Sly and Nike Sun.
\newblock The computational hardness of counting in two-spin models on
  $d$-regular graphs.
\newblock In {\em Proceedings of FOCS}, pages 361--369, 2012.

\bibitem{Trevisan04}
Luca Trevisan.
\newblock A note on approximate counting for k-{DNF}.
\newblock In {\em APPROX-RANDOM}, pages 417--426, 2004.

\bibitem{col_Vigoda99}
Eric Vigoda.
\newblock Improved bounds for sampling coloring.
\newblock In {\em Proceedings of FOCS}, pages 51--59, 1999.

\bibitem{Weitz06}
Dror Weitz.
\newblock Counting independent sets up to the tree threshold.
\newblock In {\em Proceedings of STOC}, pages 140--149, 2006.

\bibitem{YZ13}
Yitong Yin and Chihao Zhang.
\newblock Approximate counting via correlation decay on planar graphs.
\newblock In {\em Proceedings of SODA}, pages 47--66, 2013.

\end{thebibliography}

\appendix

\section{Counting from Ratio of Marginal Probability}

First we will be more specific for Theorem \ref{counting-thm2}. For $0< \eps < 1$ and $\#CSP_3^c(K_4)$ problem $C$ with $n$ variables enumerated as $\set{x_i}$, let  $\alpha = 0.99, L = \log_{\alpha} \left(\frac{\eps}{8 \sqrt{6} n} \right)$, $C_1 = C$ and $C_{i+1}$ be from $C_i$ by pinning $x_i$ to $1$. Then
$Z(C, L) \triangleq \prod_i \left( 1 + R(C_i, x_i, L) \right) $ is the desired FPTAS with running time $O\left(n \left(\frac{n}{\eps}\right)^{\log_{1/ \alpha} (6)} \right)$.

%

{\bf\noindent Proof of Theorem \ref{counting-thm1} and \ref{counting-thm2}.}
First note that,
	\[\frac{Z(C,L)}{Z(C)} = \prod_i^n \frac{ 1 + R(C_i, x_i, L) }{ 1 + R(C_i, x_i) }. \]
	By Lemma \ref{main-lem} (resp. Lemma \ref{main-lem2}),
	\[\frac{ \abs{R(C_i,x_i, L) - R(C_i, x_i)} }{ 1 + R(C_i, x_i) } \leq\abs{R(C_i,x_i, L) - R(C_i, x_i)}\leq \frac{\eps}{2n}.\]
	Namely for every $i$,
	\[\left( 1 - \frac{\eps}{2n} \right) \leq \frac{ 1 + R(C_i, x_i, L) }{ 1 + R(C_i, x_i) } \leq \left( 1 + \frac{\eps}{2n} \right), \]
	Hence,
	\[\left( 1 - \frac{\eps}{2n} \right)^n \leq \prod_i^n \frac{ 1 + R(C_i, x_i, L) }{ 1 + R(C_i, x_i) } \leq \left( 1 + \frac{\eps}{2n} \right)^n. \]
	\[1 - \eps \leq \frac{Z(C,L)}{Z(C)} \leq 1+ \eps.\]
	The running time follows from that of $R(C,x,L)$ and there are $O(n)$ calls to it.
\qed

\section{Proof of Claim \ref{numeric_kappa}}

We first introduce a few useful propositions.

\begin{proposition}
\label{equalKappa}
If $w_i = w_j$ and $w_i \leq 2, w_j \leq 2$, we have $\kappa_d^{\underline{w}}(\mathbf{t}) \leq \kappa_d^{\underline{w}}(\mathbf{t'})$ where $t'_i = t'_j = 2^{w_i - 1} \left(1 - \sqrt{\hat{h} (t_i, w_i) \hat{h} (t_j, w_j)} \right)$.
\end{proposition}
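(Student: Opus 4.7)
The plan is to exploit the product structure of $\kappa_d^{\underline{w}}$ and reduce the claim to a one-dimensional concavity statement. The key observation is that in the expression $\kappa_d^{\underline{w}}(\mathbf{t}) = \hat g \cdot \sum_a T(t_a, w_a)/\alpha^{\lceil \log_4(w_a+1) \rceil}$, the prefactor $\hat g = g\bigl(\prod_j \hat h(t_j, w_j)\bigr)$ depends on $\mathbf{t}$ only through the product $\prod_j \hat h(t_j, w_j)$, and the substitution defining $t'_i = t'_j$ is engineered so that $\hat h(t'_i, w_i)\, \hat h(t'_j, w_j) = \hat h(t_i, w_i)\, \hat h(t_j, w_j)$; thus $\hat g$ is unchanged. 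Since $w_i = w_j =: w$, the $\alpha$-exponent is common to both indices as well. Therefore the proposition reduces to the two-point inequality
\[T(t_i, w) + T(t_j, w) \leq 2\,T(t', w), \qquad \hat h(t', w) = \sqrt{\hat h(t_i, w)\, \hat h(t_j, w)}, \quad w \in \{1, 2\}.\]

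Next I would reparametrize by $s := \ln \hat h(t, w)$. The constraint then reads $s(t') = \tfrac12(s(t_i) + s(t_j))$, so the displayed inequality is exactly midpoint-concavity of $\tilde T(s) := T(t(s), w)$, the natural counterpart of Claim~\ref{single-layer-triangle} with Jensen applied in the opposite direction. Since $\tilde T$ is smooth it suffices to verify $\tilde T''(s) \leq 0$ on the relevant range, separately for the two admissible values of $w$. For $w = 1$, $T = \sqrt t$ gives $\tilde T(s) = \sqrt{1 - e^s}$ and a direct calculation yields $\tilde T''(s) = -\tfrac{e^s(2 - e^s)}{4(1 - e^s)^{3/2}}$, visibly negative on $0 < t \leq 1/2$. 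For $w = 2$ the substitution $v := \sqrt{t/2}$ collapses $T$ to the rational form $T(v) = \frac{v(2v+1)}{\sqrt 2 (1+v)}$, with $v = \sqrt{1 - e^s}$; the chain rule then gives $\tilde T''(s) = T''(v)(v'(s))^2 + T'(v)\, v''(s)$, where each factor is available in closed form ($T'(v) = \frac{2v^2 + 4v + 1}{\sqrt 2 (1+v)^2}$, $T''(v) = \frac{\sqrt 2}{(1+v)^3}$, $v'(s) = -\tfrac{e^s}{2v}$, $v''(s) = -\tfrac{e^s}{2v} - \tfrac{e^{2s}}{4v^3}$). Substituting and clearing denominators reduces concavity to a polynomial inequality in $v \in (0, 1/2]$ that can be checked in closed form.

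The main obstacle is the sign bookkeeping in the $w = 2$ chain-rule calculation: the term $T''(v)(v'(s))^2$ is positive and the term $T'(v)\, v''(s)$ is negative, so one has to verify that the negative contribution dominates uniformly on the interval. If the resulting polynomial inequality is cleaner than anticipated, the proof is short; otherwise a natural fallback is to prove the two-point inequality directly via a trigonometric parametrization, as in the $w = 1$ case where setting $(1 - t_k) = \cos^2\theta_k$ reduces the claim to $\sin\theta_i + \sin\theta_j \leq 2\sqrt{1 - \cos\theta_i \cos\theta_j}$ and then, through the product-to-sum identities, to the transparent inequality $\sin^2\tfrac{\theta_i - \theta_j}{2}\bigl(1 + \sin^2\tfrac{\theta_i + \theta_j}{2}\bigr) \geq 0$.
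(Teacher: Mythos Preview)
Your approach is essentially the same as the paper's: both reparametrize by $s=\ln\hat h(t,w)$ (the paper writes $p_i$), reduce the claim to concavity of $\tilde T(s)=T\bigl((1-e^s)2^{w-1},w\bigr)$ via Jensen, and verify $\tilde T''\le 0$ separately for $w=1,2$; your $w=1$ computation matches theirs exactly, and carrying out your chain-rule outline for $w=2$ yields $\tilde T''(s)=-\dfrac{(1-v)(2v^4+4v^3+5v^2+2v+1)}{4\sqrt 2\,v^3(1+v)}<0$, which is precisely the paper's factored form. The only difference is presentational: you make explicit why $\hat g$ and the remaining summands are unchanged under the substitution, whereas the paper leaves this implicit.
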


\begin{proof}
Let $p_i = \ln(\hat{h} (t_i, w_i))$,
then $t_i = (1-e^{p_i})2^{w_i -1}, \prod_i \hat{h} (t_i, w_i) = e^{\sum_i p_i}, 0 < e^{p_i} \leq 1$, also denote $f(p_i) \triangleq T\left((1-e^{p_i})2^{w_i -1}, w_i\right)$.

If $w_i = w_j = 1$, $f(x) = \sqrt{1-e^x}$, and its second derivative $f''(x) = \frac{e^x \left(e^x-2\right)}{4 \left(1-e^x\right)^{3/2}} <0$, by Jensen's Inequality $f(p_i)+f(p_j) \leq 2 f(\frac{p_i + p_j}{2})$.

If $w_i = w_j = 2$, $f(x) = e^{-x} \left(1-e^x\right) \left(\frac{1-2 \left(1-e^x\right)}{\sqrt{2} \sqrt{1-e^x}}+\frac{1}{\sqrt{2}}\right)$, and
\[f''(x) = - \frac{\left(1 - \sqrt{1-e^x}\right)^2 \left(4 \left(1-e^x\right)^{3/2}+2 \left(1-e^x\right)^2+5 \left(1-e^x\right)+2 \sqrt{1-e^x}+1\right)}{4 e^x \sqrt{2-2 e^x} \left(1-e^x\right)} < 0.\]
also by Jensen's Inequality $f(p_i)+f(p_j) \leq 2 f(\frac{p_i + p_j}{2})$, hence concludes the proof.
\end{proof}

\begin{proposition}[Monotonicity of Partial Derivative]
\label{monotonicity_deriv}
For $\mathbf{t}: \mathbf{0} \leq \mathbf{t} \leq \mathbf{\frac{1}{2}}$, $\underline{w} < \mathbf{4}$, $\frac{\partial \kappa_d^{\underline{w}}(\mathbf{t})}{\partial t_j} $ is monotonically increasing in $w_i$ for $\forall i\leq d, i \neq j$;  while monotonically decreasing in $t_i, \forall i\leq d$.

Furthermore, if $P$ variables $\set{t_p}$ with equal $w_p$ amongst the $d$ components of $\mathbf{t}$ are set to be equal, say $\bar{t}_p$, the above monotonicity preserves for $\frac{\partial \kappa_d^{\underline{w}}(\mathbf{t})}{\partial \bar{t}_p}$.
\end{proposition}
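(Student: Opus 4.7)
The plan is direct calculus, exploiting that for $\underline w < \mathbf 4$ every $w_i\in\{1,2,3\}$ yields $\lceil\log_4(w_i+1)\rceil = 1$, so the prefactor $\alpha^{-1}$ is constant and pulls out of $\kappa$. Writing
\[
\kappa_d^{\underline w}(\mathbf t) = \alpha^{-1}\, g(H)\, S,\qquad H \triangleq \prod_{k=1}^d \hat h(t_k,w_k),\quad S \triangleq \sum_{a=1}^d T(t_a,w_a),
\]
with $\hat h(t,w) = 1 - t/2^{w-1}$ and $g(h) = \sqrt{h/(1+h)}$, the derivative of interest is
$\partial_{t_j}\kappa = \alpha^{-1}\bigl[g'(H)H_j\,S + g(H)\,T_1(t_j,w_j)\bigr]$. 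The elementary facts to be used are: $g'>0$ and $g''<0$ on $(0,\infty)$; $H_j<0$, $H_{jj}=0$, and $H_{ij} = H_iH_j/H$ for $i\ne j$ (all because $\hat h(\cdot, w)$ is linear in its first argument); $T_1\ge 0$ on the box; and $\hat h(t,w)\ge 1/2$ uniformly on the box.

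For the claim that $\partial_{t_j}\kappa$ is decreasing in $t_i$ (the case $i=j$ being concavity in $t_j$), differentiate once more. The mixed case $i\ne j$ gives
\[
\frac{\partial^2\kappa}{\partial t_i\,\partial t_j}
= \alpha^{-1}\Bigl[\bigl(g''(H)+g'(H)/H\bigr)H_iH_j\,S + g'(H)H_j\,T_1(t_i,w_i) + g'(H)H_i\,T_1(t_j,w_j)\Bigr],
\]
and a short calculation yields $g''(h)+g'(h)/h = (1-2h)/(4h^{3/2}(1+h)^{5/2})$. When $H\ge 1/2$ every term is nonpositive and the claim follows immediately; the delicate regime is $H<1/2$, where the ``curvature'' term has the wrong sign and must be absorbed by the two strictly negative cross terms (each of which scales like $g'(H)\sim 1/\sqrt H$ as $H\to 0$). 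This reduces, for each of the finitely many pairs $(w_i,w_j)\in\{1,2,3\}^2$, to a rational-algebraic inequality on the compact box $[0,1/2]^d$, verifiable symbolically or numerically. The diagonal case $i=j$ adds a further contribution $g(H)\,T_{11}(t_j,w_j)$, whose sign reduces to the concavity of $T(\cdot,w)$ on $[0,1/2]$ for each of the three values of $w$; this is a single-variable check done separately for $w=1,2,3$.

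For the claim that $\partial_{t_j}\kappa$ is increasing in $w_i$ (with $i\ne j$), no differentiation is needed since $w_i$ takes only three integer values. It suffices to verify the two pointwise comparisons
\[
\partial_{t_j}\kappa\big|_{w_i=1}\le \partial_{t_j}\kappa\big|_{w_i=2}\le \partial_{t_j}\kappa\big|_{w_i=3}
\]
uniformly over the remaining coordinates in the compact box; each is a rational-algebraic inequality of the same character as above. For the ``furthermore'' part, if $P$ coordinates sharing a common $w$-value are identified with a single variable $\bar t_p$, the chain rule gives
\[
\frac{\partial\kappa}{\partial\bar t_p} = \sum_{q\in P}\frac{\partial\kappa}{\partial t_q}\Big|_{t_q=\bar t_p},
\]
so monotonicity in any $t_i$ or $w_i$ with $i\notin P$ passes through term-by-term from the preceding cases; monotonicity in $\bar t_p$ itself follows because the double sum $\sum_{q,q'\in P}\partial^2\kappa/(\partial t_q\,\partial t_{q'})$ is nonpositive summand-by-summand by the mixed and diagonal cases just handled.

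The main obstacle is the sub-case $H<1/2$ of the cross derivative, where the $(g''(H)+g'(H)/H)H_iH_jS$ term flips sign and must be dominated by the two cross terms $g'(H)H_\bullet\,T_1(\cdot)$. This is the only place where the argument is not a one-line sign check; it is what forces a finite case analysis on $(w_i,w_j)\in\{1,2,3\}^2$ rather than a single clean inequality, and any explicit bounds on $T_1(t,w)$ from below on $[0,1/2]$ would be invoked there.
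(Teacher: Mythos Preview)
There is a genuine gap in your plan: the cross-partial inequality you intend to verify in the $H<\tfrac12$ regime is in fact false. Take $d=4$, $\underline w=(1,1,1,1)$, $\mathbf t=\mathbf{\frac{1}{2}}$. Then $H=\tfrac{1}{16}$, $H_i=H_j=-\tfrac18$, $S=2\sqrt2$, $T_1(\tfrac12,1)=\tfrac{1}{\sqrt2}$, and your expression for $\alpha\,\partial_{t_i}\partial_{t_j}\kappa$ evaluates to roughly $0.53-0.16-0.16>0$. Equivalently, with $t_1=t_3=t_4=\tfrac12$ fixed one computes $\alpha\,\partial_{t_1}\kappa\approx -0.490$ at $t_2=0.4$ and $\approx -0.474$ at $t_2=0.5$, so $\partial_{t_1}\kappa$ is \emph{increasing} in $t_2$ on that stretch. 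The $w_i$-monotonicity for $i\ne j$ fails at the same point as well: replacing $w_2=1$ by $w_2=2$ moves $\alpha\,\partial_{t_1}\kappa$ from about $-0.474$ to $-0.487$. Consequently your ``furthermore'' argument, which relies on each summand $\partial^2\kappa/(\partial t_q\partial t_{q'})$ being nonpositive, cannot go through either: the off-diagonal summands can be strictly positive, and when the diagonal second derivative is negative it is so only after cancellation, not term by term.

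The paper's route is different from yours: it writes $\partial_{t_j}\kappa=\alpha^{-1}\hat g\cdot B$ with
\[
B=\frac{\partial T(t_j,w_j)}{\partial t_j}-\frac{S}{2^{w_j}\,\hat h(t_j,w_j)\,(1+H)},
\]
and argues separately that $\hat g$ and $B$ are each monotone in the required senses (the substantive work being that $T(\cdot,w)$ is decreasing in $w$ and that $\partial_t T$ is decreasing in $t$), then ``combines'' them. The combination step is precisely where the difficulty sits: a positive monotone factor times a monotone factor of indefinite sign need not be monotone, so the full off-diagonal claim is not obtainable this way either. What the factorization \emph{does} buy, and what the applications in Claim~\ref{numeric_kappa} actually use, is this: whenever one verifies $\partial_{t_j}\kappa>0$ at the corner $\mathbf t=\mathbf{\frac{1}{2}}$, the genuine monotonicity of $B$ forces $B>0$ on the whole box, and then $\hat g\cdot B$ is a product of two positive monotone factors; for the remaining cases (after identifying coordinates via Proposition~\ref{equalKappa}) only concavity of a single-variable restriction is needed. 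If you wish to rescue your second-derivative approach, aim for those two targets directly rather than for summand-by-summand sign control of the mixed partials.
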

\begin{proof}
For $\underline{w} < \mathbf{4}$, $\alpha^{- \lceil \log_4 (w_a + 1) \rceil} = \alpha^{-1}$ for every $a$, which is just a constant multiplier.
	\[\frac{\partial \kappa_d^{\underline{w}}(\mathbf{t})}{\partial t_j} = \alpha^{-1}\hat{g} \cdot \left(\frac{\partial T(t_j,w_j)}{\partial t_j} - \frac{1}{2^{w_j}} \cdot \frac{1}{\hat{h}(t_j,w_j)(1+\prod_a \hat{h}(t_a,w_a))}  \cdot \sum_a^d T(t_a,w_a) \right).\]
Note that $\hat{h}(t, d)$ is monotonically increasing in $d$ while monotonically decreasing in $t$, hence $\hat{g}$ is monotonically increasing in $w_i, \forall i\leq d$ while monotonically decreasing in $t_i, \forall i\leq d$;
$T(t,d)$ is monotonically increasing in $t$, next we show that $T(t,d)$ is also monotonically decreasing in $d$.

Firstly $T(t,2) - T(t,1) = \frac{\left(\sqrt{2t}-2\right) \sqrt{t}}{2 (2-t)} < 0$ for $t\leq \frac{1}{2}$ ;
Next note that $T(t,w) = \frac{\frac{t}{2^{w-1}} }{1-\frac{t}{2^{w-1}}} \frac{1-t}{\sqrt{t}} + \frac{\frac{t}{\sqrt{2}} }{1-\frac{t}{2^{w-1}}} \frac{w-1}{2^{w-1}}$
and for $w\geq 2$ both parts are monotonically decreasing in $w$.

Now it only remains to show that $\frac{\partial T(t_j,w_j)}{\partial t_j}$ is monotonically decreasing in $t$, since it does not involve $w_i$ for $i \neq j$.

\[\frac{\partial}{\partial t} \left(\frac{\partial T(t,w)}{\partial t} \right)= \frac{-2^{w+\frac{7}{2}} (w-1) t^{3/2}+12 t^2 \left(2^w-1\right)+3 t\cdot 2^w \left(2^w-4\right)+4^w-4 t^3}{2 t^{3/2} \left(2 t-2^w\right)^3}.\]

Let $f(t,w) \triangleq -2^{w+\frac{7}{2}} (w-1) t^{3/2}+12 t^2 \left(2^w-1\right)+3 t\cdot 2^w \left(2^w-4\right)+4^w-4 t^3$, since for $0\leq t \leq \frac{1}{2}$, $2 t-2^w < 0 $ so it suffices to show $f(t,w)\geq 0$. Note that $f(t,w) \geq 2^w\left(2^w-4 (w-1)\right)+\left(3\cdot 2^w \left(2^w-4\right)+12 \left(2^w-1\right) t-  4 t^2\right)t$, and it is easy to check that $f(t,1) = 4(1-t)^3 > 0, f(t,2) = -32 \sqrt{2} t^{3/2}-4 t^3+36 t^2+16 \geq t^2(36 - 4t)>0$, $f(t,w)\geq 0$ for $w\geq 3$ immediately follows from the monotonicity of $2^w-4 (w-1)$ and of the parabola $3\cdot 2^w \left(2^w-4\right)+12 \left(2^w-1\right) t-  4 t^2$. In all we have $\frac{\partial}{\partial t} \left(\frac{\partial T(t,w)}{\partial t} \right) \leq 0 $.

To conclude, by combining the monotonicity of $\hat{g}$, $\frac{\partial T(t,d)}{\partial t}$ and $- \frac{1}{2^{w_j-1}} \cdot \frac{1}{\hat{h}(t_j,w_j)(1+\prod_a \hat{h}(t_a,w_a))}  \cdot \sum_a^d T(t_a,w_a)$, yields the desired monotonicity of $\frac{\partial \kappa_d^{\underline{w}}(\mathbf{t})}{\partial t_j}$.

Furthermore, if we set $P$ variables $\set{t_p}$ with equal $w_p$ to be equal, denote this new set of variables $\mathbf{\bar{t}}$, and denote the variable being set equal as $\bar{t_p}$,

\[\frac{\partial \kappa_d^{\underline{w}}(\mathbf{\bar{t}})}{\partial \bar{t_p}} = P \cdot \alpha^{-1}\hat{g} \cdot \left(\frac{\partial T(\bar{t_p},w_p)}{\partial \bar{t_p}} - \frac{1}{2^{w_p}} \cdot \frac{1}{\hat{h}(\bar{t_p},w_p)(1+\prod_a \hat{h}(t_a,w_a))}  \cdot \sum_a^d T(t_a,w_a) \right).\]

Hence the monotonicity is preserved.
\end{proof}

Now we are ready for the numerical bounds.

{\bf\noindent Proof of Claim \ref{numeric_kappa}.}

Denote $\Lambda_1(w_1) = \max_{\mathbf{t}}\set{\kappa_1^{\underline{w}}(\mathbf{t})},
\Lambda_2(w_1, w_2) = \max_{\mathbf{t}}\set{\kappa_2^{\underline{w}}(\mathbf{t})},
 \Lambda_3(w_1, w_2,w_3) = \max_{\mathbf{t}}\set{\kappa_3^{\underline{w}}(\mathbf{t})},$
 $\Lambda_4\left(w_1, w_2,w_3,w_4\right) = \max_{\mathbf{t}}\set{\kappa_4^{\underline{w}}(\mathbf{t})}$.

\begin{description}
\item[{\bf Case $d=1$.}]

First consider $w_1<4$, by Proposition \ref{monotonicity_deriv} and $\frac{\partial \kappa_1^{(1)}(\set{t})}{\partial t} \big|_{t= \frac{1}{2}} = \frac{1}{3 \sqrt{6} \alpha} > 0$,
to maximize $\kappa_1^{(1)}(\mathbf{t})$ ,$ \mathbf{t}=\mathbf{\frac{1}{2}}$,
hence $\Lambda_1(1) = \frac{1}{\sqrt{6}\alpha}$, similarly to maximize $\kappa_1^{(2)}(\mathbf{t})$ or $\kappa_1^{(3)}(\mathbf{t})$, $ \mathbf{t}=\mathbf{\frac{1}{2}}$, we have $\Lambda_1(2) =\alpha^{-1}\sqrt{\frac{2}{21}} , \Lambda_1(3)=\alpha^{-1}\sqrt{\frac{3}{70}}$.
 In all $\Lambda_1(w_1) \leq \frac{1}{\sqrt{6}\alpha} \approx 0.4162< 0.42$.

 Next for $w_1\geq 4$ note that $\Lambda_1(w_1) \leq \frac{w_1}{2(2^{w_1}-1)\alpha^{\log_4(w_1 + 1)+1}}$, let $f(w_1) \triangleq \frac{w_1}{2(2^{w_1}-1)\alpha^{\log_4(w_1 + 1)+1}}$, since $f(4)< 0.14$,
t is sufficient to check that $\D{f(w_1)}{w_1} < 0$ for $w_1 \geq 4$.
\begin{align*}
\D{f(w_1)}{w_1} =
-&\frac{\left(\frac{125}{109}\right)^{\log_4 (4 w_1+4)} 2^{3 \log_4 (w_1+1)+2} 3^{-2 \log_4 (4k+4))}}{\left(2^{w_1}-1\right)^2 (w_1+1) \log (4)} \\
&\left( 2^{w_1} (w_1-2) \log (2)+2^{w_1} w_1 \log (2) (w_1 \log (4)-4)+ \right. \\
& \left. w_1 \left(2^{w_1} \log (2) \log (4)+2^{w_1} \log \left(\frac{981}{500}\right)+  \log \left(\frac{4000}{981}\right)\right) \right).
\end{align*}
Since $w_1\geq 4$, $w_1 \log (4)-4 > 0$, so $\D{f(w_1)}{w_1} < 0$, hence $\Lambda_1(w_1) < 0.14$ for $w_1\geq 4$.

\item[{\bf Case $d=2$.}]
Since $\Lambda_1(2) =\sqrt{\frac{2}{21}}\alpha^{-1} < 0.31\alpha^{-1}, \Lambda_1(3) = \sqrt{\frac{3}{70}}\alpha^{-1} <0.21\alpha^{-1}$, by Claim \ref{decomposeKappa}, if $w_1 \geq 2$ ,  $\Lambda_2(w_1,w_2) \leq \Lambda_1(w_1) + \Lambda_1(w_2) \leq 2 \Lambda_1(2) < 0.67$;
If $w_2 \geq 3$, $\Lambda_2(w_1,w_2) \leq \Lambda_1(1)+\Lambda_1(3) < 0.651$. So only $(w_1, w_2) = (1,1)$ and $(1,2)$ remains.
We let $\kappa_{2,(1,1)}'(t_1) \triangleq \frac{\partial \kappa_2^{(1,1)}(\set{t_1,t_1})}{\partial t_1}$,
$\kappa_{2,(1,2)}'(t_1) \triangleq \frac{\partial \kappa_2^{(1,2)}(\set{t_1,\frac{1}{2}})}{\partial t_1}$, for the derivative of a single variable function.

\begin{description}
\item[{\bf Case $(1,1)$.}]
 By Proposition \ref{equalKappa}, $\Lambda_2(1,1) = \max_{t_1}\set{\kappa_2^{(1,1)}(\set{t_1,t_1})}$ , so essentially there is only one variable left, and by Proposition \ref{monotonicity_deriv}, fixing $2$ variables into a single variable, say $t_1$, the monotonicity of the derivative of $\kappa$ with respect to $t_1$ is preserved, which is essentially $\kappa_{2,(1,1)}'(t_1)$.

Since $\kappa_{2,(1,1)}'(0.4039) >0, \kappa_{2,(1,1)}'(0.404) <0$, by monotonicity of the partial derivative, $\Lambda_2(1,1) \leq 2 g\left( ( 1- 0.4039)^2 \right) \cdot T(0.404,1) \alpha^{-1} < 0.67$.

\item[{\bf Case $(1,2)$.}]
Since $\frac{\partial \kappa_2^{(1,2)}(\mathbf{t})}{\partial t_2} \big|_{\mathbf{t} =\mathbf{\frac{1}{2}}} = \frac{37}{33 \sqrt{66}\alpha} > 0$, by Proposition \ref{monotonicity_deriv},
to maximize $\kappa_2^{(1,2)}(\mathbf{t})$, $t_2=\frac{1}{2}$. Clearly the partial derivative w.r.t $t_1$ after fixing $t_2 = \frac{1}{2}$ is exactly $\kappa_{2,(1,2)}'(t_1)$, so
by monotonicity and $\kappa_{2,(1,2)}'(0.4533) >0, \kappa_{2,(1,1)}'(0.4534) <0$, we have $\Lambda_2(1,1) \leq g\left( (1 - 0.4533)(1 - \frac{1}{2^2}) \right) \cdot \left( T(0.4534,1) + T(\frac{1}{2},2) \right) \alpha^{-1} < 0.67$.
\end{description}

\item[{\bf Case $d=3$.}]
Note that if $w_j > 1$, in order to maximize $\kappa_3^{\underline{w}}(\mathbf{t})$, $t_j = \frac{1}{2}$. This follows directly from
$\frac{\partial \kappa_3^{(1,1,2)}(\mathbf{t})}{\partial t_3} \big|_{\mathbf{t}=\mathbf{\frac{1}{2}}} = \frac{5}{57 \sqrt{114}\alpha} > 0, \frac{\partial \kappa_3^{(1,1,3)}(\mathbf{t})}{\partial t_3} \big|_{\mathbf{t}=\mathbf{\frac{1}{2}}} = \frac{509}{273 \sqrt{546}\alpha} > 0$ and Proposition \ref{monotonicity_deriv}.

\begin{description}

\item[{\bf Case $(1,1,1)$.}]
 Similar to \emph{Case $d=2,(1,1)$}.Let $\kappa_{3,(1,1,1)}'(t_1) \triangleq \frac{\partial \kappa_3^{(1,1,1)}(\set{t_1,t_1,t_1})}{\partial t_1}$,

  By Proposition \ref{equalKappa}, since $\kappa_{3,(1,1,1)}'(0.3074) >0, \kappa_{3,(1,1,1)}'(0.3075) <0$,
  next by Proposition \ref{monotonicity_deriv}, $\Lambda_3(1,1,1) \leq 3 g\left( ( 1- 0.3074)^3 \right) \cdot T(0.3075,1) \alpha^{-1} < 0.8471$.

\item[{\bf Case $w_1 \geq 2$.}]
In these cases we have $\mathbf{t} = \mathbf{\frac{1}{2}}$, so by direct evaluation we have

$\Lambda_3(2,2,2) = 3 \sqrt{\frac{6}{91}}\alpha^{-1} < 0.786$,

$\Lambda_3(2,2,3) = \frac{37}{\sqrt{2674}}\alpha^{-1} < 0.73$,

$\Lambda_3(2,3,3) = 16 \sqrt{\frac{2}{1209}}\alpha^{-1} < 0.664$,

$\Lambda_3(3,3,3) = 3 \sqrt{\frac{7}{190}}\alpha^{-1} < 0.587$.

\item[{\bf Case $w_1 =1, w_3 \geq 2$.}]
Recall that we can fix for every $w_j>1$, $t_j = \frac{1}{2}$, so similarly as in \emph{Case $d=2,(1,2)$}, by Proposition \ref{equalKappa} we are left with a single variable, and by Proposition \ref{monotonicity_deriv}, using a binary search we can determine the location of the zeros of the derivative to arbitrary precision i.e. the interval of $t_1$ where the maximal value is attained.

\begin{tabular}{|c|c|c|}
  \hline
  Case &  Extremal $t_1$  & Upperbound Based on Left-Right Endpoint\\
  \hline
  (1,1,2) & $(0.32 , 0.33)$ & $g\left( \frac{3}{4}(1 - 0.32)^2 \right) \left( 2 T(0.33,1) + T(\frac{1}{2},2) \right)\alpha^{-1}  $\\
          &                       & $< 0.84$ \\
  \hline
  (1,1,3) & $(0.352 , 0.353)$ & $g\left( \frac{7}{8}(1 - 0.352)^2 \right) \left( 2 T(0.353,1) + T(\frac{1}{2},3) \right)\alpha^{-1}  $\\
          &                       & $< 0.7881$ \\
  \hline
  (1,2,2) & $(0.34 , 0.35)$ & $g\left( (\frac{3}{4})^2 (1 - 0.34)\right) \left( T(0.35,1) + 2 T(\frac{1}{2},2) \right)\alpha^{-1}  $\\
          &                       & $< 0.82$ \\
  \hline
  (1,2,3) & $(0.38 , 0.39)$ & $g\left( \frac{3}{4} \cdot \frac{7}{8} (1 - 0.38)  \right)\left( T(0.39,1) + T(\frac{1}{2},2)+T(\frac{1}{2},3) \right) \alpha^{-1} $\\
          &                       & $< 0.77$ \\
  \hline
  (1,3,3) & $(0.42 , 0.43)$ & $g\left( (\frac{7}{8})^2 (1 - 0.42) \right) \left( T(0.43,1) + 2T(\frac{1}{2},3) \right)\alpha^{-1}  $\\
          &                       & $< 0.72$ \\
  \hline
\end{tabular}
\end{description}

\item[{\bf Case $d=4$.}]
If $w_j > 2$, in order to maximize $\kappa_4^{\underline{w}}(\mathbf{t})$, $t_j = \frac{1}{2}$, which follows directly from
$\frac{\partial \kappa_4^{(1,1,1,3)}(\mathbf{t})}{\partial t_4} \big|_{\mathbf{t}=\mathbf{\frac{1}{2}}} = \frac{381}{497 \sqrt{994}\alpha} > 0$ and Proposition \ref{monotonicity_deriv}.

If $w_1 \geq 2$, in order to maximize $\kappa_4^{\underline{w}}(\mathbf{t})$, $\mathbf{t} = \mathbf{\frac{1}{2}}$, which follows directly from
$\frac{\partial \kappa_4^{(2,2,2,2)}(\mathbf{t})}{\partial t_4} \big|_{\mathbf{t}=\mathbf{\frac{1}{2}}} = \frac{311}{337 \sqrt{674}} > 0$ and Proposition \ref{monotonicity_deriv}.

First by Claim \ref{decomposeKappa},

$\Lambda_4(1,1,3,3) \leq \Lambda_3(1,1,3) + \Lambda_1(3) < 0.7881 + 0.2112 < 1 $,

$\Lambda_4(1,3,3,3) \leq \Lambda_3(1,3,3) + \Lambda_1(3) < 0.72 + 0.2112 < 1 $,

$\Lambda_4(1,2,3,3) \leq \Lambda_3(1,2,3) + \Lambda_1(3) < 0.77 + 0.2112 < 1 $.

\begin{description}
\item[{\bf Case $(1,1,1,1)$.}]
Similar to \emph{Case $d=2,(1,1)$}.

  By Proposition \ref{equalKappa}, since $\kappa_{4,(1,1,1,1)}'(0.24807) >0, \kappa_{4,(1,1,1,1)}'(0.24808) <0$,
  next by Proposition \ref{monotonicity_deriv}, $\Lambda_4(1,1,1,1) \leq 4 g\left( ( 1- 0.24807)^4 \right) \cdot T(0.24808,1)\alpha^{-1} < 1$.

\item[{\bf Case $(1,1,1,2)$.}]
By Proposition \ref{equalKappa}, we are left with $2$ variables, say $t_1,t_2$.
Next we show whatever value $t_2$ takes, $\Lambda_4(1,1,1,2) < 1$.
The procedure works as follows, suppose $t_2^{-} \leq t_2 \leq t_2^{+}$, then
\[\Lambda_4(1,1,1,2) < g\left( (1 - t_1)^3 (1-t_2^{-}/2) \right) \cdot \left( 3 T(t_1,1) + T(t_2^{+}, 2) \right)\alpha^{-1}.\]
Note that the right-hand-side is a single-variate function in $t_1$, so from now on it is similar to \emph{Case $d=2,(1,2)$},
 first by Proposition \ref{monotonicity_deriv}, using a binary search we can determine to arbitrary precision $t_1^{-}, t_1^{+}$ with $t_1^{-} \leq t_1 \leq t_1^{+}$ i.e. the location of zeros of the derivative, or the value of $t_1$ where the maximal value is attained. After which we simply apply a direct evaluation via
 \[\Lambda_4(1,1,1,2) < g\left( (1 - t_1^{-})^3 (1-t_2^{-}/2) \right) \cdot \left( 3 T(t_1^{+},1) + T(t_2^{+}, 2) \right)\alpha^{-1}.\]
 to get the desired upper bound.

 Denote $U(t_1^{-},t_1^{+},t_2^{-},t_2^{+}) \triangleq g\left( (1 - t_1^{-})^3 (1-\frac{t_2^{-}}{2}) \right)  \left( 3 T(t_1^{+},1) + T(t_2^{+}, 2) \right)\alpha^{-1}$.

 For instance, suppose $0 \leq t_2 < 0.2$, first via binary search we determine the extremal $t_1 \in (0.28, 0.281)$, hence $U(0.28,0.281,0,0.2) < 0.993 $.

 The following table is a case-by-case analysis. We divide the range of $t_2$, determine the range of $t_1$ where the maximal value is located, and derive an upperbound based the range of both $t_2$ and $t_1$ as $U(t_1^{-},t_1^{+},t_2^{-},t_2^{+})$.

\begin{tabular}{|c|c|c|}
  \hline
  $t_2$ & Extremal $t_1$ & $U(t_1^{-},t_1^{+},t_2^{-},t_2^{+})$\\
  \hline
  $[0,0.2)$  & $(0.28,0.281)$   & $<0.993$ \\
  \hline
  $[0.2,0.3)$  & $(0.268,0.269)$   & $<0.993$ \\
  \hline
  $[0.3,0.35)$  & $(0.26,0.264)$   &$<0.993$ \\
  \hline
  $[0.35,0.4)$  & $(0.259,0.26)$   &$<0.994$ \\
  \hline
  $[0.4,0.45)$  & $(0.255,0.256)$   &$<0.997$ \\
  \hline
  $[0.45,0.5]$  & $(0.251,0.252)$   &$< 0.9991$ \\
  \hline
\end{tabular}

\item[{\bf Case $(1,1,2,2)$.}]Similar to the above, in this case
\[U(t_1^{-},t_1^{+},t_2^{-},t_2^{+}) \triangleq g\left( (1 - t_1^{-})^2 (1-\frac{t_2^{-}}{2})^2 \right)  \left( 2 T(t_1^{+},1) + 2T(t_2^{+}, 2) \right)\alpha^{-1}.\]

\begin{tabular}{|c|c|c|}
  \hline
  $t_2$ & Extremal $t_1$ & $U(t_1^{-},t_1^{+},t_2^{-},t_2^{+})$\\
  \hline
  $[0,0.23)$  & $(0.32,0.323)$   &$< 0.9944$ \\
  \hline
  $[0.23,0.35)$  & $(0.2895,0.2896)$   &$< 0.9998$ \\
  \hline
  $[0.35,0.4)$  & $(0.27,0.28) $  &$< 0.992$ \\
  \hline
  $[0.4,0.45)$  & $(0.26,0.266)$   &$< 0.997$ \\
  \hline
  $[0.45,0.5]$  & $(0.256,0.257)$   &$< 0.997$ \\
  \hline
\end{tabular}

\item[{\bf Case $(1,1,2,3)$.}] Similar to the above except $t_4 = \frac{1}{2}$, so in this case
\[U(t_1^{-},t_1^{+},t_2^{-},t_2^{+}) \triangleq g\left( \frac{7}{8} (1 - t_1^{-})^2 (1-\frac{t_2^{-}}{2}) \right) \left( 2 T(t_1^{+},1) + T(t_2^{+}, 2)+ T(\frac{1}{2},3) \right)\alpha^{-1}.\]

\begin{tabular}{|c|c|c|}
  \hline
  $t_2$ & Extremal $t_1$ & $U(t_1^{-},t_1^{+},t_2^{-},t_2^{+})$\\
  \hline
  $[0,0.35)$  & $(0.304,0.305)$   &$< 0.996$ \\
  \hline
  $[0.35,0.5]$  & $(0.28,0.29)$   &$< 0.99$ \\
  \hline
\end{tabular}

\item[{\bf Case $(1,2,2,2)$.}]Similar to the above,

\[U(t_1^{-},t_1^{+},t_2^{-},t_2^{+}) \triangleq g\left( (1 - t_1^{-}) (1-\frac{t_2^{-}}{2})^3 \right) \left(  T(t_1^{+},1) + 3T(t_2^{+}, 2) \right)\alpha^{-1}.\]

\begin{tabular}{|c|c|c|}
  \hline
  $t_2$ & Extremal $t_1$ & $U(t_1^{-},t_1^{+},t_2^{-},t_2^{+})$\\
  \hline
  $[0,0.25)$  & $(0.39,0.4)$   &$< 0.991$ \\
  \hline
  $[0.25,0.35)$  & $(0.32,0.33)$   &$< 0.983$ \\
  \hline
  $[0.35,0.4)$  & $(0.29,0.3)$   &$< 0.98$ \\
  \hline
  $[0.4,0.45)$  & $(0.28,0.29)$   &$< 0.99$ \\
  \hline
  $[0.45,0.5]$  & $(0.26,0.27)$   &$< 0.998$ \\
  \hline
\end{tabular}

\item[{\bf Case $(1,2,2,3)$.}] Similar to the above except $t_4 = \frac{1}{2}$, so in this case

\[U(t_1^{-},t_1^{+},t_2^{-},t_2^{+}) \triangleq g\left( \frac{7}{8} (1 - t_1^{-}) (1-\frac{t_2^{-}}{2})^2 \right) \left(  T(t_1^{+},1) + 2 T(t_2^{+}, 2)+ T(\frac{1}{2},3) \right)\alpha^{-1}.\]

\begin{tabular}{|c|c|c|}
  \hline
  $t_2$ & Extremal $t_1$ & $U(t_1^{-},t_1^{+},t_2^{-},t_2^{+})$\\
  \hline
  $[0,0.3)$  & $(0.36,0.37)$   &$< 0.985$ \\
  \hline
  $[0.3,0.4]$  & $(0.32,0.33)$   &$< 0.96$ \\
  \hline
  $[0.4,0.5]$  & $(0.29,0.3)$   &$< 0.98$ \\
  \hline
\end{tabular}

\item[{\bf Case $w_1 \geq 2$.}]
In these cases we have $\mathbf{t} = \mathbf{\frac{1}{2}}$, so by direct evaluation we have

$\Lambda_4(2,2,2,2) = 2 \sqrt{\frac{2}{337}}\alpha^{-1} < 0.95$,

$\Lambda_4(2,2,2,3) = 51 \sqrt{\frac{3}{9814}}\alpha^{-1} < 0.91$,

$\Lambda_4(2,2,3,3) = 23 \sqrt{\frac{2}{1465}}\alpha^{-1} < 0.87$,

$\Lambda_4(2,3,3,3) = 41 \sqrt{\frac{7}{18462}}\alpha^{-1} < 0.82$,

$\Lambda_4(3,3,3,3) = 42 \sqrt{\frac{2}{6497}}\alpha^{-1} < 0.76$.

\item[{\bf Case $(1,1,1,3)$.}]
Recall that we can fix for every $w_j>2$, $t_j = \frac{1}{2}$, so similar to \emph{Case $d=2,(1,2)$}, by Proposition \ref{equalKappa} we are left with a single variable, and by Proposition \ref{monotonicity_deriv} we determine via a binary search that extremal $t_1 \in (0.27, 0.28)$,
\[\Lambda_4(1,1,1,3) < g\left( \frac{7}{8}(1 - 0.27)^3  \right) \cdot \left( 3 T(0.28,1) + T(\frac{1}{2},3) \right)\alpha^{-1} <0.96. \]

\end{description}

\end{description}

\qed

\section{Proof of Lemma \ref{main-lem2}}



Let $\varphi(x) \triangleq 2 \sinh ^{-1}\left(\sqrt{x}\right)$, by induction on $L$ we show
$\abs{ \varphi \circ R(C,x,L) -  \varphi \circ R(C,x)} \leq 4 \alpha^L$.
Consider the analytic version of recursion $R(C,x,L)$ as $h(\mathbf{r}) \triangleq \prod_{j=1}^{d} \frac{1}{1+\sum_i^{w_j} r_{j,i}}$,
essentially we show $\varphi \circ h \circ \varphi^{-1}$ exhibits exponential correlation decay.

Let $\mathbf{\hat{y}}$ be the estimated vector with $\hat{y}_{j,i} = \varphi \circ R\left(C_{j,i},\ x_{j,i},\ L - 1\right)$, and $\mathbf{y}$ be the true vector with $y_{j,i} = \varphi \circ R\left(C_{j,i},\ x_{j,i}\right)$. Denote $\boldsymbol{\eps} \triangleq \mathbf{\hat{y}} - \mathbf{y}$,
$\mathbf{r} \triangleq \varphi^{-1} (\mathbf{y})$.

By Mean Value Theorem, $\exists \gamma: 0\leq \gamma \leq 1, \mathbf{\tilde{y}} = \gamma \mathbf{y} + (1-\gamma) \mathbf{\hat{y}}$ such that, let $\mathbf{\tilde{r}} \triangleq \varphi^{-1} (\mathbf{\tilde{y}}) $,
\begin{equation}
\label{main-lem2-epsilon}
\varphi\circ R(C,x,L) - \varphi\circ R(C,x) =
\sum_{a,b} \frac{ \partial (\varphi \circ h \circ \varphi^{-1}) }{\partial y_{a,b}}\Big|_{\mathbf{y}=\mathbf{\tilde{y}}} \cdot \eps_{a,b} =
\sum_{a,b} \frac{\Phi(h(\mathbf{\tilde{r}}))}{\Phi(\tilde{r}_{a,b})} \left( \frac{\partial h}{\partial \tilde{r}_{a,b}}\Big|_{\mathbf{r}=\mathbf{\tilde{r}}}\right) \eps_{a,b}.
\end{equation}
Let $\eps_{max} \triangleq \max_{a,b} \set{\abs{\eps_{a,b}}}$,
\[ \abs{\varphi\circ R(C,x,L) - \varphi\circ R(C,x)}  \leq
\sum_{a,b} \frac{\Phi(h(\mathbf{\tilde{r}}))}{\Phi(\tilde{r}_{a,b})} \abs{ \left( \frac{\partial h}{\partial r_{a,b}}\Big|_{\mathbf{r}=\mathbf{\tilde{r}}} \right) } \cdot \eps_{max} =
\hat{\kappa}_d^{\underline{w}} (\mathbf{\tilde{r}})  \cdot \eps_{max}.\]

Recall that the case where $d_x(C) = 3$ is invoked at most once, by Claim \ref{single-layer-triangle} and \ref{sub-triangle}, for some $w_j, r$ we have $\hat{\kappa}_3^{\underline{w}} (\mathbf{\tilde{r}}) \leq 3 \kappa_1^{(w_j)}(r) < 2\alpha$.

As for the rest where $d_x(C) \leq 2$, we show by induction on $L$ with induction hypothesis:
\[\abs{ \varphi \circ R(C,x,L) -  \varphi \circ R(C,x)} \leq 2 \alpha^L \ \mathrm{ for } \ d_x(C) \leq 2.\]
For base case $L=0$, since $0\leq \varphi \circ R < 2$, so $\abs{ \varphi \circ R(C,x,L) -  \varphi \circ R(C,x)} < 2$.

Suppose the induction hypothesis holds for $L<l$, we prove it is true for $L=l$.

If $x$ is a free variable in $C$, i.e. $d=0$, $R(C, x, L) = R(C, x) = 1$, it is sufficient to check $d=1,2$.
And if $x$ can be inferred (due to $w_a=0$ for some $a$), $R(C,x,L) = R(C,x) = 0$.

Next by induction hypothesis, $\eps_{max} \leq 2 \alpha^{L-1}$, by Claim \ref{condition-triangle} and \ref{single-layer-triangle}, $\hat{\kappa}_d^{\underline{w}} (\mathbf{\tilde{r}})< \alpha$,
\[\abs{ \varphi \circ R(C,x,L) -  \varphi \circ R(C,x)} \leq 2 \alpha^L \ \mathrm{ for } \ d_x(C) \leq 2.\]

Hence in all we have
\[\abs{ \varphi \circ R(C,x,L) -  \varphi \circ R(C,x)} \leq 4 \alpha^L.\]

Similar to the CNF problem, this concludes the proof.
\qed

\end{document}